\crefname{assumption}{Assumption}{Assumptions}
\crefname{proposition}{Proposition}{Propositions}
\crefname{lemma}{Lemma}{Lemmata}
\crefname{theorem}{Theorem}{Theorems}
\crefname{remark}{Remark}{Remarks}
\crefname{algocf}{Algorithm}{Algorithms}
\Crefname{ALC@unique}{Line}{Lines}
\let\blx@rerun@biber\relax
\DeclareMathOperator{\prox}{prox}
\DeclareMathOperator{\interior}{int}
\DeclareMathOperator{\sign}{sign}
\DeclareMathOperator*{\argmin}{arg\,min}
\DeclareMathOperator*{\argmax}{arg\,max}
\DeclareMathOperator{\TV}{TV}
\DeclareMathOperator{\domain}{dom}
\DeclareMathOperator{\KL}{KL}
\renewcommand{\epsilon}{\varepsilon}
\newcommand{\norm}[1]{\lVert #1  \rVert}
\newcommand{\inpro}[2]{\langle #1,#2 \rangle}
\newcommand{\bbN}{\mathbb{N}}
\newcommand{\bbR}{\mathbb{R}}
\newcommand{\bbE}{\mathbb{E}}
\newcommand{\calH}{\mathcal{H}}
\newcommand{\calP}{\mathcal{P}}
\newcommand{\calX}{\mathcal{X}}
\newcommand{\calY}{\mathcal{Y}}
\newcommand{\calG}{\mathcal{G}}
\newcommand{\calE}{\mathcal{E}}
\newcommand{\calW}{\mathcal{W}}
\newcommand{\calL}{\mathcal{L}}
\newcommand{\calD}{\mathcal{D}}
\newcommand{\rmderiv}[2]{\frac{\mathrm{d}#1}{\mathrm{d}#2}}
\newcommand{\rmd}{\,\mathrm{d}}
\newcommand{\rmN}{\,\mathrm{N}}
\newtheorem{theorem}{Theorem}[section]
\newtheorem{lemma}[theorem]{Lemma}
\newtheorem{remark}[theorem]{Remark}
\newtheorem{example}[theorem]{Example}
\newtheorem{definition}[theorem]{Definition}
\newtheorem{assumption}[theorem]{Assumption}
\numberwithin{equation}{section}
\title{Proximal Langevin Sampling With Inexact Proximal Mapping}
\author{Matthias J. Ehrhardt}
\author[2,3,*]{Lorenz Kuger}
\author[4]{Carola-Bibiane Sch\"onlieb}
\affil[1]{\small Department of Mathematical Sciences, University of Bath, Claverton Down, Bath BA2 7AY, United Kingdom.}
\affil[2]{\small Department Mathematik, Friedrich-Alexander-Universität Erlangen-Nürnberg, Cauerstr. 11, 91058 Erlangen, Germany}
\affil[3]{\small Helmholtz Imaging, Deutsches Elektronen-Synchrotron DESY, Notkestr. 85, 22607 Hamburg, Germany}
\affil[4]{\small Department of Applied Mathematics and Theoretical Physics, University of Cambridge, Wilberforce Road, Cambridge CB3 0WA, United Kingdom.}
\affil[*]{\small \href{mailto:lorenz.kuger@desy.de}{lorenz.kuger@desy.de}}
\begin{document}

\maketitle

\begin{abstract}
In order to solve tasks like uncertainty quantification or hypothesis tests in Bayesian imaging inverse problems, we often have to draw samples from the arising posterior distribution.
For the usually log-concave but high-dimensional posteriors, Markov chain Monte Carlo methods based on time discretizations of Langevin diffusion are a popular tool. If the potential defining the distribution is non-smooth, these discretizations are usually of an implicit form leading to Langevin sampling algorithms that require the evaluation of proximal operators. For some of the potentials relevant in imaging problems this is only possible approximately using an iterative scheme. We investigate the behaviour of a proximal Langevin algorithm under the presence of errors in the evaluation of proximal mappings. We generalize existing non-asymptotic and asymptotic convergence results of the exact algorithm to our inexact setting and quantify the bias between the target and the algorithm's stationary distribution due to the errors. We show that the additional bias stays bounded for bounded errors and converges to zero for decaying errors in a strongly convex setting. We apply the inexact algorithm to sample numerically from the posterior of typical imaging inverse problems in which we can only approximate the proximal operator by an iterative scheme and validate our theoretical convergence results.
\end{abstract}

\textbf{Keywords.} Langevin Sampling, Bayesian Computation, Proximal Mapping, Markov Chain Monte Carlo

\textbf{MSC codes.} 65C05, 65C40, 65C60, 65J22, 68U10

\section{Introduction}
In imaging sciences, the problem of estimating an image from acquired data is often ill-posed or ill-conditioned, resulting in uncertainty about the true solution. The formulation of such inverse problems in a Bayesian framework provides a range of tools that allow to describe possible solutions as a distribution, quantify uncertainty and perform further tasks like hypothesis tests. We consider problems where the posterior distribution of a Bayesian imaging inverse problem has a log-concave but non-smooth density function. For the computation of the maximum a posteriori (MAP) point estimate, this would usually mean that a proximal optimization algorithm can be employed to efficiently compute a solution. While convex analysis provides solid convergence theory in this case, theory on the problem of sampling from the posterior is not as well-developed yet. 

Markov chain Monte Carlo (MCMC) methods are useful for performing Bayesian inference in this context, by allowing for sampling from complex distributions. Compared to other classes of MCMC methods, sampling schemes based on Langevin diffusion processes have proved efficient in inverse problems and imaging applications \cite{Roberts1996,Durmus2022,Durmus2019}. Langevin diffusion based MCMC sampling algorithms are all discretisations of the same stochastic differential equation (SDE) which drives Langevin diffusion processes and whose invariant distribution is the target distribution from which we want to sample. In the discrete setting, there is usually a bias between the law of the samples and the target measure, hence many works have been concerned with the characterization of the invariant distribution of the Markov chain and the bias \cite{Roberts1996,Durmus2017,Wibisono2018,Dalalyan2019}. Further questions typically concern convergence speed of the Markov chain and possible correction steps to overcome the bias and draw unbiased samples from the target. We are interested in a Langevin sampling algorithm that is based on a forward-backward discretization of the potential term in the underlying SDE. The algorithm has been analyzed under the name proximal stochastic gradient Langevin algorithm (PSGLA) in \cite{Salim2020} and as stochastic proximal gradient Langevin dynamics in \cite{Durmus2019a}. It can be viewed as a sampling equivalent to proximal gradient descent algorithms (arising from a forward-backward discretization of gradient flows), with the difference that a stochastic term is added in every iteration step in the argument to the proximal mapping. 

In many imaging inverse problems, the proximal operator of some functional of interest, e.g.\ total variation (TV) \cite{Rudin1992,Burger2013} or total generalized variation (TGV) functionals \cite{Bredies2015}, has no closed form and has to be approximated by some iterative procedure. While in the optimization setting, there have been works proving the convergence of algorithms under the assumption that proximal operators are evaluated only inexactly \cite{Alber1997,Salzo2012,Villa2013,Rasch2020}, corresponding sampling algorithms lack such theory. We consider a generalization of PSGLA in which proximal points are evaluated only inexactly up to some accuracy level. Our analysis uses techniques from convex optimization by exploiting the close relationship between sampling tasks and optimization in the corresponding space of measures.

The rest of the paper is organised as follows. \Cref{sec:setup} first introduces the problem and the relevant existing Langevin algorithms that are based on the evaluation of proximal operators. We then define the considered notion of inexactness in the proximal mappings and give our algorithm. The convergence theory is carried out in \cref{sec:convergenceTheory} by proving nonasymptotic and asymptotic convergence results. We provide numerical examples in \cref{sec:numerics}.

\subsection{Related Work}
The idea to use techniques from convex analysis to study Langevin sampling algorithms goes back to the seminal work \cite{Jordan1998}, by which Langevin dynamics corresponds to the gradient flow of relative entropy with respect to the target in the space of probability measures endowed with the Wasserstein metric. By using a coupling argument, typical bounds on some distance measure of optimization iterates along gradient flows in Euclidean space can be translated to the related sampling algorithm \cite{Durmus2019a,Vempala2019,Salim2020}. In \cite{Wibisono2018}, the author explored this correspondence further and gave explanations for the unavoidable bias of Langevin sampling algorithms.

Several works have proposed Langevin algorithms that involve the evaluation of proximal operators. In \cite{Pereyra2016,Durmus2018}, the authors approximate non-smooth potential terms by their smooth Moreau-Yosida regularization and apply the standard unadjusted Langevin algorithm (ULA) for smooth potentials to the regularized target. \cite{Bernton2018} overcomes the problem of non-smooth potentials by discretising the underlying SDE implicitly, leading to a sampling equivalent of proximal gradient descent. For a special case of the Moreau-Yosida smoothing parameter, the method in \cite{Pereyra2016} also falls in this class. As a generalization of this idea, splittings of the potential into smooth and nonsmooth terms allow to combine explicit and implicit discretizations. This results in a forward-backward type sampling method analyzed in \cite{Salim2020}. In another recent work, improved convergence behaviour using an implicit midpoint discretization of Langevin dynamics has been suggested, which also results in the evaluation of proximal mappings at each iteration \cite{Klatzer2023}.

Up to our knowledge, inexact evaluations of proximal operators with deterministic errors in Langevin sampling algorithms have not been considered yet in the literature. However, several works have been concerned with cases where stochastic estimators of potential terms are available and cases where the potential is only known inexactly in some other way. \cite{Huggins2017} analysed the continuous-time dynamics of the underlying SDE and a specific first-order approximation to computationally costly gradients in ULA. In \cite{Dalalyan2019}, the analysis of ULA was extended to versions with inexact potential gradients, where the estimators have bounded bias and variance. Stochastic but unbiased estimators of the gradient with bounded variance were considered in \cite{Salim2020,Durmus2019a} for ULA and a proximal Langevin algorithm respectively. In a recent work, \cite{Wibisono2022} proved convergence of the inexact continuous-time dynamics and ULA under the assumption of a bounded moment generating function of the gradient estimator for applications in score-based generative models. Another work that is related to our approach is \cite{Salim2019}, where the authors assume that the nonsmooth potential term allows a splitting and analyze the case of stochastic estimators of the proximal mapping.

\subsection{Contributions}
Using the notion of inexact evaluation of proximal operators that has been considered in \cite{Alber1997,Villa2013,Salzo2012,Rasch2020}, we generalize PSGLA to allow inaccurate proximal points. We show that the convergence analysis of PSGLA carried out in \cite{Salim2020} can be recovered in the inexact setting. Particularly, we recover the same convergence rates in the exact case and quantify the additional bias between the algorithm's stationary distribution and the target due to errors in the inexact case. We show that the additional bias stays bounded for a bounded sequence of errors and, if the errors go to zero in a strongly convex setting, decays to zero during the iteration as fast as the mean of the errors. Numerically, we test the algorithm by sampling from the posterior distribution of typical imaging inverse problems. In cases where the proximal operator has a closed form this allows us to compare the algorithm's exact and inexact version, and in cases where proximal points can only be approximated it showcases how proximal Langevin sampling can efficiently be applied to such imaging problems.
\section{A Langevin Sampling Algorithm with Inexact Proximal Points}\label{sec:setup}

\subsection{Problem Formulation}\label{subsec:problemformulation}
The method we analyze can be applied to the general problem of sampling from a distribution $\mu^\ast$ in finite dimensions with unknown density normalization constant. We will denote the density function by $\mu^\ast(x)$ and assume that it is of the form
\begin{equation}\label{eq:bayes_law}
	\mu^\ast(x) = \frac{1}{Z} \exp(-F(x)-G(x)),
\end{equation}
with two functions $F,G: \calX \to \bbR$, where $\calX := \bbR^d$. The sum $F+G$ is called the potential, and the normalization constant $Z$ is typically hard to compute and hence unknown.

\begin{example}
    The problem setting contains a general class of ill-posed imaging inverse problems that are modelled in the framework of Bayesian statistics. Denoting $x \in \calX = \bbR^d$ and $y \in \bbR^m$ for the observation, assume that $x,y$ have joint density $p(x,y)$ with respect to the Lebesgue measure. The acquisition process of $y$ is described by the likelihood $p(y|x)$. This is combined with a prior distribution with density $p(x)$ representing structural knowledge or assumptions about the true solution. Given an observation $y$, the central subject of analysis is the posterior distribution given by
    $$ \mu^\ast(x) = p(x|y) = \frac{p(y|x) p(x)}{Z}, $$
    with the model evidence $Z = \int p(y|\tilde{x})p(\tilde{x})\rmd \tilde{x}$. The potential terms $F$ and $G$ are chosen such that $F(x) + G(x) =  -\log p(y|x) - \log p(x) $.
\end{example}

The potential terms $F$ and $G$ are further assumed to satisfy the following criteria.

\begin{assumption}\label{assumption1}
	$F:\calX\to\bbR$ is $\lambda_F$-strongly convex, $\lambda_F \ge 0$ (which in the case $\lambda_F = 0$ means only convex) and differentiable, where $\nabla F$ is $L$-Lipschitz continuous.\\
	$G:\calX\to\overline{\bbR}:=\bbR\cup\{\infty\}$\footnote{Throughout the paper, we use the convention $\exp(-\infty)=0$.} is proper, convex and lower semicontinuous (lsc). Further, its convex conjugate, denoted by $G^\ast$, is $\lambda_{G^\ast}$-strongly convex, $\lambda_{G^\ast} \ge 0$. We allow $\lambda_{G^\ast} = 0$, in which case this is no further assumption so that $G$ is not necessarily differentiable and can take value $\infty$.
\end{assumption}

\begin{assumption}\label{assumption2}
	The function $V = F+G$ satisfies $\int \exp(-V(x)) \,\mathrm{d}x < \infty$ and $\exp(-V) \in S_{\textrm{loc}}^{1,1}(\calX)$.
\end{assumption}

Due to the first part of \cref{assumption2}, $\mu^\ast$ is a valid probability distribution on $\calX$. Since its density is log-concave by \cref{assumption1}, all its moments are finite (see \cite{Brazitikos2014}, Lemma 2.2.1), in particular $\mu^\ast$ has finite second moment, denoted $\mu^\ast \in \calP_2(\calX)$. In \cite{Salim2020}, it is shown that $F$, $G$ and $V$ are $\mu^\ast$-a.e. differentiable under \cref{assumption1,assumption2} and that the target density satisfies a first-order optimality condition on which the analysis of their algorithm is based. We further require $\nabla G$ to be an $L^2$ function w.r.t.\ $\mu^\ast$:

\begin{assumption}\label{assumption3}
	It holds $\int_{\interior(\domain(G))} \norm{\nabla G(x)}^2 \,\mathrm{d}\mu^\ast(x) < \infty$.
\end{assumption}

Due to \cref{assumption1} we only consider cases where the prior and the likelihood are both log-concave, which covers many relevant noise models and typical imaging priors. In the log-concave case the computation of the maximum a posteriori (MAP) point estimate
$$ x_{\mathrm{MAP}} := \argmax_x \mu^\ast(x) = \argmin_x \left\{ F(x) + G(x) \right\} $$ 
can be carried out efficiently and with well-developed theory on convergence behaviour using convex optimization algorithms \cite{Chambolle2016,Combettes2011}. However, for more advanced statistical tasks in which the point estimate $x_{\mathrm{MAP}}$ can not provide sufficient information about the posterior, it is often necessary to draw representative samples from the distribution $\mu^\ast$.

In imaging inverse problems with regularizers that enforce some kind of sparsity, while the log-likelihood might be smooth, the prior log-density is often not differentiable w.r.t.\ the Lebesgue measure, so that $F(x) =  -\log p(y|x)$ and $G(x) = -\log p(x)$ is the logical choice. We mention here though that the terms in the posterior log-density can be split up differently into a smooth term $F$ and a potentially non-smooth term $G$ as long as $F$ and $G$ satisfy the assumptions above.

\subsection{Existing Proximal Langevin Sampling Schemes}\label{subsec:langevinmcmc}
Suppose we want to sample from a target distribution $\mu^\ast$ with density proportional to $\exp(-V)$ for a potential $V:\bbR^d \to \bbR$ which for now is assumed to be differentiable. Langevin diffusion processes for this target are solutions of the Ito stochastic differential equation
\begin{equation}\label{eq:LangevinSDE}
    \rmd X_t = -\nabla V(X_t) \rmd t + \sqrt{2} \rmd W_t,
\end{equation}
where $W_t$ is a Wiener process in $\bbR^d$. $\mu^\ast$ is the unique invariant probability measure of the Markov semigroup associated with this SDE. Furthermore, if $\mu^\ast$ satisfies some regularity assumption like the logarithmic Sobolev inequality, every solution of the SDE converges exponentially fast in time to the stationary target $\mu^\ast$ \cite{Roberts1996,Vempala2019}. Since the SDE only has an explicit solution for specific cases of the function $V$, sampling algorithms based on Langevin diffusion are usually time-discretized Markov chains that approximate continuous processes which solve the SDE.

A straightforward Euler-Maruyama discretization of \eqref{eq:LangevinSDE} leads to the following sampling scheme called unadjusted Langevin algorithm (ULA) given by
\begin{equation}\label{eq:ULA}
    X^{k+1} = X^k - \gamma \nabla V (X^k) + \sqrt{2\gamma}\, \xi^k,\quad \xi^k \sim \rmN(0,I_d),
\end{equation}
with step size $\gamma > 0$. In the case of strongly convex $V$, the algorithm is well-understood in non-asymptotic and asymptotic behaviour of the law $\mu^k$ of iterate $X^k$ \cite{Roberts1996,Dalalyan2019,Durmus2017}.If the step size $\gamma$ is small enough, the Wasserstein distance between $\mu^k$ and $\mu^\ast$ decreases exponentially with $k$ up to a constant bias scaling like the square root of the step size $\sqrt{\gamma}$ \cite{Dalalyan2019} and similar results hold for the measures' total variation distance. The bias is usually unavoidable in Langevin diffusion based sampling algorithms if there is no additional step correcting for it. It is attributed to the fact that the iteration is performing an unbalanced discretization in time of the gradient flow of relative entropy. The gradient step in $V$ is a time-discrete step in the expected potential value while the addition of a normal random variable solves the gradient flow of negative entropy in continuous time, see \cite{Wibisono2018}. For variable step sizes decaying to zero at the right rate, convergence in different metrics (Wasserstein distance, total variation distance or KL-divergence) can be ensured, with accuracy $\epsilon$ in Wasserstein distance reached after at most $\mathcal{O}(d\epsilon^{-2})$ iterations in the case of strongly convex $V$. If $V$ is only convex, weaker results hold with the number of iterations still depending at most polynomially on $d$ \cite{Dalalyan2019,Durmus2022}. Particularly due to this moderate dependence on the dimension $d$ in comparison to other MCMC algorithms, sampling schemes based on Langevin diffusion are very efficient in high-dimensional applications and have gained popularity in recent years.

In this work, we are interested in the case when $V=F+G$ contains a non-smooth term $G$ which makes ULA not well-defined. Technically, a variant of ULA could still be used if $G$ is supported everywhere and subdifferentiable by replacing $\nabla V$ with an element of the subdifferential $\partial V$. However, the theoretical convergence guarantees then do not hold anymore unless other restrictive conditions are satisfied, e.g.\ Lipschitz-continuity of the nonsmooth potential~\cite{Durmus2019a}. In the literature, several other strategies to circumvent the problem of nonsmoothness have been proposed.

One popular technique relies on changing the target density by replacing the non-smooth part $G$ of the potential with its regularized Moreau--Yosida envelope $G^\lambda$ defined by $G^\lambda(x) := \argmin_{y}\left\{ G(y) + \frac{1}{2\lambda} \lVert x-y \rVert_2^2 \right\}$. ULA can then be applied to the altered target measure $\mu^\lambda$ with density $\exp(-F(x)-G^\lambda(x))$. The resulting algorithm MYULA (Moreau--Yosida Unadjusted Langevin Algorithm) is given by
\begin{equation*}
    X^{k+1} = \left( 1-\frac{\gamma}{\lambda} \right) X^k - \gamma \nabla F(X^k) + \frac{\gamma}{\lambda} \prox_{\lambda G}(X^k) + \sqrt{2\gamma}\,\xi^k,\quad \xi^k \sim \rmN(0,I_d).
\end{equation*}
Since $\nabla G^\lambda$ is Lipschitz continuous, the convergence theory of ULA can be applied to show that the distribution of generated samples is close to $\mu^\lambda$. If $\lambda$ is small enough, the total variation distance between $\mu^\lambda$ and $\mu^\ast$ is also small, allowing a convergence theory for MYULA \cite{Durmus2018,Durmus2022}. Since the Moreau-Yosida envelope changes the target distribution, a forward-backward envelope which changes the distribution, but preserves the MAP and shares connections with the method investigated in the present paper has been considered in \cite{Eftekhari2023}.

In this work, we consider a different approach to the problem of non-smooth potentials. Instead of applying an entirely explicit discretization to \eqref{eq:LangevinSDE}, the smooth part $F$ of the potential is discretized explicitly and the non-smooth part $G$ implicitly. Analogous to forward-backward splitting algorithms in optimization, we can design a sampling algorithm including a forward step, the additive stochastic term (diffusion) and a backward step:
\begin{equation*}
    X^{k+1} = \prox_{\gamma G}(X^k - \gamma \nabla F(X^k) + \sqrt{2\gamma}\,\xi^k),\quad \xi^k \sim \rmN(0,I_d).
\end{equation*}
This algorithm is analyzed in \cite{Salim2020} as proximal stochastic gradient Langevin algorithm (PSGLA) with the addition that the gradients $\nabla F$ are replaced by unbiased estimators with bounded variance. In the present paper, we consider this algorithm in the setting where the evaluation of the proximal mapping is carried out inexactly. The gradients $\nabla F$ are evaluated exactly, so that we call this scheme proximal gradient Langevin algorithm (PGLA) here. Note, however, that the whole analysis can be extended without major modification to the stochastic gradient case of \cite{Salim2020}.

The two schemes MYULA and PGLA are comparable in the sense that both algorithms evaluate the gradient of $F$ and a proximal mapping of $G$ in each iteration to overcome the problem of non-smooth $G$. To draw the link to optimization algorithms, PGLA can be viewed as a sampling equivalent of forward-backward splitting where the diffusion step is carried out between the two discrete time steps along the potential fields of $F$ and $G$. MYULA, on the other hand, acts like the sampling correspondence of gradient descent on the partly Moreau--Yosida regularized objective $F + G^\lambda$. The flexbile smoothing parameter $\lambda$ in MYULA allows to trade off the algorithm's speed to the approximation of the true target. PGLA, however, has the advantage that constraints encoded in the potential $G$ are not relaxed and the samples will always remain in the support of $G$.

Another algorithm that can be related to MYULA has been considered in \cite{Laumont2022}. In a manner akin to Plug \& Play optimization schemes, the considered algorithm therein replaces the proximal mapping by a trained denoiser. This approach allows the analysis to be extended to prior distributions that are implicitly defined by the denoiser, without requiring access to a density and is more agnostic with regards to the true statistical model of observed data. In the PGLA setting, we assume to know the potentials exactly and only evaluate the proximal mapping inexactly, but it could be interesting future work to relax this assumption to cases where the inexactness in the proximal mapping is related to an inexactness in the underlying model $G$ (implicitly caused by, e.g., training a denoiser).

We mention here briefly that further discretizations are possible in the non-smooth case, e.g. by smoothing the whole potential term, or by discretizing to only a single backward step in the potential followed by the step along the gradient flow of negative entropy. The resulting methods have also been called proximal Langevin algorithms \cite{Bernton2018,Pereyra2016} and are special instances of MYULA, PGLA or both of the latter. An algorithm arising from an implicit midpoint discretization, which also results in the computation of proximal points at each step, has recently been analyzed in \cite{Klatzer2023}.

\subsection{A Notion of Inexactness for Proximal Points}\label{subsec:inexactproximalpoints}
We want to generalize PGLA by allowing an inexact evaluation of the proximal mapping in every iteration. The distribution of the resulting Markov chain is, as typical for Langevin sampling algorithms, a biased approximation to the target density. 

We begin by defining what we mean by an inexact evaluation of the proximal mapping. This first requires us to introduce the $\epsilon$-subdifferential, see \cite{Zalinescu2002}, which is a generalization to the subdifferential of a function. Given $G:\calX \to \overline{\bbR}$ for some Hilbert space $\calX$ and an $\epsilon \ge 0$, it is defined by
$$ \partial_\epsilon G (x) = \left\{p \in \calX\,:\, G(y) \ge G(x) + \inpro{p}{y-x} - \epsilon \,\ \forall y \in \calX \right\}, $$
where $\inpro{\cdot}{\cdot}$ is the inner product in $\calX$. Note that when $\epsilon = 0$, this reduces to the definition of the subdifferential $\partial G(x)$. Further, it holds $\partial G(x) \subseteq \partial_{\epsilon_1}G(x) \subseteq \partial_{\epsilon_2}G(x)$ for all $0 < \epsilon_1 < \epsilon_2$. Using the $\epsilon$-subdifferential, we now define inexact proximal points.
\begin{definition}[\cite{Rasch2020}]\label{def:inexactProx}
	For $G:\calX \to \bbR$, $\gamma > 0$ and $\epsilon \ge 0$, say that $y \in \calX$ is an $\epsilon$-approximation of the proximal point $\prox_{\gamma G}(x)$ if $y$ satisfies
	\begin{equation}\label{eq:dfn-inexact-subgradient}
	    y \approx^\epsilon \prox_{\gamma G}(x) \ \Leftrightarrow\ \frac{x-y}{\gamma} \in \partial_\epsilon G (y).
	\end{equation}
 It is worth noting that for $\epsilon > 0$, the operator $x \mapsto \partial_\epsilon G(x)$ is not monotone, so that $\epsilon$-inexact proximal points are generally not unique (in contrast to the case $\epsilon = 0$, see \cite{Bauschke2011}, chap. 23).
\end{definition}
We illustrate the inexact subdifferential and proximal mapping on two simple examples.
\begin{example}
    Let $G(x) = \frac{1}{2\gamma}\norm{x-z}_2^2$, $x \in \bbR^d$. The inexact subdifferential of $G$ is
    $$ \partial_{\epsilon} G(x) =  \left\{ \frac{x-z-r}{\gamma} \ : \ \norm{r} \le \sqrt{2\gamma\epsilon} \right\},$$
    which shows that the set of inexact proximal points is given by
    $$ x \approx^\epsilon \prox_{\tau G}(y) \ \Leftrightarrow \ x \in \left\{ \frac{\gamma y + \tau (z + r)}{\gamma + \tau} \ : \ \norm{r} \le \sqrt{2\gamma\epsilon} \right\}. $$
\end{example}
\begin{example}
    Let $G(x) = |x|$, $x \in \bbR$ and $\epsilon \ge 0$. Then from the definition of the inexact subdifferential, one can compute that 
    $$\partial_{\epsilon}G(x) = \left\{ \begin{array}{ll}
        \left[\max(1-\frac \epsilon x, -1), 1\right] & \textrm{if } x>0, \\
        \left[-1, 1\right] & \textrm{if } x=0, \\
        \left[-1, \min(-1-\frac\epsilon x, 1)\right] & \textrm{if } x<0.
    \end{array}\right.$$
    Note that the inexact subdifferential is not a singleton for all $x \in \bbR$ if $\epsilon > 0$. By solving the right side of \eqref{eq:dfn-inexact-subgradient} for $y$, one can compute that for $\epsilon > 0$ the inexact proximal mapping is given by
    $$ y \approx^\epsilon \prox_{\tau |\cdot |}(x) \ \Leftrightarrow \ y \in \left[ \max(-h_\tau^\epsilon(-x), x-\tau), \min(h_\tau^\epsilon(x),x+\tau) \right], $$
    where $h_\tau^\epsilon(x) := (x-\tau)/2 + \sqrt{(x-\tau)^2/4 + \epsilon \tau}$. The inexact subgradients and proximal points are illustrated in \cref{fig:inexact_prox_absolute_value}.
\end{example}
\begin{figure}[t]
    \centering
    \includegraphics[width=\linewidth]{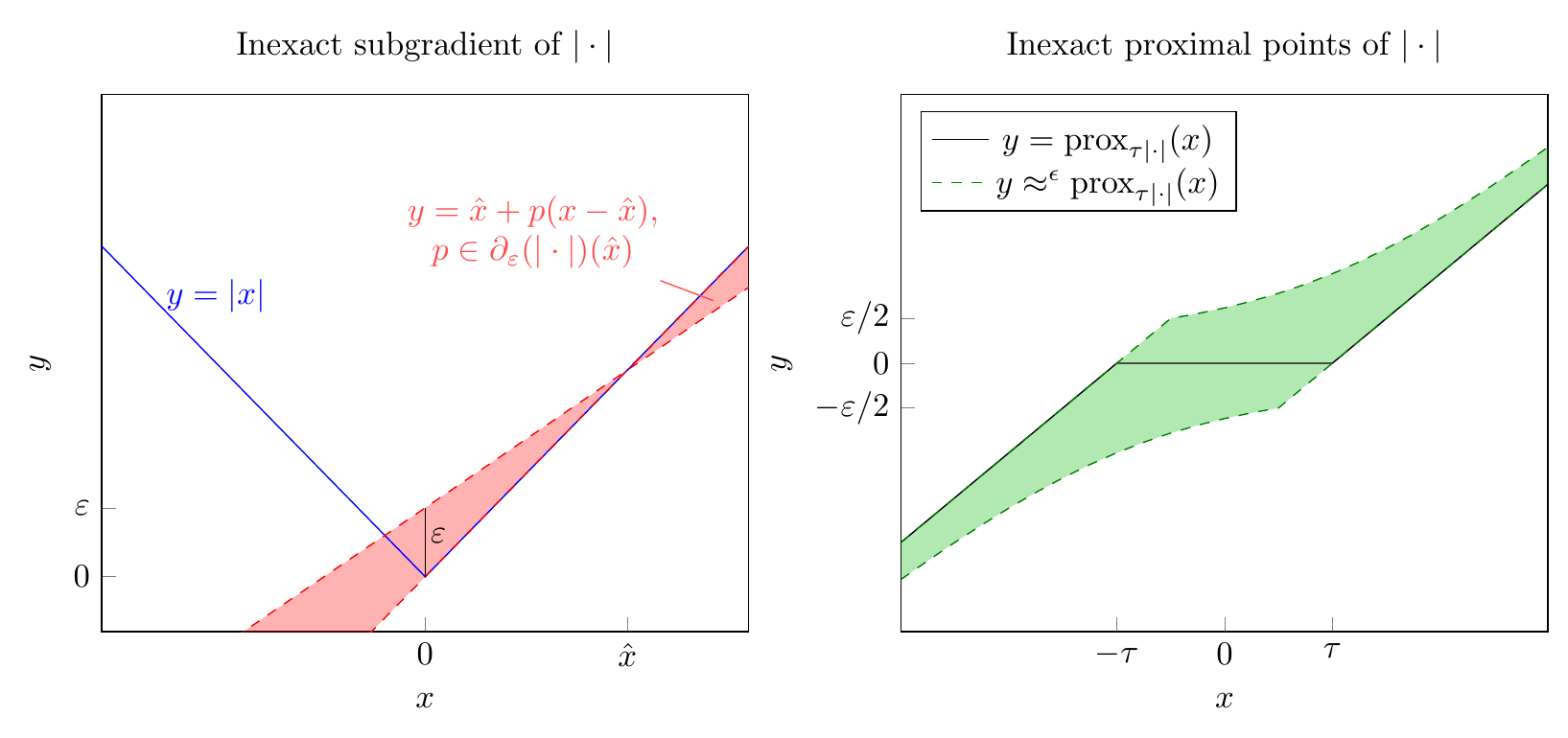}
    \caption{Illustration of the inexact subgradient of the absolute value function and the corresponding inexact proximal points.}
    \label{fig:inexact_prox_absolute_value}
\end{figure}
The present definition of inexact proximal points has been used in \cite{Alber1997} to analyze inexact proximal gradient descent and also under the name ``type-2 approximation'' in \cite{Salzo2012,Villa2013,Rasch2020} where several types of inexact proximal points were used to analyze inexactness in accelerated forward-backward splitting methods and the primal dual hybrid gradient algorithm. In \cref{sec:convergenceTheory}, we comment on interpretation of the definition used here, how it can be verified efficiently in numerical practice and how it relates to others considered in the literature.

\subsection{The Proposed Inexact Sampling Scheme}
With the definition of inexact evaluation of proximal mappings, we are able to generalize the proximal Langevin algorithm considered in \cite{Salim2020}.

\begin{algorithm}[H]
\caption{Inexact PGLA}\label{algo:iPGLA}
\begin{algorithmic}
    \STATE{Input: $X^0,K,\{\gamma_k\}_{k=0}^{K-1},\{\epsilon_k\}_{k=0}^{K-1}$}
    \FOR{$k = 0, \dots, K-1$}
    \STATE{Draw $\xi^{k+1} \sim \mathrm{N}(0,I_d)$}
    \STATE{Compute $X^{k+1} \approx^{\epsilon_k} \prox_{\gamma_k G}\left( X^k - \gamma_k \nabla F(X^k) + \sqrt{2\gamma_k}\,\xi^k \right)$}
    \ENDFOR
\end{algorithmic}
\end{algorithm}

We allow both the error level $\epsilon_k$ and the step size $\gamma_k$ to change over time. The convergence of the algorithm is analyzed in \cref{sec:convergenceTheory} for both fixed and flexible step size choices and errors. For the sake of notation we split the iteration into three intermediate steps of the form
\begin{align*}
    X^{k+1/3} &= X^k - \gamma_k \nabla F(X^k),\\
    X^{k+2/3} &= X^{k+1/3} + \sqrt{2\gamma_k}\,\xi^k,\\
    X^{k+1} &=  S_k(X^{k+2/3}) \approx^{\epsilon_k} \prox_{\gamma_k G}\left(X^{k+2/3} \right),
\end{align*}
where $S_k:\calX \to \calX$ is the operator that maps to the $\epsilon_k$-inexact proximal point that our algorithm chooses, since inexact proximal points are in general non-unique. Denote $\mu^j = \mathrm{Law}(X^j), j \in \bbN/3$ for the distributions of the samples at each step. Note that for $\epsilon_k=0$, we of course recover the iteration formula of the original PGLA algorithm.
\section{Convergence Theory}\label{sec:convergenceTheory}
In this section, we give a short overview of the relation between the analysis of optimization and sampling algorithms. We then give auxiliary results on the type of inexact proximal points which we use in \cref{algo:iPGLA} and comment on how to numerically ensure the inexact computation for a given accuracy level. This allows us to prove the main nonasymptotic convergence result \cref{thm:nonasymptotic_result_type2error} and two unrolled versions that lead to asymptotic rates in \cref{thm:asymptotic_result_type2error_fixedSteps,thm:asymptotic_result_type2error_decaying_stepsizes} for the cases of fixed or decaying step sizes and proximal mapping accuracy levels.
%
%
%
%
%
%
%
\subsection{Sampling as Optimization in the Wasserstein Space}
In several recent works, Langevin Monte Carlo algorithms were successfully analyzed using techniques from optimization theory \cite{Durmus2019a,Vempala2019,Salim2020,Wibisono2018}. We take the same perspective here and motivate the formulation of the sampling task as an optimization problem. This then leads to a reformulation as a saddle point problem following the arguments in \cite{Salim2020}, which is the basis of our convergence analysis. In the following, let $\calP_2(\calX)$ be the set of probability measures on $\calX = \bbR^d$ with finite variance. Let the Wasserstein $p$-metric be defined as usual for $0<p<\infty$ by 
\begin{equation*}\label{eq:wassersteindist}
    \calW_p(\mu,\nu) := \left(\inf_{\pi} \bbE_{(X,Y)\sim\pi} \left[ \lVert X-Y \rVert^p \right] \right)^{1/p},
\end{equation*}
where the infimum is taken over all possible couplings $\pi$ of the measures $\mu$ and $\nu$, i.e., all probability measures $\pi \in \calP_2(\calX^2)$ with marginal distributions $\mu$ and $\nu$. The sampling task then corresponds to the problem of approximating $\mu^\ast$ in the metric space $(\calP_2(\calX), \calW_2)$.

For any $\mu \in \calP_2(\calX)$, define the negative entropy
$$\calH (\mu) = \int \log(\mu(x)) \rmd \mu(x)$$
with $\calH(\mu) := \infty$ if $\mu$ has no density w.r.t.\ the Lebesgue measure. Define the potential energy functionals for the functions $F$ and $G$ by
$$\calE_F(\mu) = \int F(x) \rmd \mu(x), \quad \calE_G(\mu) = \int G(x) \rmd \mu(x). $$
For $\mu, \nu \in \calP_2(\calX)$ with $\mu \ll \nu$ ($\mu$ absolutely continuous with respect to $\nu$) we further define the Kullback--Leibler divergence of $\mu$ from $\nu$
$$ \KL(\mu,\nu) = \int \rmderiv{\mu}{\nu}(x) \log\left(\rmderiv{\mu}{\nu}(x)\right) \rmd \nu (x), $$
which is often also called relative entropy of $\mu$ w.r.t.\ $\nu$. It can easily be seen that with $\mu^\ast$ defined as in \eqref{eq:bayes_law} it holds $\KL(\mu, \mu^\ast) = \calH(\mu) + \calE_F(\mu) + \calE_G(\mu)$ for all $\mu \in \calP_2(\calX)$ \cite{Ambrosio2008}. Since the Kullback--Leibler divergence obeys $\KL(\mu,\mu^\ast) \ge 0$ and further $\KL(\mu,\mu^\ast) = 0$ if and only if $\mu = \mu^\ast$, the task of approximating $\mu^\ast$ corresponds to minimizing the free energy functional $\calH + \calE_F + \calE_G$. 

Exploring this correspondence further provides intuition why Langevin dynamics in the form \eqref{eq:LangevinSDE} is a good choice when we want to sample from $\mu^\ast$. In \cite{Jordan1998}, it was shown that the gradient flow of $\KL(\mu,\mu^\ast)$ in $\calP_2(\calX)$ equipped with the Wasserstein 2-distance actually corresponds to the Fokker-Planck equation of the Langevin SDE \eqref{eq:LangevinSDE}. Hence by approximating Langevin diffusion processes, we sample approximately from solutions of a Fokker-Planck equation which we can hope to converge to $\mu^\ast$ quickly. By exploiting this relationship theoretically, it is possible to derive convergence results in Wasserstein distance to $\mu^\ast$ of diffusion processes driven by Langevin dynamics.

The authors in \cite{Salim2020} analyzed \cref{algo:iPGLA} by deriving a first order optimality condition of the minimization problem 
\begin{equation}\label{eq:opt-problem-kl}
    \min_{\mu \in \calP_2(\calX)} \KL(\mu,\mu^\ast) = \min_{\mu \in \calP_2(\calX)} \calH(\mu) + \calE_F(\mu) + \calE_G(\mu)
\end{equation}%
As in optimization problems in Euclidean space, this allows to transform the optimization problem in the primal variable $\mu$ into a saddle point problem. The authors of \cite{Salim2020} proved a saddle point result for the generalized Lagrangian. In our analysis, we will only use that the resulting primal-dual gap is positive, so that we simply introduce the necessary notation here and refer the reader to \cite{Salim2020} for details and the proof of the saddle point lemma.
By $T_{\mu_1\to\mu_2}:\calX \to \calX$, we will denote the optimal transport map from $\mu_1 \in \calP_2(\calX)$ to $\mu_2 \in \calP_2(\calX)$, i.e.\ the map for which the transport plan $(I,T_{\mu_1\to\mu_2})\# \mu_1$ realizes the infimum in the definition of $\calW_2(\mu_1,\mu_2)$  \cite{Villani2003}. We introduce a dual variable $\psi \in L^2(\mu^\ast;\calX)$, which defines a coupling between the random variable $X$ and a new random variable $Y$ in the following way: For given $(\mu,\psi)$, the joint law of $(X,Y)$ is given by $\pi(\mu,\psi) := (T_{\mu^\ast\to\mu},\psi)\# \mu^\ast$. We will further sometimes denote the marginal distribution of $Y$ by $\nu(\psi) := \psi \# \mu^\ast$. For any $\mu \in \calP_2(\calX)$ and any $\psi \in L^2(\mu^\ast;\calX)$, a generalized Lagrangian is defined by
$$ \calL(\mu, \psi) = \calH(\mu) + \calE_F(\mu) - \calE_{G^\ast}(\nu(\psi)) + \bbE_{(X,Y)\sim \pi(\mu,\psi)}[\inpro{X}{Y}], $$
and a corresponding duality gap by
$$\calD(\mu, \psi) := \calL(\mu, \psi^\ast) - \calL(\mu^\ast, \psi), $$
where $\psi^\ast = \nabla G$ is the optimal value of the dual variable. The name `duality gap' is justified by the following result. 
\begin{lemma}[\cite{Salim2020}]
Let \cref{assumption1,assumption2,assumption3} hold. For every $\mu \in \calP_2(\calX)$, $\psi \in L^2(\mu^\ast;\calX)$ it holds $\calD(\mu, \psi) \ge 0$ and $\calL(\mu, \psi) \le \KL(\mu,\mu^\ast)$. The pair $(\mu^\ast, \psi^\ast)$ is a saddle point of $\calL$ with value 0 in the sense that for all $\mu \in \calP_2(\calX)$, $\psi \in L^2(\mu^\ast;\calX)$ it holds
$$ \calL(\mu^\ast, \psi) \le 0 = \calL(\mu^\ast,\psi^\ast) \le \calL(\mu, \psi^\ast), $$
and further $\calL(\mu^\ast, \psi) = 0$ if and only if $\psi = \psi^\ast$ holds $\mu^\ast$-a.e.
\end{lemma}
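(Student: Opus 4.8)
The plan is to reduce every inequality to the Fenchel--Young inequality $G(a)+G^\ast(b)\ge\inpro{a}{b}$ (with equality iff $b\in\partial G(a)$), applied pointwise under the transport coupling, and to supply one displacement-convexity argument for the single inequality that Fenchel--Young cannot reach. Throughout I write $T:=T_{\mu^\ast\to\mu}$, so that under $\pi(\mu,\psi)$ the pair is $X=T(Z),\,Y=\psi(Z)$ with $Z\sim\mu^\ast$, and all four functionals turn into integrals against $\mu^\ast$. Using $\KL(\mu,\mu^\ast)=\calH(\mu)+\calE_F(\mu)+\calE_G(\mu)$ together with $\calE_G(\mu)=\int G(T(z))\rmd\mu^\ast(z)$, a direct substitution produces the identity
\[ \KL(\mu,\mu^\ast)-\calL(\mu,\psi)=\int\big[G(T(z))+G^\ast(\psi(z))-\inpro{T(z)}{\psi(z)}\big]\rmd\mu^\ast(z). \]
The integrand is nonnegative by Fenchel--Young, which immediately gives $\calL(\mu,\psi)\le\KL(\mu,\mu^\ast)$ for all $\mu,\psi$.

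First I would specialize the primal variable to $\mu=\mu^\ast$, where $T=\id$ and the identity collapses (using $\KL(\mu^\ast,\mu^\ast)=0$) to $\calL(\mu^\ast,\psi)=-\int\big[G(z)+G^\ast(\psi(z))-\inpro{z}{\psi(z)}\big]\rmd\mu^\ast(z)$. Fenchel--Young again makes each integrand nonnegative, so $\calL(\mu^\ast,\psi)\le 0$; equality forces $\psi(z)\in\partial G(z)$ for $\mu^\ast$-a.e.\ $z$, and since $G$ is $\mu^\ast$-a.e.\ differentiable \cite{Salim2020} this subdifferential is the singleton $\{\nabla G(z)\}$, i.e.\ $\psi=\psi^\ast$ $\mu^\ast$-a.e. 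This settles the second saddle inequality, the value $\calL(\mu^\ast,\psi^\ast)=0$, and the final equality-characterization in one stroke.

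The main obstacle is the remaining inequality $\calL(\mu,\psi^\ast)\ge 0$, which cannot come from Fenchel--Young alone. Here I would exploit that $\nabla G(z)\in\partial G(z)$ turns Fenchel--Young into the equality $G^\ast(\nabla G(z))=\inpro{z}{\nabla G(z)}-G(z)$; substituting it into the definition of $\calL(\mu,\psi^\ast)$ yields
\[ \calL(\mu,\psi^\ast)=\calH(\mu)+\calE_F(\mu)+\calE_G(\mu^\ast)+\int \inpro{\nabla G(z)}{T(z)-z}\rmd\mu^\ast(z). \]
I recognize the right-hand side as $\calH+\calE_F+\ell$, where $\ell$ is the linearization of the displacement-convex functional $\calE_G$ around $\mu^\ast$ (it is affine along the generalized geodesics emanating from $\mu^\ast$). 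Since $\calH$ is displacement convex and $\calE_F$ is displacement convex (as $F$ is convex), the functional $\mu\mapsto\calL(\mu,\psi^\ast)$ is convex along these geodesics; its Wasserstein gradients at $\mu^\ast$ are $\nabla\log\mu^\ast$, $\nabla F$ and $\nabla G$, which sum to zero by the stationarity $\nabla\log\mu^\ast=-\nabla F-\nabla G$ of $\mu^\ast\propto\exp(-F-G)$. Hence $\mu^\ast$ is the global minimizer, with value $\calH(\mu^\ast)+\calE_F(\mu^\ast)+\calE_G(\mu^\ast)=\KL(\mu^\ast,\mu^\ast)=0$, so $\calL(\mu,\psi^\ast)\ge 0$. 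Combining this with $\calL(\mu^\ast,\psi)\le 0$ finally gives $\calD(\mu,\psi)=\calL(\mu,\psi^\ast)-\calL(\mu^\ast,\psi)\ge 0$.

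I would expect the delicate points to lie entirely in this last step: the existence and uniqueness of the Brenier map $T_{\mu^\ast\to\mu}$ (guaranteed by absolute continuity of $\mu^\ast$), the integrability needed to keep all four terms finite (\cref{assumption2,assumption3} and $\mu^\ast\in\calP_2(\calX)$), and the first-variation computation identifying the Wasserstein gradient of $\calH+\calE_F+\ell$ at $\mu^\ast$ in the presence of a non-smooth $G$ that is only $\mu^\ast$-a.e.\ differentiable. These are precisely the technical facts about the first-order optimality condition for $\mu^\ast$ established in \cite{Salim2020}, on which I would lean rather than reprove from scratch.
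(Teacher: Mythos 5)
Your argument is correct. Note, however, that the paper itself does not prove this lemma: it explicitly defers to \cite{Salim2020} ("we simply introduce the necessary notation here and refer the reader to \cite{Salim2020} for details and the proof of the saddle point lemma"). Your reconstruction — Fenchel--Young applied pointwise under the coupling $(T_{\mu^\ast\to\mu},\psi)\#\mu^\ast$ for the upper bound, the collapse at $\mu=\mu^\ast$ for the left saddle inequality and the equality characterization, and displacement convexity of $\calH+\calE_F$ plus the first-order stationarity $\nabla\log\mu^\ast=-\nabla F-\nabla G$ ($\mu^\ast$-a.e.) for $\calL(\mu,\psi^\ast)\ge 0$ — is precisely the strategy of the cited source, and you correctly isolate the only genuinely technical ingredients (the above-the-tangent inequality for the entropy and the a.e.\ differentiability of $G$) as facts established in \cite{Salim2020}.
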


The introduction of the dual variable allows the analysis of $\calD$ along the iterates in \cref{algo:iPGLA}. In the discrete setting of \cref{algo:iPGLA}, we can analogously define dual samples
$$Y^{k+1} := \frac{X^{k+2/3}-X^{k+1}}{\gamma_k} = \frac{1}{\gamma_k} (I-S_k) (X^{k+2/3}).$$
Denoting $\psi^{k+1} := (I - S_k)/{\gamma_k} \circ T_{\mu^\ast \to \mu^{k+2/3}}$ where we assume $S_k \in L^2(\mu^\ast,\calX)$, we are interested in the distribution of the dual variable $\nu^{k+1} := \mathrm{Law}(Y^{k+1}) = \psi^{k+1} \#\mu^{\ast}$. Note that the resulting maps $\psi^{k+1}$ can be understood as approximations of the optimal dual variable $\psi^\ast = \nabla G$ resulting from a time discretization of the gradient flow of $G$ with step size $\gamma_k$.
%
%
%
%
%
%
%
%
\subsection{Auxiliary Results on Inexact Proximal Points}\label{subsec:inexact_prox_mappings}
Let in the following $G:\calX \to \bar\bbR$. We start by stating some technical results about $\epsilon$-subdifferentials and our notion of inexact proximal mappings from \cref{def:inexactProx}.
\begin{lemma}[see Thm 2.4.4 (iv) in \cite{Zalinescu2002}]\label{lem:convex_duality_epsilon_subdifferential}
Let $\epsilon \ge 0$. If $G$ is convex, proper and lsc, then $$p \in \partial_\epsilon G(u)\,\Leftrightarrow\, u \in \partial_\epsilon G^\ast(p).$$
\end{lemma}
The next Lemma allows us to characterize the $\epsilon$-subdifferentials of a sum of a convex and a quadratic function.
\begin{lemma}[see Lemma 1 in \cite{Salzo2012}]\label{lem:eps_subgradient_sum_convex_quadratic}
Let $\phi_\gamma (z) := \norm{y-z}^2/(2\gamma)$ and $G_\gamma(z) = G(z) + \phi_\gamma(z)$. Then for any $\epsilon \ge 0$ it holds
\begin{align*}
    \partial_\epsilon G_\gamma(z) &= \bigcup_{\epsilon_1,\epsilon_2 \ge 0\,:\,\epsilon_1+\epsilon_2 = \epsilon} \partial_{\epsilon_1} G(z) + \partial_{\epsilon_2} \phi_\gamma (z)\\
    &= \bigcup_{\epsilon_1,\epsilon_2 \ge 0\,:\,\epsilon_1+\epsilon_2 = \epsilon} \partial_{\epsilon_1} G(z) + \left\{ \frac{z-y-r}{\gamma} \,:\, \norm{r} \le \sqrt{2 \gamma \epsilon_2} \right\}.
\end{align*}
\end{lemma}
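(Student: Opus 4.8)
The plan is to recognize the first equality as the sum rule for $\epsilon$-subdifferentials applied to the proper, convex, lsc function $G$ and the everywhere-finite continuous quadratic $\phi_\gamma$, and then to obtain the second equality by substituting the explicit form of $\partial_{\epsilon_2}\phi_\gamma(z)$ that was already computed in the first example of the paper (the inexact subdifferential of a quadratic).

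For the sum rule itself, the inclusion $\supseteq$ is immediate from the definition of the $\epsilon$-subdifferential. First I would observe that if $p_1 \in \partial_{\epsilon_1}G(z)$ and $p_2 \in \partial_{\epsilon_2}\phi_\gamma(z)$ with $\epsilon_1+\epsilon_2 = \epsilon$, then adding the two defining inequalities at an arbitrary point $w \in \calX$ gives $G_\gamma(w) \ge G_\gamma(z) + \inpro{p_1+p_2}{w-z} - \epsilon$, i.e.\ $p_1+p_2 \in \partial_\epsilon G_\gamma(z)$. This direction needs no hypotheses beyond convexity.

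The reverse inclusion $\subseteq$ is the substantial part, and I would prove it through Fenchel conjugates. Recall that $p \in \partial_\epsilon f(x)$ is equivalent to the approximate Fenchel--Young inequality $f(x) + f^\ast(p) \le \inpro{p}{x} + \epsilon$. Given $p \in \partial_\epsilon G_\gamma(z)$, I would use the fact that, because $\phi_\gamma$ is finite and continuous on all of $\calX$, the standard constraint qualification for the conjugate of a sum holds, so that $G_\gamma^\ast = (G+\phi_\gamma)^\ast$ equals the infimal convolution $G^\ast \square \phi_\gamma^\ast$ and this convolution is \emph{exact and attained}. Choosing a minimizer yields a splitting $p = p_1 + p_2$ with $G_\gamma^\ast(p) = G^\ast(p_1) + \phi_\gamma^\ast(p_2)$. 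Substituting this into the approximate Fenchel--Young inequality and setting $\epsilon_1 := G(z) + G^\ast(p_1) - \inpro{p_1}{z}$ and $\epsilon_2 := \phi_\gamma(z) + \phi_\gamma^\ast(p_2) - \inpro{p_2}{z}$, both nonnegative by Fenchel--Young, gives $\epsilon_1 + \epsilon_2 \le \epsilon$ together with $p_1 \in \partial_{\epsilon_1}G(z)$ and $p_2 \in \partial_{\epsilon_2}\phi_\gamma(z)$. Since the $\epsilon$-subdifferential is monotone in $\epsilon$ (as noted in the excerpt, $\partial_{\epsilon_1}G(z) \subseteq \partial_{\epsilon_1'}G(z)$ for $\epsilon_1 \le \epsilon_1'$), I can enlarge $\epsilon_1$ so that $\epsilon_1 + \epsilon_2 = \epsilon$ exactly, placing $p$ in the claimed union.

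Finally, the second equality reduces to identifying $\partial_{\epsilon_2}\phi_\gamma(z)$. Since $\phi_\gamma(z) = \frac{1}{2\gamma}\norm{z-y}^2$, the earlier example (with the roles of center $y$ and variable $z$ as here) gives directly $\partial_{\epsilon_2}\phi_\gamma(z) = \{(z-y-r)/\gamma \st \norm{r} \le \sqrt{2\gamma\epsilon_2}\}$, which I would simply substitute into the first line. The main obstacle is the exactness and attainment of the infimal convolution $G^\ast \square \phi_\gamma^\ast$ that drives the $\subseteq$ direction; everything else is either a direct computation from the definitions or a citation of the explicit subdifferential of the quadratic. The finiteness everywhere of $\phi_\gamma$ is precisely what makes this qualification automatic and the minimizer exist, which is why the statement requires no interiority hypothesis on $\dom G$ and holds for every $\epsilon \ge 0$.
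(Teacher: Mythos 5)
Your proof is correct. Note that the paper itself does not prove this lemma at all --- it is imported verbatim as Lemma~1 of the cited reference \cite{Salzo2012} --- so there is no in-paper argument to compare against; your write-up supplies exactly the standard proof underlying that citation. The easy inclusion $\supseteq$ by adding the two defining inequalities is right, and the substantial direction $\subseteq$ via the approximate Fenchel--Young characterization $p\in\partial_\epsilon f(x)\Leftrightarrow f(x)+f^\ast(p)\le\inpro{p}{x}+\epsilon$, the exact and attained infimal convolution $(G+\phi_\gamma)^\ast=G^\ast\,\square\,\phi_\gamma^\ast$ (valid because $\phi_\gamma$ is finite and continuous everywhere and $G$ is proper, so the Moreau--Rockafellar qualification is automatic), and the monotonicity of $\epsilon\mapsto\partial_\epsilon G(z)$ to rebalance $\epsilon_1+\epsilon_2=\epsilon$ is exactly how this sum rule is established. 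Two minor points you could make explicit but that do not affect correctness: the degenerate case $z\notin\dom G$, where both sides of the claimed identity are empty; and the fact that $G_\gamma^\ast(p)<\infty$ for $p\in\partial_\epsilon G_\gamma(z)$, which is what guarantees the minimizer in the infimal convolution is a genuine finite-valued splitting $p=p_1+p_2$. The closing reduction of $\partial_{\epsilon_2}\phi_\gamma(z)$ to $\{(z-y-r)/\gamma \st \norm{r}\le\sqrt{2\gamma\epsilon_2}\}$ matches the quadratic example computed earlier in the paper.
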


Next, we prove an inequality resembling a typical property of subgradients of strongly convex functions, similar computations have been considered in \cite{Rasch2020}. Note however that the additional parameter $\theta \in [0,1)$ gives a slightly worse quadratic term due to the errors.
	
\begin{lemma}\label{lem:strong_convexity_epsilon_subdifferential}
Let $\epsilon \ge 0$, $G$ proper, lsc and $\lambda$-strongly convex, $\lambda \ge 0$, $u\in \mathcal{X}$, $p \in \partial_\epsilon G(u)$. Then for any $\theta \in [0,1)$ and $ v \in \mathcal{X}$ it holds
$$ G(v) \ge G(u) + \inpro{p}{v-u} + \frac{\theta\lambda}{2} \norm{u-v}^2 - \frac{\epsilon}{1-\theta}.$$
\end{lemma}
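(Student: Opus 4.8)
The plan is to exploit strong convexity at an interpolated point rather than applying the $\epsilon$-subgradient inequality directly at $v$. A naive application of the defining inequality $p \in \partial_\epsilon G(u)$ at $y = v$ yields only $G(v) \ge G(u) + \inpro{p}{v-u} - \epsilon$, which carries no quadratic term and the wrong error penalty. The trick that recovers (a $\theta$-fraction of) the strong-convexity term is to test the inequality against a convex combination of $u$ and $v$ and then exploit the freedom in the combination parameter.

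Concretely, I would fix $t \in (0,1]$ and set $w_t := (1-t)u + tv$. Applying the definition of $p \in \partial_\epsilon G(u)$ at $y = w_t$ and using $w_t - u = t(v-u)$ gives
\begin{equation*}
    G(w_t) \ge G(u) + t\inpro{p}{v-u} - \epsilon.
\end{equation*}
On the other hand, $\lambda$-strong convexity of $G$ (i.e.\ convexity of $G - \tfrac{\lambda}{2}\norm{\cdot}^2$) applied to the pair $u,v$ with weights $1-t,t$ yields the standard interpolation estimate
\begin{equation*}
    G(w_t) \le (1-t)G(u) + t G(v) - \frac{\lambda}{2}\,t(1-t)\norm{u-v}^2,
\end{equation*}
where the quadratic term comes from the elementary identity $(1-t)\norm{u}^2 + t\norm{v}^2 - \norm{w_t}^2 = t(1-t)\norm{u-v}^2$.

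Combining the two displays, subtracting $G(u)$, dividing by $t > 0$, and rearranging produces
\begin{equation*}
    G(v) \ge G(u) + \inpro{p}{v-u} + \frac{\lambda}{2}(1-t)\norm{u-v}^2 - \frac{\epsilon}{t}.
\end{equation*}
The claim then follows immediately by choosing $t = 1-\theta \in (0,1]$, which is admissible precisely because $\theta \in [0,1)$, so that $(1-t) = \theta$ and $\tfrac{1}{t} = \tfrac{1}{1-\theta}$. I do not expect any serious obstacle here: the only genuinely non-routine step is recognizing that one should evaluate the $\epsilon$-subgradient inequality at the interior interpolation point and treat $t$ as a tunable parameter — the $\theta$ in the statement is really this interpolation weight in disguise, and the degradation of the quadratic coefficient from $\lambda/2$ to $\theta\lambda/2$ together with the blow-up of the penalty to $\epsilon/(1-\theta)$ is exactly the trade-off between the two terms as $t \to 1$. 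The remaining manipulations are elementary, and the boundary case $\theta = 0$ (hence $t=1$) recovers the plain $\epsilon$-subgradient inequality, providing a useful consistency check.
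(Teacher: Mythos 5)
Your proof is correct, but it takes a genuinely different route from the paper's. The paper writes $G = g + \tfrac{\lambda}{2}\norm{\cdot-u}^2$ with $g$ convex and invokes the sum rule for $\epsilon$-subdifferentials (\cref{lem:eps_subgradient_sum_convex_quadratic}) to decompose $p = p_1 + p_2$ with an error budget $\epsilon_1 + \epsilon_2 = \epsilon$, where $p_1 \in \partial_{\epsilon_1}g(u)$ and $\norm{p_2}^2 \le 2\lambda\epsilon_2$; it then closes the estimate with Cauchy--Schwarz and a rescaled Young's inequality with parameter $(1-\theta)\lambda$. You instead test the defining inequality of $p \in \partial_\epsilon G(u)$ at the interpolated point $w_t = (1-t)u + tv$ and play it against the strong-convexity interpolation bound $G(w_t) \le (1-t)G(u) + tG(v) - \tfrac{\lambda}{2}t(1-t)\norm{u-v}^2$, dividing by $t$ and setting $t = 1-\theta$. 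Your argument is more elementary and self-contained: it needs only the two definitions, whereas the paper leans on the nontrivial sum rule imported from Salzo--Villa, and both routes produce identical constants (including the correct degenerate behaviour at $\lambda = 0$ and at $\theta = 0$). What the paper's decomposition buys in exchange is an explicit account of where the error ``lives'' --- the split $\epsilon = \epsilon_1 + \epsilon_2$ into a convex part and a quadratic part is precisely what underlies the sharpness example in the subsequent remark, and the same machinery is reused elsewhere in \cref{subsec:inexact_prox_mappings} to relate type-1 and type-2 approximations. The only point worth making explicit in your write-up is the trivial case $G(v) = +\infty$ (the inequality holds vacuously) so that the division by $t$ operates on finite quantities; properness of $G$ together with $\partial_\epsilon G(u) \neq \emptyset$ already forces $G(u) < \infty$.
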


Before proving \cref{lem:strong_convexity_epsilon_subdifferential}, we remark the following.
\begin{remark}
Note that the inequality in \cref{lem:strong_convexity_epsilon_subdifferential} is in general false when the respective factors $\theta$ and $(1-\theta)^{-1}$ are omitted and replaced by one. As a counterexample consider the $\lambda$-strongly convex function $G_\lambda(v) = \frac{\lambda}{2}\norm{v}^2$ with $\partial_{\epsilon}G_\lambda(0) = \{p\,:\,\norm{p}^2\le 2\lambda\epsilon \}$. Then it holds $G_\lambda(0)-G_\lambda(v)+\inpro{p}{v-0}+\frac{\lambda}{2}\norm{0-v}^2 = \inpro{p}{v}$ which cannot be bounded for all $v\in \calX$ if $p \neq 0$.\\
From the same example, it can also be seen that the inequality in \cref{lem:strong_convexity_epsilon_subdifferential} is indeed sharp in the following sense: When the inexact subgradient obeys $\norm{p}^2 = 2\lambda\epsilon$ (so that $p\in\partial_\epsilon G_\lambda(0)$, but $p\notin\partial_{ \delta} G_\lambda(0)$ for any $\delta < \epsilon$, then the inequality becomes an equality for any pair of values $\theta \in [0,1)$ and $v = \frac{1}{(1-\theta)\lambda}p$.

Note further that when $\epsilon = 0$, the last term in the inequality of the lemma vanishes and by continuity the inequality without the last term then also holds for $\theta = 1$, which is a result frequently used when analysing subdifferentials of strongly convex functions \cite{Bauschke2011}.
\end{remark}

\begin{proof}
Since $G$ is $\lambda$-strongly convex, there exists a convex function $g$ with $G(v) = g(v) + \frac{\lambda}{2}\norm{v-u}^2$. By the characterisation of \cref{lem:eps_subgradient_sum_convex_quadratic}, we have
\begin{equation*}
    \partial_\epsilon G(u) = \bigcup_{\epsilon_1,\epsilon_2 \ge 0\,:\,\epsilon_1+\epsilon_2 = \epsilon} \partial_{\epsilon_1}g(u) + \left\{ q\,:\, \norm{q}^2 \le 2 \lambda \epsilon_2 \right\}.
\end{equation*}
Hence there exist $\epsilon_1, \epsilon_2 \ge 0$, $\epsilon_1+\epsilon_2 = \epsilon$ such that $p = p_1 + p_2$ for some $p_1 \in \partial_{\epsilon_1} g(u)$ and $p_2$ with $\norm{p_2}^2 \le 2 \lambda \epsilon_2$. We obtain that
\begin{align*}
    G(u) - G(v) + \inpro{p}{v-u} + \frac{\theta\lambda}{2}\norm{u-v}^2 &= g(u) - g(v) + \inpro{p_1+p_2}{v-u} - \frac{(1-\theta)\lambda}{2}\norm{u-v}^2\\
    &\le \epsilon_1 + \inpro{p_2}{v-u} - \frac{(1-\theta)\lambda}{2}\norm{u-v}^2\\
    &\le \epsilon_1 + \frac{1}{2(1-\theta)\lambda} \norm{p_2}^2 \le \epsilon_1 + \frac{\epsilon_2}{1-\theta} \le \frac{\epsilon}{1-\theta}.
\end{align*}
where we used the definition of the $\epsilon_1$-subdifferential and the Cauchy-Schwarz inequality followed by the rescaled Young's inequality $ab \le \frac{a^2}{2(1-\theta)\lambda} + \frac{(1-\theta)\lambda b^2}{2} $.
\end{proof}

In the computation of the proximal point $\prox_{\gamma G}(y)$ we solve problems of the form
$$\argmin_x \left\{G(x) + \frac{1}{2\gamma}\norm{x-y}^2\right\} =: \argmin_x G_\gamma(x). $$
The corresponding optimality condition is 
\begin{equation}\label{eq:optimality_conditions_proximal_point}
	0 \in \partial G_\gamma(x) \Leftrightarrow \frac{y-x}{\gamma} \in \partial G(x).
\end{equation}
Notions of inexact proximal points can be defined by relaxing these optimality conditions when the subdifferentials are replaced by $\epsilon$-subdifferentials. Doing this in the right hand side of \eqref{eq:optimality_conditions_proximal_point} leads to \cref{def:inexactProx}. When the subdifferential on the left hand side in \eqref{eq:optimality_conditions_proximal_point} is replaced, i.e. if the criterion reads $0 \in \partial_{\epsilon} G_\gamma (x)$, then the condition is less strict. In \cite{Rasch2020}, points $x$ satisfying the latter condition are called "type-1 approximations". By  \cref{lem:eps_subgradient_sum_convex_quadratic}, if $0 \in \partial_{\epsilon} G_\gamma (x)$, there exist $\epsilon_1, \epsilon_2 \ge 0$, $\epsilon_1 + \epsilon_2 = \epsilon$ and $r$ with $\norm{r} \le \sqrt{2\gamma \epsilon_2}$ such that 
$$\frac{y-x-r}{\gamma} \in \partial_{\epsilon_1} G(x). $$ 
Hence, as is pointed out also in \cite{Rasch2020}, a type-1 approximation can be seen as a type-2 approximation of accuracy $\epsilon_1$ to the proximal mapping evaluated at the erroneous input $y-r$. In particular, a type-2 approximation corresponds to $r = 0$  and is therefore always a type-1 approximation.

We only consider type-2 approximations here. This is more restrictive, but actually the type-2 error can numerically be ensured very well for a lot of common problems by keeping track of the duality gap during the computation of the proximal mapping. Assume without loss of generality $G(x) = H(Bx)$ for some proper, convex and lsc $H: \calY \to \overline{\bbR}$ and some bounded linear operator $B : \calX \to \calY$. Then the computation of $\prox_{\gamma G}(y)$ requires solving
\begin{equation}\label{eq:prox_problem_primal_form}
	 \min_x G_\gamma(x) = \min_x \left\{ H(Bx) + \frac{1}{2\gamma}\norm{x-y}_2^2 \right\}.
\end{equation}
If there exists $x_0$ such that $H$ is continuous in $Bx_0$, then by strong duality it holds \cite{Bauschke2011}
\begin{equation}\label{eq:prox_problem_dual_form}
	 \min_x G_\gamma(x) = - \min_z \left\{ \frac{\gamma}{2} \norm{ B^\ast z}^2_2 - \inpro{ B^\ast z}{y} + H^\ast(z) \right\} =: - \min_z W_\gamma(z).
\end{equation}
Under strong duality the proximal point $\hat{x}$ is further given by the dual solution $\hat{z}$ through the optimality condition $\hat{x} = y - \gamma B^\ast \hat{z}$ and the duality gap 
\begin{equation}\label{eq:duality_gap}
    \calG(x,z) = G_\gamma(x) + W_\gamma(z)
\end{equation}
vanishes only at the optimum $(\hat{x},\hat{z})$. The following lemma generalizes this to $\epsilon$-approximate solutions.
\begin{lemma}[see Proposition 2.3 in \cite{Villa2013}]\label{lem:guarantee_epsilon_approximation_by_duality_gap}
If strong duality holds, then we have $\calG(x,z) = G_\gamma(x) + W_\gamma(z) = 0$ if and only if $x,z$ are the optimal values $\hat{x} = y - \gamma B^\ast \hat{z}$. More generally, it holds
$$ \calG(y-\gamma B^\ast z,z) \le \epsilon \ \Rightarrow \ y-\gamma B^\ast z \approx^\epsilon \hat{x} = \prox_{\gamma G}(y). $$
\end{lemma}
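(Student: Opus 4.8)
The plan is to treat the two claims separately: the first (the exact, zero-gap statement) follows immediately from strong duality, while the second (the $\epsilon$-approximate implication) reduces, after a short algebraic simplification, to a Fenchel--Young gap for $H$ together with a chain rule for $\epsilon$-subdifferentials.

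For the first claim, strong duality \eqref{eq:prox_problem_dual_form} gives $\min_x G_\gamma(x) + \min_z W_\gamma(z) = 0$, so $\calG(x,z) = G_\gamma(x) + W_\gamma(z) \ge 0$ for all $x,z$, with equality exactly when both $x$ and $z$ attain their respective minima. The identity $\hat{x} = y - \gamma B^\ast \hat{z}$ at the optimum is just the stated primal--dual relation, so this part needs no further argument.

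For the second claim I would abbreviate $x' := y - \gamma B^\ast z$ and substitute this value into $\calG(x',z) = G_\gamma(x') + W_\gamma(z)$. First note that $x' - y = -\gamma B^\ast z$ makes the two quadratic terms $\frac{1}{2\gamma}\norm{x'-y}^2$ and $\frac{\gamma}{2}\norm{B^\ast z}^2$ equal, and that $\inpro{B^\ast z}{y} = \inpro{B^\ast z}{x'} + \gamma\norm{B^\ast z}^2$. After cancellation the gap collapses to $\calG(x',z) = H(Bx') + H^\ast(z) - \inpro{z}{Bx'}$, which is precisely the Fenchel--Young gap of $H$ at $Bx'$ and $z$. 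The key observation is then that $H(w) + H^\ast(z) - \inpro{z}{w} \le \epsilon$ is equivalent to $z \in \partial_\epsilon H(w)$: taking the infimum over $v$ in the defining inequality of the $\epsilon$-subdifferential and using $-H^\ast(z) = \inf_v (H(v) - \inpro{z}{v})$ turns the subgradient condition into exactly this bound on the gap. Hence $\calG(x',z) \le \epsilon$ yields $z \in \partial_\epsilon H(Bx')$.

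It remains to pass from $z \in \partial_\epsilon H(Bx')$ to $B^\ast z \in \partial_\epsilon G(x')$ for $G = H \circ B$, which I would verify by a one-line chain rule: for any $v$, the inequality $H(Bv) \ge H(Bx') + \inpro{z}{B(v-x')} - \epsilon$ rewrites, using $\inpro{z}{B(v-x')} = \inpro{B^\ast z}{v-x'}$, as $G(v) \ge G(x') + \inpro{B^\ast z}{v-x'} - \epsilon$. Finally, since $x' = y - \gamma B^\ast z$ gives $\frac{y-x'}{\gamma} = B^\ast z$, this reads $\frac{y-x'}{\gamma} \in \partial_\epsilon G(x')$, which by \cref{def:inexactProx} is exactly $x' \approx^\epsilon \prox_{\gamma G}(y)$. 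The only step requiring genuine insight is recognizing that the substitution $x' = y - \gamma B^\ast z$ collapses the duality gap precisely into a Fenchel--Young gap; once this is seen, both the $\epsilon$-subgradient characterization and the chain rule are routine, and the main care is simply in the algebra and in invoking the Fenchel--Young characterization with the correct direction of inequality.
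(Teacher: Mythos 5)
Your proof is correct. Note that the paper does not prove this lemma at all---it is quoted from the literature (Proposition 2.3 in the cited work of Villa et al.)---so there is no in-paper argument to compare against; your reconstruction is essentially the standard one from that reference. The three steps all check out: substituting $x'=y-\gamma B^\ast z$ does collapse $\calG(x',z)$ to the Fenchel--Young gap $H(Bx')+H^\ast(z)-\inpro{z}{Bx'}$ (the two quadratic terms and the cross term cancel exactly as you compute); the equivalence of ``gap $\le\epsilon$'' with $z\in\partial_\epsilon H(Bx')$ follows from taking the infimum over $v$ in the defining inequality, and you correctly use it in the direction needed; and the passage to $B^\ast z\in\partial_\epsilon G(x')$ only requires the trivial inclusion $B^\ast\partial_\epsilon H(B\,\cdot)\subseteq\partial_\epsilon(H\circ B)$, which holds with no constraint qualification, so nothing is missing there. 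The identity $\tfrac{y-x'}{\gamma}=B^\ast z$ then gives the type-2 approximation exactly as in \cref{def:inexactProx}.
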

Since the duality gap can usually be computed with little additional computational overhead, the type-2 approximation of proximal points can in practice be ensured efficiently. The next example demonstrates this procedure on a typical imaging problem.
\begin{example}\label{ex:epsilon_prox_computation_for_TV}
    A typical choice of model-based prior in imaging inverse problems is based on total variation (TV) regularization. The (isotropic) TV functional is defined by $\TV(x) := \sum_{i,j} \sqrt{(\Delta_{h}x)_{i,j}^2 + (\Delta_{v}x)_{i,j}^2}$ where $\Delta_h, \Delta_v$ denote horizontal and vertical finite differences in the pixelated image $x$. Computing the proximal mapping of TV requires the solution of
    $$ \prox_{\gamma \TV}(y) = \argmin_x \left\{ \norm{(\Delta_h,\Delta_v)x}_{2,1} + \frac{1}{2\gamma}\norm{x-y}_2^2 \right\}, $$
    which is an instance of \eqref{eq:prox_problem_primal_form} with $B=(\Delta_h,\Delta_v)$ and $H = \norm{\cdot}_{2,1}$, which is the nested $l^2$ norm over horizontal and vertical component and $l^1$ norm over the pixels. The solution of this problem (also called ROF problem \cite{Rudin1992}) has no closed form, so its computation is usually carried out iteratively, which only gives an approximation instead of the exact point. A typical approach (see \cite{Chambolle2004}) consists in solving the dual problem \eqref{eq:prox_problem_dual_form}, which in this case reads
    \begin{equation}\label{eq:dual_form_ROF_problem}
        \hat{z} = \argmin_{z\,:\,\norm{z}_{2,\infty} \le \gamma} \left\{ \frac{\gamma}{2} \norm{B^\ast z}^2_2 - \inpro{B^\ast z}{y} \right\},
    \end{equation}
    where $B^\ast = (\Delta_h^\ast,\Delta_v^\ast)$. The dual solution $\hat{z}$ can be approximated by a sequence of dual iterates $z^n$ using a first order optimization method, e.g. accelerated proximal gradient descent (AGD), see \cite{Beck2009,Chambolle2016}. If the computation is stopped once $\calG(y-\gamma B^\ast z^n,z^n) \le \epsilon$ is satisfied, then by \cref{lem:guarantee_epsilon_approximation_by_duality_gap} the image $y-\gamma B^\ast z^n$ is an $\epsilon$-approximation to the exact proximal point. 
\end{example}
\begin{remark}
    Another flavour of total variation can be considered by exchanging the inner $l^2$ norm in the definition of TV with anisotropic variants. A practically relevant example is $\TV_{1}(x) := \norm{(\Delta_h,\Delta_v)x}_{1,1}$. The functional $\TV_{1}$ allows a splitting into separate, one-dimensional TV functionals along the rows and columns of the image. Since the proximal point of 1D TV can be computed exactly (see e.g. \cite{Condat2013}), in the case of $\TV_{1}$ another (stochastic) notion of inexactness of the proximal operator can be used, by picking an exact proximal point with respect to the one-dimensional total variation of a randomly drawn single row or column. The use of such stochastic approximations of the proximal operators in forward-backward type Langevin has been investigated in \cite{Salim2019}. The resulting convergence guarantees are comparable to ours, with the main difference that one has to bound the second moment of the stochastic estimator of the proximal points instead of the inexactness level $\epsilon$ used here. While the setting of \cite{Salim2019} is computationally especially suited for cases where $G$ has a splitting as a separable sum with many terms, it only covers distributions that have a positive Lebesgue density everywhere (i.e. $G(x) < \infty$ for all $x \in \bbR^d$).
\end{remark}
%
%
%
%
%
%
%
\subsection{Nonasymptotic and Asymptotic Convergence}
We are ready to state our convergence results for iPGLA. Note that the bounds generalize the ones of \cite{Salim2020} for flexible step sizes and inexactness levels $\epsilon_k$ of the approximate evaluation of the proximal mappings. For iterates $X^k$ with law $\mu^k$ and step sizes $\gamma_k$ as in \cref{algo:iPGLA}, we will refer to the following descent condition 
\begin{equation}\label{eq:descent_condition}
    \bbE_{X\sim \mu^k} \left[ F(X-\gamma_k\nabla F(X)) - F(X) + \frac{\gamma_k}{2} \norm{\nabla F(X)}^2 \right] \le 0,
\end{equation}
which is ensured, e.g., by choosing $\gamma_k \le L^{-1}$.

\begin{theorem}\label{thm:nonasymptotic_result_type2error}
Let \cref{assumption1,assumption2,assumption3} be satisfied and let $(X^k)_k$ be generated by \cref{algo:iPGLA}. If the step size $\gamma_k$ satisfies \eqref{eq:descent_condition}, then for any $\theta \in [0,1)$ it holds
\begin{align*}
    \calW_2^2(\mu^{k+1},\mu^\ast) \le (1-\lambda_F \gamma_k) &\calW_2^2(\mu^k, \mu^\ast) - \gamma_k (\theta \lambda_{G^\ast} + \gamma_k) \calW_2^2(\nu^{k+1},\nu^\ast) \\
    &- 2\gamma_k \left( \calL(\mu^{k+2/3}, \psi^\ast) - \calL(\mu^\ast, \psi^{k+1}) \right) + \gamma_{k}^2\tilde{C} +  \frac{2\gamma_k \epsilon_k}{1-\theta},
\end{align*}
where $\tilde{C} = 2Ld + \int \norm{\nabla G}^2 \rmd \mu^\ast(x) < \infty$.
\end{theorem}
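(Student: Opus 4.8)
The plan is to run a synchronous coupling argument through the three sub-steps $X^k\to X^{k+1/3}\to X^{k+2/3}\to X^{k+1}$, bounding the evolution of the squared transport cost to a common reference sample $X^\ast\sim\mu^\ast$ at each stage and reassembling the pieces into the duality gap $\calD(\mu^{k+2/3},\psi^{k+1})$. Throughout I write $Y^\ast=\psi^\ast(X^\ast)=\nabla G(X^\ast)$ and $Y^{k+1}=\psi^{k+1}(X^\ast)$, so that $(X^\ast,Y^\ast)$ and $(X^\ast,Y^{k+1})$ realise couplings of $(\mu^\ast,\nu^\ast)$ and $(\mu^\ast,\nu^{k+1})$, and I record the three facts used at the end: $\calW_2^2(\mu^{k+1},\mu^\ast)\le\bbE\norm{X^{k+1}-X^\ast}^2$, $\calW_2^2(\nu^{k+1},\nu^\ast)\le\bbE\norm{Y^{k+1}-Y^\ast}^2$, and $\int\norm{\nabla G}^2\rmd\mu^\ast=\bbE\norm{Y^\ast}^2<\infty$ (\cref{assumption3}). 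For the forward step I take the optimal coupling $X^k=T_{\mu^\ast\to\mu^k}(X^\ast)$, expand $\norm{X^{k+1/3}-X^\ast}^2$ with $X^{k+1/3}=X^k-\gamma_k\nabla F(X^k)$, and apply $\lambda_F$-strong convexity to $\inpro{\nabla F(X^k)}{X^k-X^\ast}$; this produces the factor $(1-\lambda_F\gamma_k)$ on $\calW_2^2(\mu^k,\mu^\ast)$, an energy term $-2\gamma_k(\calE_F(\mu^{k+1/3})-\calE_F(\mu^\ast))$ after the descent condition \eqref{eq:descent_condition} absorbs the residual $\gamma_k^2\bbE\norm{\nabla F(X^k)}^2$. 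A final $L$-Lipschitz estimate across the independent, zero-mean diffusion increment trades $\calE_F(\mu^{k+1/3})$ for $\calE_F(\mu^{k+2/3})$ at the cost of exactly $L\gamma_k d$, contributing the $2Ld\gamma_k^2$ half of $\tilde{C}$.

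For the diffusion step I use that $X^{k+2/3}=X^{k+1/3}+\sqrt{2\gamma_k}\,\xi^k$ is one time-$\gamma_k$ step of the heat flow, i.e. the Wasserstein gradient flow of the negative entropy $\calH$. Because $\calH$ is displacement convex, the integrated evolution variational inequality for the heat semigroup gives $\calW_2^2(\mu^{k+2/3},\mu^\ast)\le\calW_2^2(\mu^{k+1/3},\mu^\ast)-2\gamma_k(\calH(\mu^{k+2/3})-\calH(\mu^\ast))$, which is coupling-free and therefore chains cleanly onto the forward bound, yielding the $-2\gamma_k(\calH+\calE_F)$ content of the target duality gap.

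The backward step is where inexactness enters. I re-anchor at the optimal map $X^{k+2/3}=T_{\mu^\ast\to\mu^{k+2/3}}(X^\ast)$ — legitimate since the two previous stages were phrased purely through $\calW_2$- and energy-functionals — so that $\bbE\norm{X^{k+2/3}-X^\ast}^2=\calW_2^2(\mu^{k+2/3},\mu^\ast)$ and $Y^{k+1}=(X^{k+2/3}-X^{k+1})/\gamma_k$ matches $\psi^{k+1}(X^\ast)$. Writing $X^{k+1}=X^{k+2/3}-\gamma_k Y^{k+1}$ and expanding gives $\bbE\norm{X^{k+1}-X^\ast}^2=\calW_2^2(\mu^{k+2/3},\mu^\ast)-2\gamma_k\bbE\inpro{Y^{k+1}}{X^{k+1}-X^\ast}-\gamma_k^2\bbE\norm{Y^{k+1}}^2$. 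The relation $Y^{k+1}\in\partial_{\epsilon_k}G(X^{k+1})$ is turned by \cref{lem:convex_duality_epsilon_subdifferential} into $X^{k+1}\in\partial_{\epsilon_k}G^\ast(Y^{k+1})$, and \cref{lem:strong_convexity_epsilon_subdifferential} applied to the $\lambda_{G^\ast}$-strongly convex $G^\ast$ at $Y^{k+1}$ tested against $Y^\ast$ furnishes, in one stroke, the bound on $\calE_{G^\ast}(\nu^{k+1})-\calE_{G^\ast}(\nu^\ast)$, the quadratic $\tfrac{\theta\lambda_{G^\ast}}{2}\bbE\norm{Y^{k+1}-Y^\ast}^2$, and the sole error contribution $\epsilon_k/(1-\theta)$ per sample. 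Feeding this into the $G$-part of $\calD$ and using $X^{k+2/3}-X^{k+1}=\gamma_k Y^{k+1}$, the two inner-product terms $\mp 2\gamma_k\bbE\inpro{Y^{k+1}}{X^{k+1}-X^\ast}$ cancel exactly, leaving the purely dual quadratic $-\gamma_k^2\bbE\norm{Y^{k+1}}^2+2\gamma_k^2\bbE\inpro{Y^{k+1}}{Y^\ast}=\gamma_k^2(\bbE\norm{Y^\ast}^2-\bbE\norm{Y^{k+1}-Y^\ast}^2)$.

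Assembling the stages, the entropy and $F$-energies combine into $-2\gamma_k(\calL(\mu^{k+2/3},\psi^\ast)-\calL(\mu^\ast,\psi^{k+1}))$, the per-sample errors into $2\gamma_k\epsilon_k/(1-\theta)$, and the strong-convexity and dual-quadratic terms into $-\gamma_k(\theta\lambda_{G^\ast}+\gamma_k)\bbE\norm{Y^{k+1}-Y^\ast}^2+\gamma_k^2\bbE\norm{Y^\ast}^2$; passing to $\calW_2$ via $\calW_2^2(\nu^{k+1},\nu^\ast)\le\bbE\norm{Y^{k+1}-Y^\ast}^2$ (favourable, as this term carries a minus sign) and $\bbE\norm{Y^\ast}^2=\int\norm{\nabla G}^2\rmd\mu^\ast$ closes $\tilde{C}$ and yields the claim. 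I expect the main obstacle to be the diffusion stage: justifying the entropy evolution variational inequality rigorously under \cref{assumption2} (finiteness of $\calH$ along the flow, applicability of displacement convexity) and verifying that the re-anchoring of the coupling between the diffusion and proximal stages is valid — this is precisely what allows the coupling-free heat-flow estimate to meet the sample-wise $\epsilon$-subdifferential inequalities. The second sensitive point is the backward bookkeeping: the exact cancellation of the $\inpro{Y^{k+1}}{X^{k+1}-X^\ast}$ terms and the appearance of $\epsilon_k/(1-\theta)$ from a single use of \cref{lem:strong_convexity_epsilon_subdifferential}, rather than an accumulation, is what keeps the inexactness penalty linear in $\epsilon_k$.
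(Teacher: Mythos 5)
Your proposal is correct and follows essentially the same route as the paper: the four ingredients you describe (strong convexity plus the descent condition for the forward step, the $Ld\gamma_k$ trade of $\calE_F$ across the Gaussian increment, the entropy evolution variational inequality for the diffusion step, and the $\epsilon$-subdifferential/convex-duality argument with \cref{lem:strong_convexity_epsilon_subdifferential} applied to $G^\ast$ for the backward step) are precisely \cref{lem:wasserstein_error_F_exact_gradient_step_flexible_stepsize,lem:potential_error_F_entropy_step,lem:wasserstein_error_entropy_step,lem:wasserstein_error_inexact_prox_step}, which the paper likewise sums to obtain the claim. Your bookkeeping of the inner-product cancellation, the dual quadratic $\gamma_k^2(\bbE\norm{Y^\ast}^2-\bbE\norm{Y^{k+1}-Y^\ast}^2)$, and the single $\epsilon_k/(1-\theta)$ penalty matches the paper's computation exactly.
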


In order to prove the theorem, we need some auxiliary results.

\begin{lemma}\label{lem:wasserstein_error_inexact_prox_step}
Let \cref{assumption1,assumption2,assumption3} be satisfied. For any $\gamma_k > 0$, $\theta \in [0,1)$ and any $\mu \in \calP_2(\calX)$ and $\mu$-measurable function $\psi:\calX \to \calX$ it holds
\begin{equation*}
\begin{split}
    \calW_2^2(\mu^{k+1}, \mu) \le \calW_2^2(\mu^{k+2/3}, \mu) - \gamma_k (\theta\lambda_{G^\ast} + \gamma_k) \calW_2^2(\nu^{k+1},\psi\#\mu) + \gamma_k^2 \bbE_{Y\sim \psi\#\mu} \left[\norm{Y}^2 \right] \\
    - 2\gamma_k \left( \bbE \left[ \inpro{X^{k+2/3}}{Y} \right] - \bbE \left[ \inpro{X}{Y^{k+1}}\right] + \calE_{G^\ast}(\nu^{k+1}) - \calE_{G^\ast}(\psi\#\mu) \right) + 2\gamma_k \frac{\epsilon_{k}}{1-\theta},
\end{split}
\end{equation*}
where the expectations in the second line are over the variables with the joint distributions $(X^{k+2/3},Y) \sim  (T_{\mu\to\mu^{k+2/3}},\psi)\#\mu$ and $(X,Y^{k+1}) \sim (I, (I - S_k)/{\gamma_k} \circ T_{\mu \to \mu^{k+2/3}})\#\mu $, respectively.
\end{lemma}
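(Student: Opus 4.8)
The plan is to estimate $\calW_2^2(\mu^{k+1},\mu)$ through one explicit coupling obtained by driving every relevant variable from a single source $Z\sim\mu$. Concretely I would set $X:=Z$, $X^{k+2/3}:=T_{\mu\to\mu^{k+2/3}}(Z)$ (which has law $\mu^{k+2/3}$ and realises the optimal plan between $\mu$ and $\mu^{k+2/3}$), $Y:=\psi(Z)\sim\psi\#\mu$, and then $Y^{k+1}:=\gamma_k^{-1}(I-S_k)(X^{k+2/3})\sim\nu^{k+1}$ and $X^{k+1}:=S_k(X^{k+2/3})=X^{k+2/3}-\gamma_k Y^{k+1}\sim\mu^{k+1}$. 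A first consistency check is that the two joint laws appearing in the statement are precisely the pushforwards of this single coupling, so all expectations are taken under it. Starting from $\calW_2^2(\mu^{k+1},\mu)\le\bbE[\norm{X^{k+1}-X}^2]$ and inserting $X^{k+1}=X^{k+2/3}-\gamma_k Y^{k+1}$, the square expands into $\bbE[\norm{X^{k+2/3}-X}^2]$, the cross term $-2\gamma_k\bbE[\inpro{X^{k+2/3}-X}{Y^{k+1}}]$, and $\gamma_k^2\bbE[\norm{Y^{k+1}}^2]$; by optimality of the transport map the first summand equals $\calW_2^2(\mu^{k+2/3},\mu)$, already delivering the leading term and the $+2\gamma_k\bbE[\inpro{X}{Y^{k+1}}]$ contribution.

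The crux is to convert the remaining $X^{k+2/3}$- and $Y^{k+1}$-terms into the $G^\ast$ energies, the dual Wasserstein term and the error. The inexact proximal characterisation \eqref{eq:dfn-inexact-subgradient} gives $Y^{k+1}=\gamma_k^{-1}(X^{k+2/3}-X^{k+1})\in\partial_{\epsilon_k}G(X^{k+1})$, whence by the $\epsilon$-subdifferential duality \cref{lem:convex_duality_epsilon_subdifferential} the dual membership $X^{k+1}\in\partial_{\epsilon_k}G^\ast(Y^{k+1})$ follows. Since $G^\ast$ is $\lambda_{G^\ast}$-strongly convex, applying \cref{lem:strong_convexity_epsilon_subdifferential} to $G^\ast$ with $u=Y^{k+1}$, $p=X^{k+1}$, $v=Y$ and taking expectations under the coupling yields
\begin{equation*}
\calE_{G^\ast}(\psi\#\mu)\ge\calE_{G^\ast}(\nu^{k+1})+\bbE[\inpro{X^{k+1}}{Y-Y^{k+1}}]+\tfrac{\theta\lambda_{G^\ast}}{2}\bbE[\norm{Y^{k+1}-Y}^2]-\tfrac{\epsilon_k}{1-\theta}.
\end{equation*}
Producing this single inequality is the main obstacle: it is where both the dual strong-convexity gain $\theta\lambda_{G^\ast}$ and the inexactness penalty $\epsilon_k/(1-\theta)$ enter, and it is essential that strong convexity is invoked on the dual side $G^\ast$, since $G$ itself need not be strongly convex under \cref{assumption1}.

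It then remains to assemble the pieces. Multiplying the displayed inequality by $2\gamma_k$ and adding the resulting nonpositive quantity to the expansion preserves the bound; substituting $X^{k+1}=X^{k+2/3}-\gamma_k Y^{k+1}$ everywhere and expanding $\bbE[\norm{Y^{k+1}-Y}^2]$ leaves only routine bookkeeping, in which all mixed terms in $X^{k+2/3}$, $Y^{k+1}$ and $Y$ cancel and the recombination of quadratics converts $\gamma_k^2\bbE[\norm{Y^{k+1}}^2]$ into $\gamma_k^2\bbE[\norm{Y}^2]$ together with the coefficient $-\gamma_k(\theta\lambda_{G^\ast}+\gamma_k)$ on $\bbE[\norm{Y^{k+1}-Y}^2]$. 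This gives exactly the claimed right-hand side but with $\bbE[\norm{Y^{k+1}-Y}^2]$ in place of $\calW_2^2(\nu^{k+1},\psi\#\mu)$. Finally, since $(Y^{k+1},Y)$ is an admissible coupling of $(\nu^{k+1},\psi\#\mu)$ we have $\calW_2^2(\nu^{k+1},\psi\#\mu)\le\bbE[\norm{Y^{k+1}-Y}^2]$, and because this appears with the nonpositive factor $-\gamma_k(\theta\lambda_{G^\ast}+\gamma_k)$, replacing the expectation by the smaller Wasserstein distance only enlarges the bound, yielding the stated estimate. I would pay particular attention to the sign of this coefficient, as it is precisely what legitimises the final replacement.
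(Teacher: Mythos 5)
Your proposal is correct and follows essentially the same route as the paper's proof: the same Hilbert-space expansion of $\norm{X^{k+1}-X}^2$ (written via $X^{k+1}=X^{k+2/3}-\gamma_k Y^{k+1}$ rather than the paper's pointwise identity, but algebraically identical), the same passage to the dual membership $X^{k+1}\in\partial_{\epsilon_k}G^\ast(Y^{k+1})$ via \cref{lem:convex_duality_epsilon_subdifferential}, the same application of \cref{lem:strong_convexity_epsilon_subdifferential} to the $\lambda_{G^\ast}$-strongly convex $G^\ast$, and the same final replacement of $\bbE[\norm{Y^{k+1}-Y}^2]$ by $\calW_2^2(\nu^{k+1},\psi\#\mu)$ justified by the nonpositive coefficient. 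The only cosmetic difference is that you fix the coupling at the outset and argue in expectation throughout, whereas the paper argues pointwise and couples at the end.
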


\begin{proof}
The proof is a modification from Lemma 4 in \cite{Salim2020}, where instead of the standard convexity inequality for subgradients, we use  \cref{lem:strong_convexity_epsilon_subdifferential} for inexact subgradients.
Let $\bar{x} \in \calX$ be fixed, $\hat{x}$ the $\epsilon_k$-inexact proximal point $\hat{x} = S_k(\bar{x}) \approx^{\epsilon_k} \prox_{\gamma_k G}(\bar{x})$. Using \cref{lem:convex_duality_epsilon_subdifferential} to rewrite the condition for the inexact proximal mapping gives
$$ \hat{y} := \frac{\bar{x}-\hat{x}}{\gamma_k} \in \partial_{\epsilon_k} G(\hat{x}) \ \Leftrightarrow\ \hat{x} \in \partial_{\epsilon_k} G^\ast(\hat{y}). $$
Since $\calX$ is a Hilbert space, for any $x\in\calX$ it holds
\begin{equation}\label{inexactProxLemmaNormDecomposition}
    \norm{\hat{x}-x}^2 = \norm{\bar{x}-x}^2 - \norm{\hat{x}-\bar{x}}^2 + 2 \inpro{\hat{x}-\bar{x}}{\hat{x}-x}.
\end{equation}
In order to reformulate the last term, we can apply \cref{lem:strong_convexity_epsilon_subdifferential} using the (strong) convexity of $G^\ast$ so that for any $x,y \in \calX$ and any $\theta \in [0,1)$ we obtain
\begin{align*}
    \inpro{\hat{x}-\bar{x}}{\hat{x}-x} &= \gamma_k \inpro{\hat{y}}{x} - \gamma_k \inpro{\hat{y}}{\hat{x}} \\
    &= \gamma_k \inpro{\hat{y}}{x} - \gamma_k \inpro{\hat{x}}{y} + \gamma_k \inpro{\hat{x}}{y-\hat{y}}\\
    &\overset{\text{}}{}\le \gamma_k \inpro{\hat{y}}{x} - \gamma_k \inpro{\hat{x}}{y} + \gamma_k \left( G^\ast(y) - G^\ast(\hat{y}) - \frac{\theta \lambda_{G^\ast}}{2} \norm{\hat{y}- y}^2 + \frac{\epsilon_{k}}{1-\theta} \right).
\end{align*}
Plugging this into \eqref{inexactProxLemmaNormDecomposition} gives 
\begin{align*}
    \norm{\hat{x}-x}^2 &\le \norm{\bar{x}-x}^2 - \norm{\hat{x}-\bar{x}}^2 \\
    &\qquad + 2 \gamma_k \left( \inpro{\hat{y}}{x} - \inpro{\hat{x}}{y} + G^\ast(y) - G^\ast(\hat{y}) - \frac{\theta \lambda_{G^\ast}}{2} \norm{\hat{y}- y}^2 + \frac{\epsilon_{k}}{1-\theta} \right)\\
    &= \norm{\bar{x}-x}^2 - \gamma_k^2 \norm{\hat{y}}^2 \\
    &\qquad +2\gamma_k \left( \inpro{\hat{y}}{x} - \inpro{\bar{x}}{y} + \gamma_k \inpro{\hat{y}}{y} + G^\ast(y) - G^\ast(\hat{y}) - \frac{\theta\lambda_{G^\ast}}{2} \norm{\hat{y}- y}^2 + \frac{\epsilon_{k}}{1-\theta} \right)\\
    &= \norm{\bar{x}-x}^2 - \gamma_k(\theta \lambda_{G^\ast} + \gamma_k) \norm{\hat{y}-y}^2 + \gamma_k^2 \norm{y}^2 \\
    & \qquad + 2\gamma_k \left( \inpro{\hat{y}}{x} - \inpro{\bar{x}}{y} + G^\ast(y) - G^\ast(\hat{y}) + \frac{\epsilon_{k}}{1-\theta} \right),
\end{align*}
where we used the definition of $\hat y$ in the first equality. 
Assume now that the variables $x,\bar x,y$ are marginally distributed with laws $x\sim\mu$, $ \bar{x} \sim \mu^{k+2/3} $ and $y \sim \nu = \psi\#\mu$ respectively. This implies that $\hat{x} \sim S_k \# \mu^{k+2/3} = \mu^{k+1}$ and $\hat{y} \sim \frac{I-S_k}{\gamma_k} \# \mu^{k+2/3} = \nu^{k+1} $. With the definition of the Wasserstein distance, the last inequality implies
\begin{align*}
    \calW_2^2(\mu^{k+1},\mu) &\le \bbE\left[\norm{\bar{x}-x}^2\right] - \gamma_k(\theta \lambda_{G^\ast} + \gamma_k) \calW_2^2(\nu^{k+1},\nu) + \gamma_k^2 \bbE\left[\norm{y}^2\right]\\
    &\qquad + 2\gamma_k \left( \bbE\left[\inpro{\hat{y}}{x}\right] - \bbE\left[\inpro{\bar{x}}{y}\right] + \calE_{G^\ast}(\nu) - \calE_{G^\ast}(\nu^{k+1}) + \frac{\epsilon_{k}}{1-\theta} \right),
\end{align*}
Since we only assumed the marginal distributions of $x,y,\bar x$, this holds for any coupling of these variables. In particular we can consider $x,\bar x$ to be optimally coupled in the sense that their joint distribution realizes the Wasserstein distance, i.e. $\calW_2^2(\mu^{k+2/3}, \mu) = \bbE[\norm{\bar{x}-x}^2]$. The coupling can be written in terms of a transport map \cite{Villani2003} which we denote by $T_{\mu\to\mu^{k+2/3}}$ so that $(x,\bar x) \sim (I,T_{\mu\to\mu^{k+2/3}}) \# \mu$. The remaining variables are then coupled to $x$ by $\hat x = (S_k \circ T_{\mu\to\mu^{k+2/3}})(x) = $ and $\hat{y} = (\frac{I-S_k}{\gamma_k}\circ T_{\mu\to\mu^{k+2/3}})(x) = \psi^{k+1}(x)$. Letting further $y = \psi(x)$, the last inequality for this particular coupling of $x,\bar x, y$ is just the desired result.
\end{proof}

We recall three Lemmata from the analysis in \cite{Durmus2019a}, their respective Lemmata 3, 5 and 28. The first bounds the development of the entropy along the solutions.
\begin{lemma}\label{lem:wasserstein_error_entropy_step}
For any $\gamma_k > 0$, the entropy values of the iterates satisfy
$$ \calH(\mu^{k+2/3}) - \calH(\mu^\ast) \le \frac{1}{2\gamma_k}\left( \calW_2^2(\mu^{k+1/3}, \mu^\ast) - \calW_2^2(\mu^{k+2/3},\mu^\ast) \right). $$
\end{lemma}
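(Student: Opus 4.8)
The plan is to exploit the fact that the middle substep of the iteration, namely $X^{k+2/3} = X^{k+1/3} + \sqrt{2\gamma_k}\,\xi^k$ with $\xi^k \sim \rmN(0,I_d)$, is exactly one realisation of the driftless Langevin diffusion $\rmd X_t = \sqrt{2}\,\rmd W_t$ run for time $\gamma_k$. At the level of laws this means that the curve $t \mapsto \mu_t$ with $\mu_0 = \mu^{k+1/3}$ and $\mu_{\gamma_k} = \mu^{k+2/3}$ solves the heat equation $\partial_t \mu_t = \Delta \mu_t$, which by the seminal result of \cite{Jordan1998} is precisely the gradient flow of the negative entropy $\calH$ in the metric space $(\calP_2(\calX), \calW_2)$. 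The target $\mu^\ast$ plays the role of the reference point against which we measure the contraction.

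First I would invoke that $\calH$ is geodesically (displacement) convex on $(\calP_2(\calX), \calW_2)$. For gradient flows of such convex functionals one has the evolution variational inequality (EVI): for every $\nu \in \calP_2(\calX)$ and almost every $t$,
\begin{equation*}
    \frac12 \rmderiv{}{t} \calW_2^2(\mu_t, \nu) \le \calH(\nu) - \calH(\mu_t).
\end{equation*}
Specialising to $\nu = \mu^\ast$ gives a differential inequality which we will integrate over $[0,\gamma_k]$.

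Next I would use the monotonicity of entropy along the heat flow: since the entropy dissipation equals the nonnegative Fisher information, $t \mapsto \calH(\mu_t)$ is nonincreasing, so $\calH(\mu_t) \ge \calH(\mu_{\gamma_k}) = \calH(\mu^{k+2/3})$ for all $t \in [0,\gamma_k]$. Substituting this into the EVI yields the uniform bound
\begin{equation*}
    \frac12 \rmderiv{}{t} \calW_2^2(\mu_t, \mu^\ast) \le \calH(\mu^\ast) - \calH(\mu^{k+2/3}).
\end{equation*}
Integrating from $0$ to $\gamma_k$, using $\mu_0 = \mu^{k+1/3}$ and $\mu_{\gamma_k} = \mu^{k+2/3}$, and dividing by $\gamma_k$ rearranges directly into the claimed inequality.

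The final integration and rearrangement are routine; the conceptual content sits entirely in the gradient-flow identification together with the EVI. The main obstacle is therefore the regularity bookkeeping required to make the EVI applicable, i.e.\ ensuring the curve $(\mu_t)$ stays in $\calP_2(\calX)$ with finite entropy. This is automatic here, since convolution with a nondegenerate Gaussian immediately produces a smooth, rapidly decaying density, and $\mu^\ast$ has finite second moment and finite entropy under \cref{assumption1,assumption2}. These facts, together with the displacement convexity of $\calH$, are classical (see \cite{Ambrosio2008,Jordan1998}), so in the write-up I would cite the corresponding statement (Lemma~3 in \cite{Durmus2019a}) rather than reprove the abstract gradient-flow theory.
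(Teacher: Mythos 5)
Your argument is correct and is essentially the proof of the result the paper relies on: the paper does not reprove this lemma but imports it as Lemma~3 of \cite{Durmus2019a}, whose proof is exactly your identification of the diffusion substep with the heat flow, i.e.\ the Wasserstein gradient flow of the displacement-convex functional $\calH$, combined with the evolution variational inequality, the monotonicity of $\calH$ along the flow, and integration over $[0,\gamma_k]$. The regularity caveats you flag (instantaneous smoothing by the Gaussian convolution, finiteness of entropy and second moments) are handled the same way in the cited reference, so nothing is missing.
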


The next Lemma bounds the error in the smooth part $F$ of the potential by the entropy gradient flow step.
\begin{lemma}\label{lem:potential_error_F_entropy_step}
For any $\gamma_k > 0$, we have
$$ \calE_F(\mu^{k+2/3}) - \calE_F(\mu^{k+1/3}) \le L d \gamma_k. $$
\end{lemma}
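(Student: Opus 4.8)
The plan is to exploit that the map from $X^{k+1/3}$ to $X^{k+2/3}$ is nothing but the addition of the independent Gaussian noise $\sqrt{2\gamma_k}\,\xi^k$ with $\xi^k \sim \rmN(0,I_d)$, and then to invoke the standard descent lemma for the $L$-smooth function $F$. Concretely, conditioning on $X^{k+1/3}$ and integrating over the noise, I would write
$$ \calE_F(\mu^{k+2/3}) = \bbE\left[ F\left(X^{k+1/3} + \sqrt{2\gamma_k}\,\xi^k\right) \right], $$
where the outer expectation is over the joint law of $(X^{k+1/3},\xi^k)$, which factorizes since $\xi^k$ is drawn independently.

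Next I would apply the quadratic upper bound coming from \cref{assumption1} (the $L$-Lipschitz continuity of $\nabla F$), namely $F(x+h) \le F(x) + \inpro{\nabla F(x)}{h} + \tfrac{L}{2}\norm{h}^2$, with $x = X^{k+1/3}$ and $h = \sqrt{2\gamma_k}\,\xi^k$. This yields the pointwise estimate
$$ F\left(X^{k+1/3} + \sqrt{2\gamma_k}\,\xi^k\right) \le F(X^{k+1/3}) + \sqrt{2\gamma_k}\,\inpro{\nabla F(X^{k+1/3})}{\xi^k} + L\gamma_k\norm{\xi^k}^2. $$
Taking expectations, the cross term vanishes because $\xi^k$ has mean zero and is independent of $X^{k+1/3}$, so $\bbE[\inpro{\nabla F(X^{k+1/3})}{\xi^k}] = \inpro{\bbE[\nabla F(X^{k+1/3})]}{\bbE[\xi^k]} = 0$. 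The quadratic term gives $L\gamma_k\,\bbE[\norm{\xi^k}^2] = L\gamma_k d$, using that a standard $d$-dimensional Gaussian satisfies $\bbE[\norm{\xi^k}^2] = d$. Combining and recognizing $\bbE[F(X^{k+1/3})] = \calE_F(\mu^{k+1/3})$ delivers the claimed bound.

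There is no genuine obstacle here: the result is a direct consequence of smoothness plus the moment identities of the Gaussian noise, and it mirrors the corresponding step in the exact analysis of \cite{Durmus2019a}. The only points requiring mild care are the independence of $\xi^k$ from $X^{k+1/3}$ (so that the linear term drops out exactly) and the second-moment computation $\bbE[\norm{\xi^k}^2]=d$. Note that no properties of $G$ or of the inexact proximal step enter this lemma, since it concerns only the diffusion substep; the inexactness is handled separately in \cref{lem:wasserstein_error_inexact_prox_step}.
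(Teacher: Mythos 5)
Your proof is correct and is essentially the argument the paper relies on: the paper does not prove this lemma itself but cites Lemma 5 of \cite{Durmus2019a}, whose proof is exactly this combination of the $L$-smoothness descent inequality applied to the increment $\sqrt{2\gamma_k}\,\xi^k$, the independence and zero mean of $\xi^k$ to kill the cross term, and $\bbE[\norm{\xi^k}^2]=d$ for the quadratic term. No issues.
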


The last auxiliary result for \cref{thm:nonasymptotic_result_type2error} quantifies the decrease in potential energy along under a descent condition. Note that this formulation is slightly more general than the typical assumption $\gamma \le L^{-1}$ in \cite{Durmus2019a}.
\begin{lemma}\label{lem:wasserstein_error_F_exact_gradient_step_flexible_stepsize}
If $\gamma_k > 0$ satisfies the descent condition \eqref{eq:descent_condition}, then it holds
$$ 2\gamma_k (\calE_F(\mu^{k+1/3}) - \calE_F(\mu^\ast)) \le (1 - \lambda_F \gamma_k) \calW_2^2(\mu^k, \mu^\ast) - \calW_2^2(\mu^{k+1/3}, \mu^\ast). $$
\end{lemma}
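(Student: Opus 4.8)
The plan is to lift the standard one-step contraction estimate for Euclidean gradient descent on a strongly convex function to the Wasserstein setting via an optimal coupling, and then to use the descent condition \eqref{eq:descent_condition} to control the gradient-variance term. First I would take random variables $X \sim \mu^k$ and $X^\ast \sim \mu^\ast$ coupled optimally, so that $\calW_2^2(\mu^k, \mu^\ast) = \bbE[\norm{X - X^\ast}^2]$, and set $X^{k+1/3} = X - \gamma_k \nabla F(X)$, which has law $\mu^{k+1/3}$. Since $(X^{k+1/3}, X^\ast)$ is a (generally suboptimal) coupling of $\mu^{k+1/3}$ and $\mu^\ast$, the definition of $\calW_2$ as an infimum over couplings gives $\calW_2^2(\mu^{k+1/3}, \mu^\ast) \le \bbE[\norm{X^{k+1/3} - X^\ast}^2]$.

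Next I would expand
$$\norm{X^{k+1/3} - X^\ast}^2 = \norm{X - X^\ast}^2 - 2\gamma_k\inpro{\nabla F(X)}{X - X^\ast} + \gamma_k^2\norm{\nabla F(X)}^2,$$
and bound the cross term by $\lambda_F$-strong convexity of $F$, namely $F(X^\ast) \ge F(X) + \inpro{\nabla F(X)}{X^\ast - X} + \frac{\lambda_F}{2}\norm{X - X^\ast}^2$, which yields $-\inpro{\nabla F(X)}{X - X^\ast} \le F(X^\ast) - F(X) - \frac{\lambda_F}{2}\norm{X - X^\ast}^2$. Substituting, taking expectations over the optimal coupling, and writing $\bbE[F(X)] = \calE_F(\mu^k)$ and $\bbE[F(X^\ast)] = \calE_F(\mu^\ast)$, I obtain
$$\calW_2^2(\mu^{k+1/3}, \mu^\ast) \le (1 - \lambda_F\gamma_k)\calW_2^2(\mu^k, \mu^\ast) + 2\gamma_k\bigl(\calE_F(\mu^\ast) - \calE_F(\mu^k)\bigr) + \gamma_k^2\,\bbE[\norm{\nabla F(X)}^2].$$

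The decisive step is to absorb the gradient-variance term $\gamma_k^2\,\bbE[\norm{\nabla F(X)}^2]$ using \eqref{eq:descent_condition} rather than the cruder hypothesis $\gamma_k \le L^{-1}$. Since $\bbE[F(X - \gamma_k\nabla F(X))] = \calE_F(\mu^{k+1/3})$, the descent condition rearranges to $\gamma_k^2\,\bbE[\norm{\nabla F(X)}^2] \le 2\gamma_k\bigl(\calE_F(\mu^k) - \calE_F(\mu^{k+1/3})\bigr)$; plugging this in cancels the $\calE_F(\mu^k)$ terms and leaves $2\gamma_k\bigl(\calE_F(\mu^\ast) - \calE_F(\mu^{k+1/3})\bigr)$ on the right, after which a simple rearrangement produces the claim. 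I expect the only delicate points to be the coupling argument—justifying that the inherited (non-optimal) plan upper-bounds $\calW_2^2(\mu^{k+1/3}, \mu^\ast)$—and verifying that the expectations of $F$ and of $\norm{\nabla F}^2$ are finite, which follows from the finite second moment of $\mu^\ast$, the $L$-Lipschitz continuity of $\nabla F$, and \cref{assumption1,assumption2}. The algebra itself is routine once the descent condition is inserted at exactly this point.
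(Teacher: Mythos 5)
Your proof is correct and follows essentially the same route as the paper's: a pointwise expansion of $\norm{X-\gamma_k\nabla F(X)-X^\ast}^2$, the $\lambda_F$-strong-convexity bound on the cross term, and the descent condition \eqref{eq:descent_condition} used at exactly the same point to absorb $\gamma_k^2\,\bbE[\norm{\nabla F(X)}^2]$ together with $\calE_F(\mu^k)-\calE_F(\mu^{k+1/3})$. The only cosmetic difference is that you fix the optimal coupling of $(\mu^k,\mu^\ast)$ up front and pass to a suboptimal coupling for $\calW_2^2(\mu^{k+1/3},\mu^\ast)$, whereas the paper derives the inequality for an arbitrary coupling and takes the infimum at the end; both are valid.
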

\begin{proof} The proof is a straightforward adaption of the proof of Lemma 28 in \cite{Durmus2019a}. Let $x,\tilde{x} \in \calX$ arbitrary. Using strong convexity, one obtains
\begin{align*}
    2\gamma_k &\left( F(x-\gamma_k \nabla F (x)) - F(y) \right) \\
    &= 2\gamma_k \left( F(x-\gamma_k \nabla F (x)) - F(x) \right) + 2\gamma_k (F(x) - F(y))\\
    &\le 2\gamma_k \left( F(x-\gamma_k \nabla F (x)) - F(x) \right) + 2\gamma_k \left( \inpro{\nabla F(x)}{x-y} - \frac{\lambda_F}{2} \norm{x-y}^2 \right)\\
    &= (1- \lambda_F \gamma_k) \norm{x-y}^2 - \norm{x-\gamma_k\nabla F(x) - y}^2 + \gamma_k^2 \norm{\nabla F(x)}^2 \\
    &\qquad + 2\gamma_k \left( F(x-\gamma_k \nabla F (x)) - F(x) \right).
\end{align*}
We now let $y \sim \mu^\ast$ and $x\sim \mu^k$, which also implies $x-\gamma_k \nabla F(x) \sim \mu^{k+1/3}$. Adding $\norm{x-\gamma_k\nabla F(x) - y}^2$ on both sides and taking expectations, we obtain
\begin{align*}
    2\gamma_k &\left( \calE_F(\mu^{k+1/3}) - \calE_F(\mu^\ast) \right) + \calW_2^2(\mu^{k+1/3}, \mu^\ast) \\
    &\le \bbE \left[ 2\gamma_k \left( F(x-\gamma_k \nabla F (x)) - F(y) \right) \right] + \bbE \left[ \norm{x-\gamma_k\nabla F(x) - y}^2 \right] \\
    &\le (1- \lambda_F \gamma_k) \bbE\left[ \norm{x-y}^2 \right] + \bbE\left[\gamma_k^2 \norm{\nabla F(x)}^2 + 2\gamma_k (F(x-\gamma_k\nabla F(x)) - F(x)) \right]\\
    &\le (1- \lambda_F \gamma_k) \bbE\left[ \norm{x-y}^2 \right].
\end{align*}
Since the last inequality holds for any couplings of $x\sim \mu^k$ and $y \sim \mu^\ast$, we can take the infimum over all couplings and obtain the desired result.
\end{proof}

We are now able to prove the theorem.
\begin{proof}[Proof of \cref{thm:nonasymptotic_result_type2error}]
Firstly, \cref{lem:wasserstein_error_inexact_prox_step} with the choice $\mu = \mu^\ast$, $\psi = \psi^\ast = \nabla G$ gives
\begin{equation*}
\begin{split}
    \calW(\mu^{k+1}, \mu^\ast) \le \calW_2^2(\mu^{k+2/3}, \mu^\ast) - \gamma_k (\theta\lambda_{G^\ast} + \gamma_k) \calW_2^2(\nu^{k+1},\nu^\ast) + \gamma_k^2 \bbE_{Y\sim \nu^\ast} \left[\norm{Y}^2 \right] \\
    - 2\gamma_k \left( \bbE \left[ \inpro{X^{k+2/3}}{Y^\ast} \right] - \bbE \left[ \inpro{X^\ast}{Y^{k+1}}\right] + \calE_{G^\ast}(\nu^{k+1}) - \calE_{G^\ast}(\nu^\ast) \right) + 2\gamma_k \frac{\epsilon_{k+1}}{1-\theta}
\end{split}
\end{equation*}
with expectations over $(X^{k+2/3},Y^\ast) \sim  (T_{\mu^\ast\to\mu^{k+2/3}},\nabla G)\#\mu^\ast$ and $(X^\ast,Y^{k+1}) \sim (I, \psi^{k+1})\#\mu^\ast $ in the second line.
Next, \cref{lem:wasserstein_error_entropy_step,lem:potential_error_F_entropy_step} give
$$ 2\gamma_k \left(\calH(\mu^{k+2/3}) - \calH(\mu^\ast)\right) \le  \calW_2^2(\mu^{k+1/3}, \mu^\ast) - \calW_2^2(\mu^{k+2/3},\mu^\ast) $$
and
$$ 2\gamma_k \left(\calE_F(\mu^{k+2/3}) - \calE_F(\mu^{k+1/3})\right) \le 2 L d \gamma_k^2. $$
Under the assumption that the descent condition is satisfied we can use \cref{lem:wasserstein_error_F_exact_gradient_step_flexible_stepsize} to get
$$ 2\gamma_k (\calE_F(\mu^{k+1/3}) - \calE_F(\mu^\ast)) \le (1 - \lambda_F \gamma_k) \calW_2^2(\mu^k, \mu^\ast) - \calW_2^2(\mu^{k+1/3}, \mu^\ast). $$
The statement of the theorem is the sum of these four inequalities.
\end{proof}

We now want to iterate the nonasymptotic result in \cref{thm:nonasymptotic_result_type2error} to get a decay rate in Wasserstein distance with a closed form of the bias.

\begin{theorem}[Fixed step size]\label{thm:asymptotic_result_type2error_fixedSteps}
Let \cref{assumption1,assumption2,assumption3} be satisfied and $X^k$ be generated by \cref{algo:iPGLA} with $\gamma_k = \gamma \le L^{-1}$ for all $k$. Then for any $\theta \in [0,1)$, the following statements hold:
\begin{enumerate}[label=(\roman*)]
    \item\label{item:asymptotic_result_fixedSteps_type2error_convexOnly} If $\epsilon_k \le \epsilon$ for all $k$, then 
        \begin{equation}\label{eq:type2_asymptotic_bound_dual_wasserstein}
            \min_{1\le k \le K} \calW_2^2(\nu^k,\nu^\ast) \le \frac{1}{\gamma (\theta \lambda_{G^\ast} + \gamma)K} \calW_2^2(\mu^0,\mu^\ast) + C_{\theta}(\gamma,\epsilon)
        \end{equation}
        where $C_{\theta}(\gamma,\epsilon) = \frac{\tilde{C}\gamma(1-\theta) + 2\epsilon}{(1-\theta)(\theta \lambda_{G^\ast}+\gamma)}$ and
        \begin{equation}\label{eq:type2_asymptotic_bound_duality_gap}
            \min_{1\le k \le K} \calD(\mu^{k-1/3},\psi^k) \le \frac{1}{2 \gamma K} \calW_2^2(\mu^0,\mu^\ast) + \frac{\tilde{C}}{2}\gamma + \epsilon.
        \end{equation}
        with $\tilde{C}$ as in \cref{thm:nonasymptotic_result_type2error}. If further $\lambda_F > 0$, then
        \begin{equation}\label{eq:type2_asymptotic_bound_primal_wasserstein_boundedErrors}
            \calW_2^2(\mu^{K},\mu^\ast) \le (1-\lambda_F \gamma)^K \calW_2^2(\mu^0, \mu^\ast) + \frac{\gamma}{\lambda_F} \tilde{C} +  \frac{2\epsilon}{\lambda_F}.
        \end{equation}
    \item\label{item:asymptotic_result_fixedSteps_type2error_stronglyconvex_monotonousErrors} If $F$ is $\lambda_F$-strongly convex and $(\epsilon_k)_k$ is a monotonically decreasing sequence, then
        \begin{equation}\label{eq:type2_asymptotic_bound_primal_wasserstein_decreasingErrors}
            \calW_2^2(\mu^{K},\mu^\ast) \le (1-\lambda_F \gamma)^K \calW_2^2(\mu^0, \mu^\ast) + \frac{\gamma}{\lambda_F} \tilde{C} +  \frac{2}{K\lambda_F} \sum_{k=0}^{K-1} \epsilon_k.
        \end{equation}
\end{enumerate}
\end{theorem}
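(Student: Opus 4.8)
The plan is to unroll the one-step recursion of \cref{thm:nonasymptotic_result_type2error}, specialized to the constant step size $\gamma_k = \gamma \le L^{-1}$, for which the descent condition \eqref{eq:descent_condition} holds automatically. The starting observation is that the mixed Lagrangian difference appearing there is exactly the duality gap, $\calL(\mu^{k+2/3},\psi^\ast) - \calL(\mu^\ast,\psi^{k+1}) = \calD(\mu^{k+2/3},\psi^{k+1}) \ge 0$ by the saddle point lemma. The right-hand side thus contains three nonnegative quantities entering with a minus sign, namely $\calD$, $\calW_2^2(\nu^{k+1},\nu^\ast)$ and (through $1-\lambda_F\gamma$) a contribution toward $\calW_2^2(\mu^{k+1},\mu^\ast)$; to control any one of them I keep that term and discard the others, which only weakens the inequality. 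Since none of \eqref{eq:type2_asymptotic_bound_duality_gap}, \eqref{eq:type2_asymptotic_bound_primal_wasserstein_boundedErrors}, \eqref{eq:type2_asymptotic_bound_primal_wasserstein_decreasingErrors} depends on $\theta$, I would set $\theta = 0$ there so that the error factor $(1-\theta)^{-1}$ collapses to one; only the dual bound \eqref{eq:type2_asymptotic_bound_dual_wasserstein} genuinely retains the free parameter.

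For the two averaged estimates in part \ref{item:asymptotic_result_fixedSteps_type2error_convexOnly} I proceed by telescoping. For \eqref{eq:type2_asymptotic_bound_dual_wasserstein} I keep the dual Wasserstein term, drop $\calD$, and bound $1-\lambda_F\gamma \le 1$; summing over $k = 0,\dots,K-1$ collapses the primal Wasserstein differences, the nonnegative endpoint $\calW_2^2(\mu^K,\mu^\ast)$ is discarded, and dividing by $\gamma(\theta\lambda_{G^\ast}+\gamma)K$ turns the average into the claimed minimum via $\min \le \text{average}$; a short rearrangement identifies the residual constant with $C_\theta(\gamma,\epsilon)$. For \eqref{eq:type2_asymptotic_bound_duality_gap} I instead retain $\calD$, move it to the left, drop the dual Wasserstein term, take $\theta=0$, telescope, and bound the minimum by the average after dividing by $2\gamma K$; the index shift $j = k+1$ rewrites $\calD(\mu^{k+2/3},\psi^{k+1})$ as $\calD(\mu^{j-1/3},\psi^j)$.

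In the strongly convex case $\lambda_F>0$ I discard both nonpositive terms to obtain the contraction $a_{k+1} \le (1-\lambda_F\gamma)\,a_k + \gamma^2\tilde C + 2\gamma\epsilon_k$ with $a_k := \calW_2^2(\mu^k,\mu^\ast)$. Unrolling yields $a_K \le (1-\lambda_F\gamma)^K a_0 + \gamma^2\tilde C \sum_{j=0}^{K-1}(1-\lambda_F\gamma)^j + 2\gamma\sum_{k=0}^{K-1}(1-\lambda_F\gamma)^{K-1-k}\epsilon_k$. Summing the first geometric series and bounding it by $(\lambda_F\gamma)^{-1}$ produces the $\tfrac{\gamma}{\lambda_F}\tilde C$ term, and for uniformly bounded errors $\epsilon_k \le \epsilon$ the identical bound on the second (constant-weighted) series gives \eqref{eq:type2_asymptotic_bound_primal_wasserstein_boundedErrors}.

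The step I expect to be the crux is the error term in part \ref{item:asymptotic_result_fixedSteps_type2error_stronglyconvex_monotonousErrors}, where I must bound $2\gamma\sum_{k=0}^{K-1}(1-\lambda_F\gamma)^{K-1-k}\epsilon_k$ by $\tfrac{2}{K\lambda_F}\sum_{k=0}^{K-1}\epsilon_k$. The weights $w_k := (1-\lambda_F\gamma)^{K-1-k}$ increase in $k$ while the errors $\epsilon_k$ decrease, so the two sequences are oppositely ordered, and Chebyshev's sum inequality gives $\sum_k w_k \epsilon_k \le \tfrac1K\bigl(\sum_k w_k\bigr)\bigl(\sum_k \epsilon_k\bigr)$. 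Combining this with $\sum_k w_k \le (\lambda_F\gamma)^{-1}$ produces exactly the averaged error term. This monotonicity/rearrangement argument is the only nonroutine ingredient; everything else reduces to telescoping and geometric-series bookkeeping.
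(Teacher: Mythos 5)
Your proposal is correct and follows essentially the same route as the paper: telescoping the one-step bound of \cref{thm:nonasymptotic_result_type2error} with $\min\le$ average for the two estimates in part \ref{item:asymptotic_result_fixedSteps_type2error_convexOnly}, unrolling the contraction with a geometric-series bound for \eqref{eq:type2_asymptotic_bound_primal_wasserstein_boundedErrors}, and Chebyshev's sum inequality on the oppositely ordered weights and errors for part \ref{item:asymptotic_result_fixedSteps_type2error_stronglyconvex_monotonousErrors}. The only cosmetic difference is that the paper keeps both nonnegative terms ($\calD$ and the dual Wasserstein distance) in a single summed inequality and drops one or the other at the end, whereas you discard one of them before summing; the resulting bounds are identical.
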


We give some remarks on the interpretation of these results.
\begin{remark}
    If all parameters on the right hand side of \eqref{eq:type2_asymptotic_bound_dual_wasserstein} are known (or estimated), one can optimize the parameter $\theta$ to arrive at an optimal bound and the number of iterations necessary to achieve a certain level of accuracy in the dual variable. If $\lambda_{G^\ast} = 0$, then the optimal $\theta$ is zero and $\inf_\theta C_{\theta}(\gamma,\epsilon) = \tilde{C} + 2\epsilon/\gamma$. If $\lambda_{G^\ast} > 0$ then for any $\theta > 0$ we have $C_\theta(\gamma,\epsilon) \to 0$ as $\gamma, \epsilon \to 0$.

    Note further that in part \ref{item:asymptotic_result_fixedSteps_type2error_stronglyconvex_monotonousErrors}, the bias induced by the errors $\epsilon_k$ vanishes as $K\to \infty$ if and only if $(\epsilon_k)_k$ converges to 0. In this case, independently of the decay rate of $\epsilon_k$, the bias scales linearly with $\gamma$ in the limit as is the case in the error-free version of the algorithm \cite{Salim2020}.
\end{remark}

\begin{remark}
    The results show that with decreasing error $\epsilon$ in the proximal map, the approximation to the target $\mu^\ast$ gets better. If in practice hyperparameters for $\gamma$, $\epsilon$ and $K$ can be set in advance, this leads to an implicit optimization problem of finding the right parameters since higher accuracy in the proximal points usually comes with increasing computational cost at each iteration. Consider for example that we are in the setting of \eqref{eq:type2_asymptotic_bound_primal_wasserstein_boundedErrors} and we want to set optimal parameters such that $\calW_2^2(\mu^K,\mu^\ast) \le \delta$ for some bound $\delta > 0$. Assuming that the cost of each iteration is dominated by the accuracy $\epsilon$ indicates that there is an optimal $\epsilon$: Making large steps and picking $\epsilon$ as small as possible is suboptimal since then the cost per iteration becomes prohibitively large. On the other hand, choosing $\epsilon$ as large as possible close to $\frac{\lambda_F \delta}{2}$ in order to save on cost per iteration means $\gamma$ has to be small, implying $1-\lambda_F \gamma \approx 1$ so that the necessary number of iterations $K$ becomes large.
\end{remark}

\begin{proof}[Proof of \cref{thm:asymptotic_result_type2error_fixedSteps}]
For part \ref{item:asymptotic_result_fixedSteps_type2error_convexOnly}, we assume only that the errors are bounded by some upper value $\epsilon$. \cref{thm:nonasymptotic_result_type2error} gives
\begin{equation*}
    \gamma (\theta \lambda_{G^\ast} + \gamma) \calW_2^2(\nu^{k+1},\nu^\ast) + 2\gamma \calD(\mu^{k+2/3},\psi^{k+1}) 
    \le \calW_2^2(\mu^k, \mu^\ast) - \calW_2^2(\mu^{k+1},\mu^\ast) + \gamma^2\tilde{C} +  \frac{2\gamma \epsilon}{1-\theta}.
\end{equation*}
Summing over $k=0,\dots,K-1$ gives
\begin{equation*}
    \sum_{k=1}^{K}\gamma (\theta \lambda_{G^\ast} + \gamma) \calW_2^2(\nu^k,\nu^\ast) + 2\gamma\sum_{k=1}^K \calD(\mu^{k-1/3},\psi^k) 
    \le \calW_2^2(\mu^0, \mu^\ast) - \calW_2^2(\mu^K,\mu^\ast) + K \gamma^2\tilde{C} + \frac{2 K \gamma \epsilon}{1-\theta}.
\end{equation*}
This implies
\begin{equation*}
    \gamma (\theta \lambda_{G^\ast} + \gamma) \min_{k=1,\dots,K} \calW_2^2(\nu^j,\nu^\ast) + 2\gamma \min_{k=1,\dots,K} \calD(\mu^{k-1/3},\psi^k) \le \frac{1}{K} \calW_2^2(\mu^0, \mu^\ast) + \gamma^2\tilde{C} + \frac{2 \gamma \epsilon}{1-\theta},
\end{equation*}
which in turn directly implies both \eqref{eq:type2_asymptotic_bound_dual_wasserstein} and, for $\theta=0$, \eqref{eq:type2_asymptotic_bound_duality_gap}. If $F$ is $\lambda_F$-strongly convex and $\gamma_k = \gamma > 0$ fixed then \cref{thm:nonasymptotic_result_type2error} with $\theta = 0$ implies
$$ \calW_2^2(\mu^{k+1},\mu^\ast) \le (1-\lambda_F \gamma) \calW_2^2(\mu^k, \mu^\ast) + \gamma^2\tilde{C} + 2\gamma \epsilon_{k} $$
By iterating, we get
\begin{equation}\label{eqn:inequalityproof_asymptoticbias_strongconvexcase}
    \calW_2^2(\mu^{K},\mu^\ast) \le (1-\lambda_F \gamma)^K \calW_2^2(\mu^0, \mu^\ast) + \sum_{k=0}^{K-1} (1-\lambda_F \gamma)^k \left(\gamma^2 C + 2 \gamma \epsilon_{K-1-k}\right).
\end{equation}
The last part of \ref{item:asymptotic_result_fixedSteps_type2error_convexOnly} follows from \eqref{eqn:inequalityproof_asymptoticbias_strongconvexcase} by using $\epsilon_k \le \epsilon$ and $$\sum_{k=0}^{K-1} (1-\lambda_F \gamma)^k = \frac{1-(1-\lambda_F \gamma)^K}{\lambda_F \gamma} \le \frac{1}{\lambda_F \gamma}.$$

For \ref{item:asymptotic_result_fixedSteps_type2error_stronglyconvex_monotonousErrors}, if $\epsilon_k$ is assumed to be monotonically decreasing we can use Chebyshev's inequality to bound the right side of \eqref{eqn:inequalityproof_asymptoticbias_strongconvexcase} and obtain
\begin{align*}
    \calW_2^2(\mu^{K},\mu^\ast) &\le (1-\lambda_F \gamma)^K \calW_2^2(\mu^0, \mu^\ast) + \frac{\gamma}{\lambda_F} C + 2 \gamma \sum_{k=0}^{K-1} (1-\lambda_F \gamma)^k \epsilon_{K-1-k}\\
    &\le (1-\lambda_F \gamma)^K \calW_2^2(\mu^0, \mu^\ast) + \frac{\gamma}{\lambda_F} C +  \frac{2}{K\lambda_F} \sum_{k=0}^{K-1} \epsilon_{k}.
\end{align*}
\end{proof}

In the same style as for other stochastic optimization and sampling algorithms, we can decrease the step sizes during the iteration at a certain rate to successively reduce the bias. The following theorem shows that in the case of strongly convex $F$ and with the right choice of step sizes, the only condition for convergence to the target distribution is that the errors $\epsilon_k$ decay monotonically to zero.

\begin{theorem}[Decaying step sizes]\label{thm:asymptotic_result_type2error_decaying_stepsizes}
Let \cref{assumption1,assumption2,assumption3} be satisfied with $\lambda_F > 0$ and assume the step sizes $\gamma_k$ satisfy all of the following:
\begin{enumerate}
    \item $\gamma_0 = L^{-1}$,
    \item $\gamma_{k-1} \ge \gamma_k \ge \frac{\gamma_{k-1}}{1+\lambda_F}$ for all $k \ge 1$ and
    \item $\lim_{k\to\infty} \gamma_K = 0$, the sum of all step sizes $\sum_{k=0}^{K-1} \gamma_k$ diverges to $\infty$ while the sums $$A_K := \sum_{k=0}^{K-1} \gamma_k \prod_{l=k+1}^{K-1} (1-\lambda_F \gamma_l)$$ remain bounded, i.e. $A_K \le M < \infty$ as $K\to \infty$.
\end{enumerate}
Then, if the errors $\epsilon_k$ are monotonically decreasing with $\epsilon_k \to 0$, we obtain $\calW(\mu^K,\mu^\ast)\to 0$ as $K\to \infty$.
\end{theorem}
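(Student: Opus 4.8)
The plan is to reduce the statement to the one-step recursion underlying the fixed-step case and then unroll it with variable coefficients, controlling the two resulting contributions separately. Setting $\theta = 0$ in \cref{thm:nonasymptotic_result_type2error} and discarding the two nonnegative terms $\gamma_k^2\calW_2^2(\nu^{k+1},\nu^\ast) \ge 0$ and $2\gamma_k\calD(\mu^{k+2/3},\psi^{k+1}) \ge 0$ (the latter being nonnegative by the saddle point lemma), I obtain the contraction
\begin{equation*}
    \calW_2^2(\mu^{k+1},\mu^\ast) \le (1-\lambda_F\gamma_k)\calW_2^2(\mu^k,\mu^\ast) + \tilde{C}\gamma_k^2 + 2\gamma_k\epsilon_k.
\end{equation*}
This is legitimate at every step because $\gamma_0 = L^{-1}$ together with the monotone decrease $\gamma_k \le \gamma_{k-1}$ from hypothesis (2) guarantees $\gamma_k \le L^{-1}$, so the descent condition \eqref{eq:descent_condition} holds throughout; moreover $\lambda_F \le L$ gives $1-\lambda_F\gamma_k \in [0,1)$.

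Next I would unroll this recursion from $k=0$ to $K-1$, which yields
\begin{equation*}
    \calW_2^2(\mu^K,\mu^\ast) \le \Big(\prod_{k=0}^{K-1}(1-\lambda_F\gamma_k)\Big)\calW_2^2(\mu^0,\mu^\ast) + \sum_{k=0}^{K-1}\Big(\prod_{l=k+1}^{K-1}(1-\lambda_F\gamma_l)\Big)\big(\tilde{C}\gamma_k^2 + 2\gamma_k\epsilon_k\big).
\end{equation*}
The homogeneous term is easy: using $1-\lambda_F\gamma_k \le \exp(-\lambda_F\gamma_k)$ the product is bounded by $\exp(-\lambda_F\sum_{k=0}^{K-1}\gamma_k)$, which tends to $0$ since $\sum_k\gamma_k$ diverges by hypothesis (3). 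It remains to show that the inhomogeneous sum vanishes as $K\to\infty$.

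For the inhomogeneous term I would introduce the nonnegative weights $a_{K,k} := \gamma_k\prod_{l=k+1}^{K-1}(1-\lambda_F\gamma_l)$ and rewrite the sum as $\tilde{C}\sum_{k=0}^{K-1}a_{K,k}\gamma_k + 2\sum_{k=0}^{K-1}a_{K,k}\epsilon_k$. By definition $\sum_{k=0}^{K-1}a_{K,k} = A_K \le M$, so each of these is a weighted average, with uniformly bounded total mass, of a sequence converging to zero ($\gamma_k\to0$ and $\epsilon_k\to0$, respectively). The conclusion then follows from a Toeplitz-type limit lemma: since for every fixed $k$ the tail product $\prod_{l=k+1}^{K-1}(1-\lambda_F\gamma_l)\to 0$ as $K\to\infty$ (again because $\sum_l\gamma_l=\infty$), each weight $a_{K,k}\to0$ pointwise; splitting the sum at a large index $N$ past which $\gamma_k,\epsilon_k$ are small, the tail is controlled by the bounded total mass $M$ while the finitely many head terms vanish as $K\to\infty$. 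Hence both weighted sums tend to $0$, and combining with the homogeneous estimate gives $\calW_2(\mu^K,\mu^\ast)\to0$.

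The main obstacle is this last step: the inhomogeneous term is not a fixed series but a triangular array whose entries depend on $K$, so one cannot simply invoke summability of $\sum_k\gamma_k^2$ or dominated convergence. The essential point is that the three structural hypotheses on the step sizes play complementary roles — $\sum_k\gamma_k=\infty$ drives both the homogeneous factor and each individual weight to zero, $\gamma_k\to0$ and $\epsilon_k\to0$ supply the null sequences, and the bound $A_K\le M$ keeps the total weight finite so that the Toeplitz argument closes. I note that monotonicity of $(\epsilon_k)$ is not actually needed for this qualitative statement (only $\epsilon_k\to0$ is used); it is retained for consistency with \cref{thm:asymptotic_result_type2error_fixedSteps}, whose rate estimate relies on Chebyshev's sum inequality.
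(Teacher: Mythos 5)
Your proof is correct, and for the decisive step it takes a genuinely different route from the paper. Both arguments unroll the one-step recursion identically and kill the homogeneous term via $1-\lambda_F\gamma_k\le\exp(-\lambda_F\gamma_k)$ and the divergence of $\sum_k\gamma_k$. For the inhomogeneous term, the paper observes that $a_k=\gamma_k\prod_{j>k}(1-\lambda_F\gamma_j)$ is monotonically \emph{increasing} in $k$ (this is exactly where the lower bound $\gamma_k\ge\gamma_{k-1}/(1+\lambda_F)$ enters) while $\tilde{C}\gamma_k+2\epsilon_k$ is monotonically decreasing, applies Chebyshev's sum inequality to get the bound $\tfrac{1}{K}\bigl(\sum_k(\tilde{C}\gamma_k+2\epsilon_k)\bigr)A_K$, and finishes with $A_K\le M$ and Ces\`aro convergence of the null sequence. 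You instead treat $\{a_{K,k}\}$ as a triangular array and run a Silverman--Toeplitz argument: bounded total mass $A_K\le M$, pointwise vanishing of each weight (again from $\sum_l\gamma_l=\infty$), and $\tilde{C}\gamma_k+2\epsilon_k\to0$, with the standard head/tail split. Your route is slightly more general --- as you correctly note, it does not use monotonicity of $(\epsilon_k)$, and it also does not use the ratio condition $\gamma_k\ge\gamma_{k-1}/(1+\lambda_F)$ at this step, so it proves the qualitative statement under weaker hypotheses. What the paper's Chebyshev route buys in exchange is an explicit quantitative bound, $\tfrac{M}{K}\sum_{k<K}(\tilde{C}\gamma_k+2\epsilon_k)$, tying the convergence speed to the Ces\`aro means of the step sizes and errors, in the same spirit as the rate in part (ii) of the fixed-step theorem; your Toeplitz argument yields convergence without an explicit rate.
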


\begin{proof}[Proof of \cref{thm:asymptotic_result_type2error_decaying_stepsizes}]
Iterating \cref{thm:nonasymptotic_result_type2error}, we obtain
\begin{equation*}
\calW_2^2(\mu^K,\mu^\ast) \le \prod_{k=0}^{K-1} (1-\lambda_F \gamma_k) \calW_2^2(\mu^0, \mu^\ast) + \sum_{k=0}^{K-1} (\gamma_k^2\tilde{C} + 2\gamma_k \epsilon_{k}) \prod_{j=k+1}^{K-1}(1-\lambda_F \gamma_j),
\end{equation*}
with $\tilde{C} = 2Ld + \int \norm{\nabla G}^2 \rmd \mu^\ast(x) < \infty$. We rewrite this denoting $a_k := \gamma_k \prod_{j=k+1}^{K-1}(1-\lambda_F \gamma_j)$ and using the basic inequality $\log(1-x) \le -x$ to obtain
\begin{equation*}
    \calW_2^2(\mu^K,\mu^\ast) \le \exp\left( \sum_{k=0}^{K-1} \log(1-\lambda_F \gamma_k) \right) \calW_2^2(\mu^0, \mu^\ast) + \sum_{k=0}^{K-1} (\gamma_k\tilde{C} + 2 \epsilon_{k}) a_k
\end{equation*}
The sequence $(\tilde{C}\gamma_k + 2\epsilon_{k})$ is monotonically decreasing by assumption. The sequence $a_k$ is monotonically increasing due to the requirement $\gamma_{k+1} \ge \frac{1}{1+\lambda_F} \gamma_k$, as $a_{k}/a_{k+1} = \frac{\gamma_k (1-\lambda_F \gamma_{k+1})}{\gamma_{k+1}} = \frac{\gamma_k}{\gamma_{k+1}} - \lambda_F \ge 1 $. Hence we can use Chebyshev's sum inequality and obtain 
\begin{equation*}
    \calW_2^2(\mu^K,\mu^\ast) \le \exp\left( - \lambda_F \sum_{k=0}^{K-1} \gamma_k \right) \calW_2^2(\mu^0, \mu^\ast) + \frac{1}{K} \left(\sum_{k=0}^{K-1} (\gamma_k\tilde{C} + 2 \epsilon_{k})\right) \left(\sum_{k=0}^{K-1} a_k\right).
\end{equation*}
The first term on the right hand side vanishes to zero as $K \to \infty$ since we assumed that $\sum_{k=0}^{\infty} \gamma_k = \infty$. The second term converges to zero because by assumption $A_K = \sum_{k=0}^{K-1} a_k$ remains bounded as $K\to \infty$ and $\frac{1}{K} \sum_{k=0}^{K-1} (\gamma_k\tilde{C} + 2 \epsilon_{k}) $ converges to 0.
\end{proof}

Since \cref{thm:asymptotic_result_type2error_decaying_stepsizes} poses several constraints on the step size, we give a simple choice of $\gamma_k$ satisfying all conditions in the following remark.

\begin{remark}\label{rem:stepsize_choice_decaying_W2dist_inexact_PLA}
A choice of step sizes meeting the requirements of \cref{thm:asymptotic_result_type2error_decaying_stepsizes} is $\gamma_0 = L^{-1}$ and $\gamma_k = \min\{\gamma_{k-1},\max\{\frac{C'}{k}, \frac{\gamma_{k-1}}{1+\lambda_F}\}\}$, for any constant $C' \ge \lambda_F^{-1}$. See \cref{fig:decreasing_step_size_choice} for a visualization of this choice. \\
To see that this is valid, note that $\gamma_{k-1} \ge \gamma_k \ge \frac{\gamma_{k-1}}{1+\lambda_F}$ is always satisfied, as well as the descent condition since $\gamma_k \le \gamma_0= L^{-1}$. The term $\frac{C'}{k}$ ensures that the step sizes decay to 0 slowly enough so that $\sum_{k=0}^{K-1} \gamma_k $ diverges, since by construction there exists an $N$ such that $\gamma_k = \frac{C'}{k}$ for all $k > N$. The final condition is also satisfied since for $K>N$ it holds
\begin{align*}
    A_K = \sum_{k=0}^{K-1} \gamma_k \prod_{l=k+1}^{K-1} (1-\lambda_F \gamma_l) &= \sum_{k=0}^{N} \gamma_k \prod_{l=k+1}^{K-1} \underset{\le 1}{\underbrace{(1-\lambda_F \gamma_l)}} + \sum_{k=N+1}^{K-1} \gamma_k \prod_{l=k+1}^{K-1} (1-\lambda_F \gamma_l) \\
    &\le M' + \sum_{k=N+1}^{K-1} \frac{C'}{k} \prod_{l=k+1}^{K-1} \frac{l-\lambda_F C'}{l} \\
    &= M' + \frac{C'}{K-1} \sum_{k=N+1}^{K-1} \prod_{l=k+1}^{K-1} \underset{\le 1}{\underbrace{\frac{l-\lambda_F C'}{l-1}}} \le M' + C' =: M < \infty
\end{align*}
where the constant $M' = \sum_{k=0}^N \gamma_k$ depends only on $N$, not on $K$. The last equality follows from shifting the denominators inside the product by one index.
\end{remark}

\begin{figure}[t]
    \centering
    \hfill%
    \includegraphics[width=0.8\linewidth]{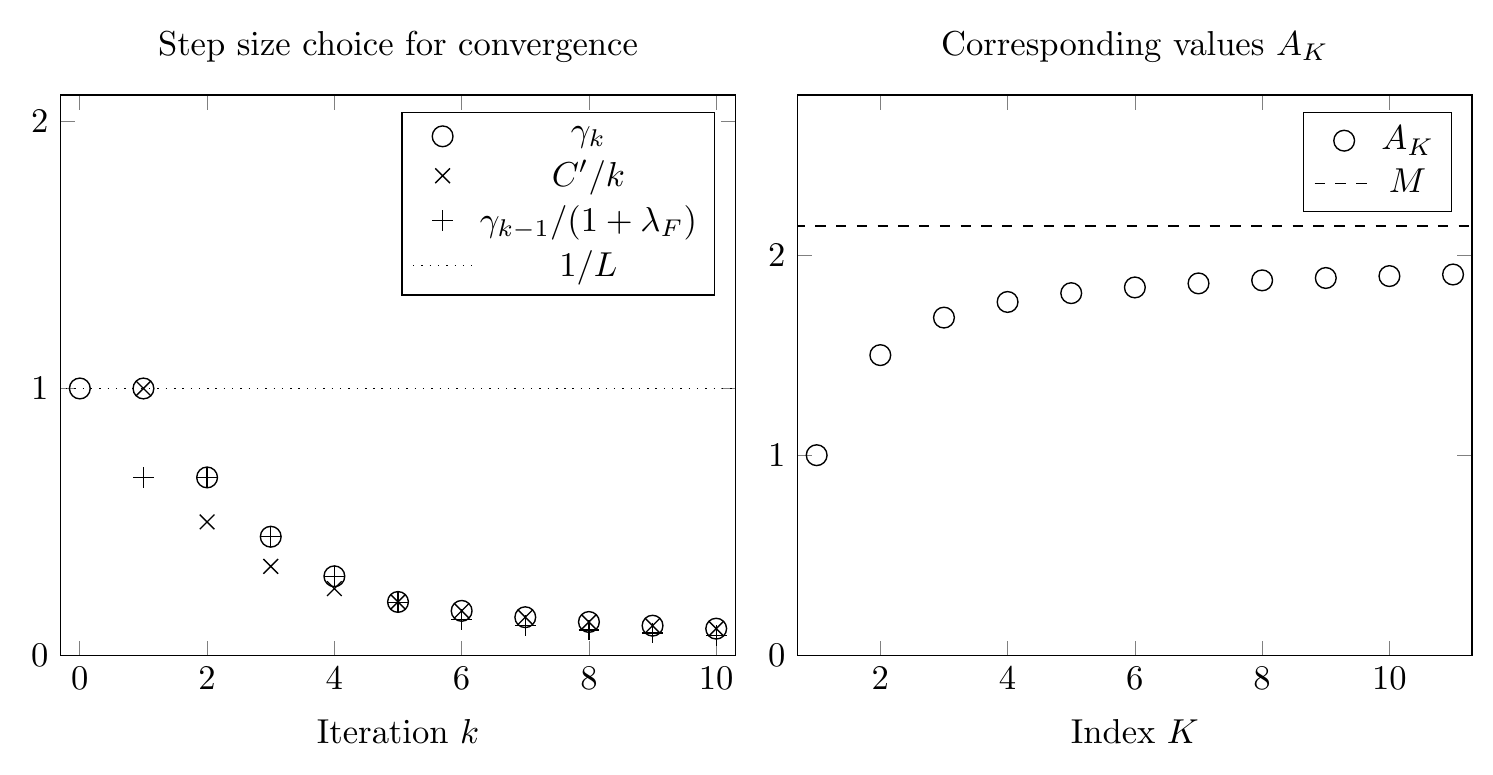}
    \hfill%
    \null%
    \caption{Visualization of the choice of step sizes in \cref{rem:stepsize_choice_decaying_W2dist_inexact_PLA} when $C=1$, $\lambda_F = 0.5$ and $L=1$ (left) and the corresponding values $A_K$ and the upper bound $M=M'+C'$ (right). For $k\ge 5$, it holds $\gamma_k = C'/k$.}
    \label{fig:decreasing_step_size_choice}
\end{figure}

\section{Numerical Results}\label{sec:numerics}
In this section, we validate the theoretical results and apply the sampling algorithm to a range of imaging inverse problems with posterior distributions of the form \eqref{eq:bayes_law}. From the samples, we compute approximations of point estimates like the minimum mean square error (MMSE) and pixelwise standard deviations.

In a first experiment, we validate \cref{thm:asymptotic_result_type2error_fixedSteps} by running the algorithm on a toy example in one dimension in \cref{subsec:1dwasserstein}. For actual images, in \cref{subsec:wavelet_deblurring} we choose a setup in which we know the exact proximal mapping of $G$. This allows us to compare inexact PGLA with its exact special case corresponding to errors $\epsilon=0$ and visualize the additional bias in the stationary distribution due to the errors. In the subsequent experiments we sample from the posterior of typical imaging problems for which the proximal operator of the non-smooth potential $G$ does not have a closed form. We cover an image denoising problem (\cref{subsec:tv_denoising}), an image deblurring problem with Gaussian data and highly ill-conditioned blur kernel (\cref{subsec:tv_deblurring_gauss}) and an image deblurring problem from low count Poisson distributed data (\cref{subsec:tv_deblurring_poisson}). These tests demonstrate the ability of the algorithm to converge to a stationary distribution close to the target, despite the lack of an exactly available proximal mapping. The code for all the experiments is available at \url{https://github.com/lokuger/inexact-proximal-langevin-sampling}.

\subsection{Validation on a Toy Example}\label{subsec:1dwasserstein}
Consider a one dimensional toy example with $F(x) = \frac{1}{2\sigma^2}(x-y)^2$ for all $x\in\bbR$ as the negative log-likelihood of a Gaussian noise distribution for a measurement $y \in \bbR$ and a Laplace distribution as prior with negative log-density $G(x) = \alpha |x|$, $x\in\bbR$. Then the posterior is of the form \eqref{eq:bayes_law} and $F$, $G$ satisfy \cref{assumption1,assumption2,assumption3} with $\lambda_F = L = \sigma^{-2}$ and $\lambda_{G^\ast} = 0$. For this example we can compute that $\tilde C = 1+2\sigma^{-2}$ which allows us to validate the upper bounds from \cref{thm:asymptotic_result_type2error_fixedSteps}

Validating the theoretical bounds like \eqref{eq:type2_asymptotic_bound_primal_wasserstein_boundedErrors} and \eqref{eq:type2_asymptotic_bound_primal_wasserstein_decreasingErrors} requires computing Wasserstein distances between the inaccessible distributions $\mu^k$ and the posterior. We replace the distributions by empirical approximations by generating samples. The distribution $\mu^k$ can be approximated by running several parallel instances of \cref{algo:iPGLA} with independent realizations of the stochastic term and initial values drawn from $\mu^0$. Samples from the posterior $\mu^\ast$ are generated by computing a Markov chain with a time discretized Langevin diffusion step as proposal and an additional Metropolis-Hastings correction step as proposed in \cite{Pereyra2016}. These methods, sometimes also called Metropolis-adjusted Langevin algorithms are typically less efficient due to the additional correction step, but guarantee convergence to the target measure \cite{Roberts1996}.

With the choice $\alpha = \sigma = 1$, we compute $10^5$ parallel Markov chains using \cref{algo:iPGLA} up to $k=10^3$ each and from one single chain another $10^7$ samples with the Metropolis-adjusted algorithm. We then compare the empirical approximations $\tilde \mu^k \approx \mu^k$ with the empirical approximation $\tilde \mu^\ast \approx \mu^\ast$ of the target distribution. In one dimension, the squared Wasserstein distances $\calW_2^2(\tilde \mu^k,\tilde \mu^\ast)$ can be computed efficiently, the computations are carried out using the `Python optimal transport' toolbox \cite{Flamary2021}. The resulting distances and the upper bounds \eqref{eq:type2_asymptotic_bound_primal_wasserstein_boundedErrors} and \eqref{eq:type2_asymptotic_bound_primal_wasserstein_decreasingErrors} are shown in \cref{fig:W2dists_toyexample}. For fixed step sizes, the saturation of the distances at a biased level away from the true target is clearly visible. Running the same experiment with the decaying step size choice from \cref{rem:stepsize_choice_decaying_W2dist_inexact_PLA} and decaying inexactness levels $\epsilon_k$ indicates the convergence to the target measure. The convergence rate depends on the decay rate of $\epsilon_k$.

\begin{figure}[t]
    \centering
    \hfill%
    \begin{subfigure}{0.33\linewidth}%
        \includegraphics[width=\linewidth]{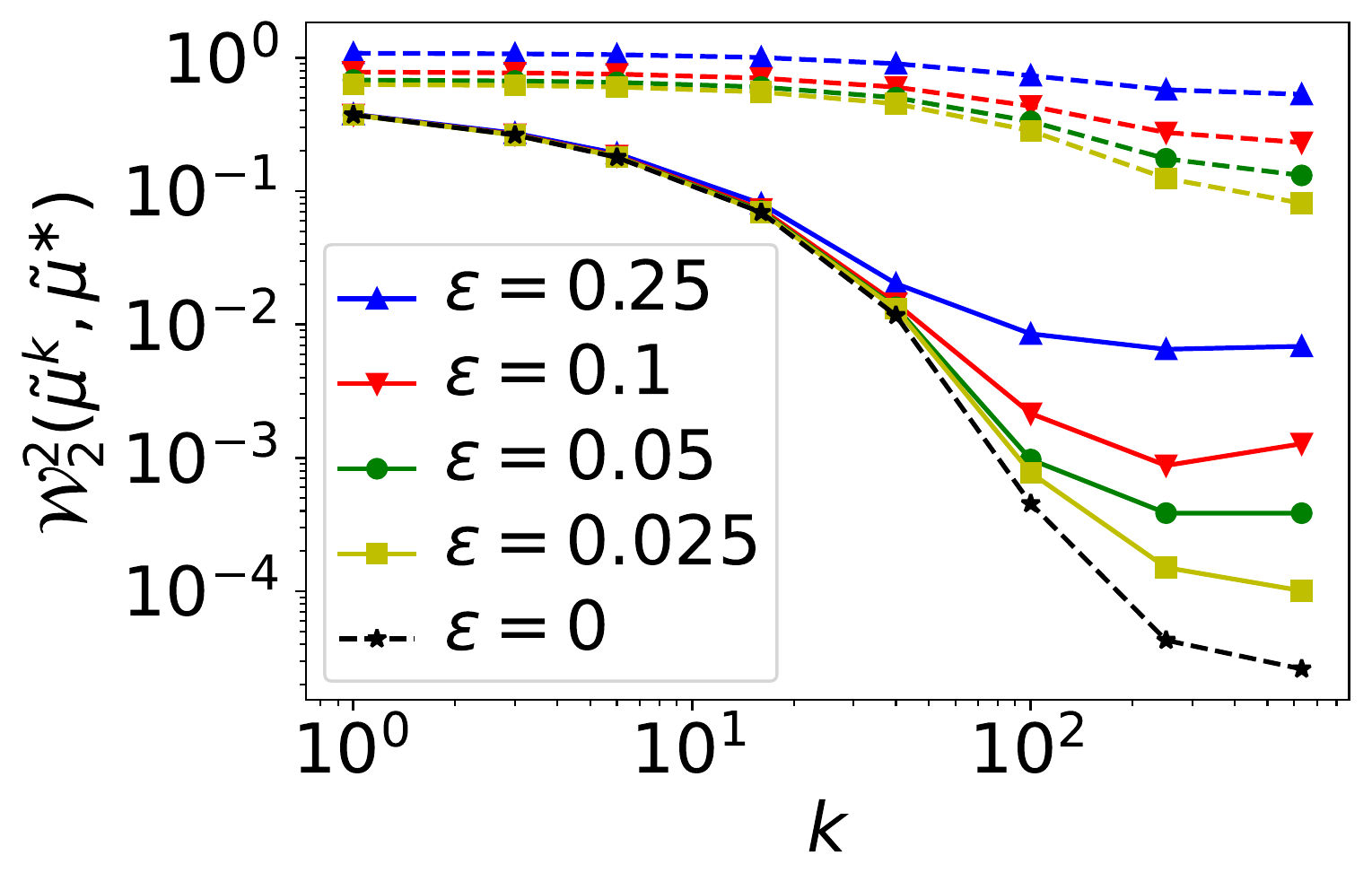}%
        \subcaption{Fixed $\gamma$, fixed $\epsilon$}
    \end{subfigure}%
    \hfill%
    \begin{subfigure}{0.33\linewidth}%
        \includegraphics[width=\linewidth]{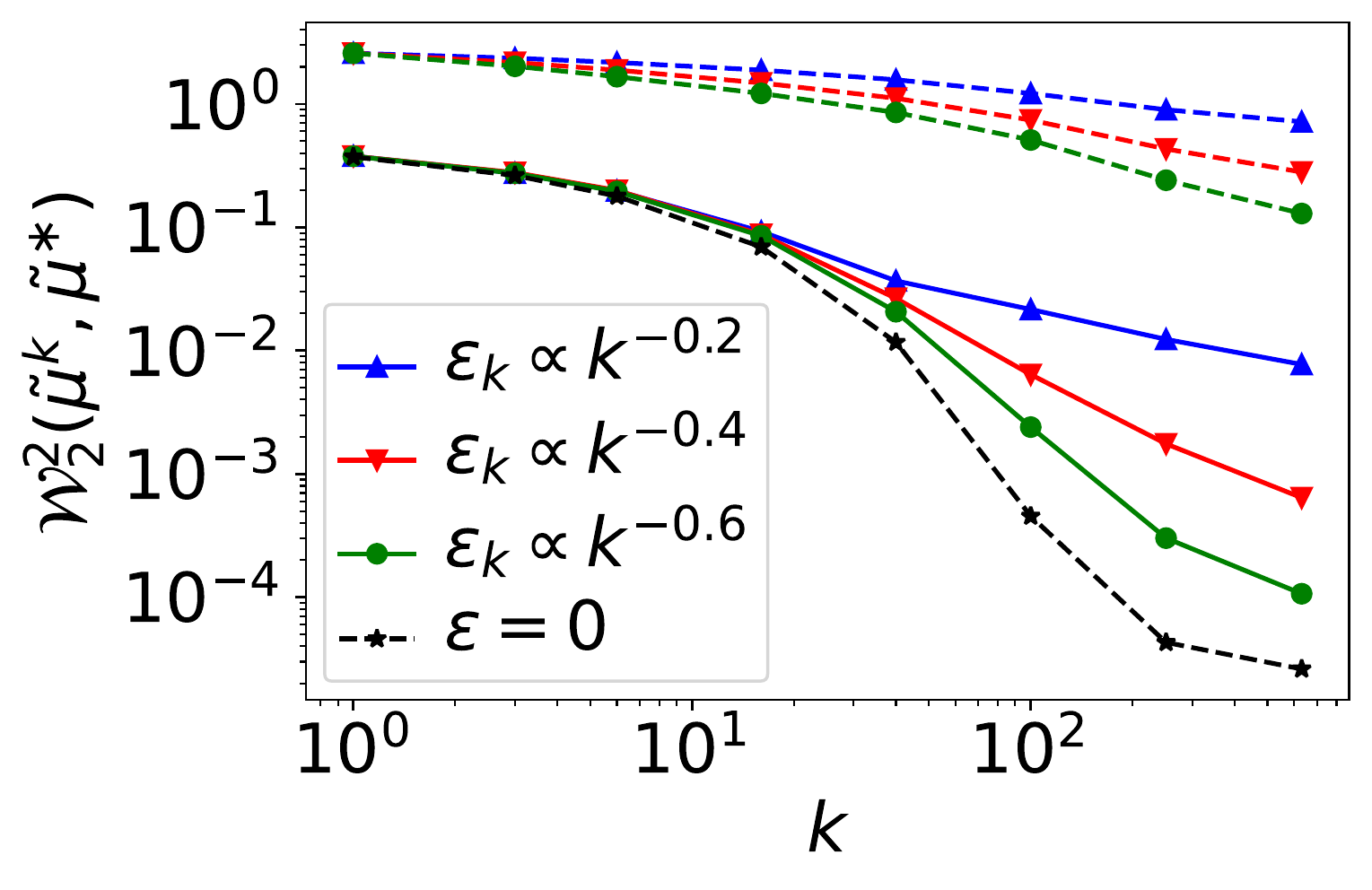}%
        \subcaption{Fixed $\gamma$, decaying $\epsilon_k$}
    \end{subfigure}%
    \hfill%
    \begin{subfigure}{0.33\linewidth}%
        \includegraphics[width=\linewidth]{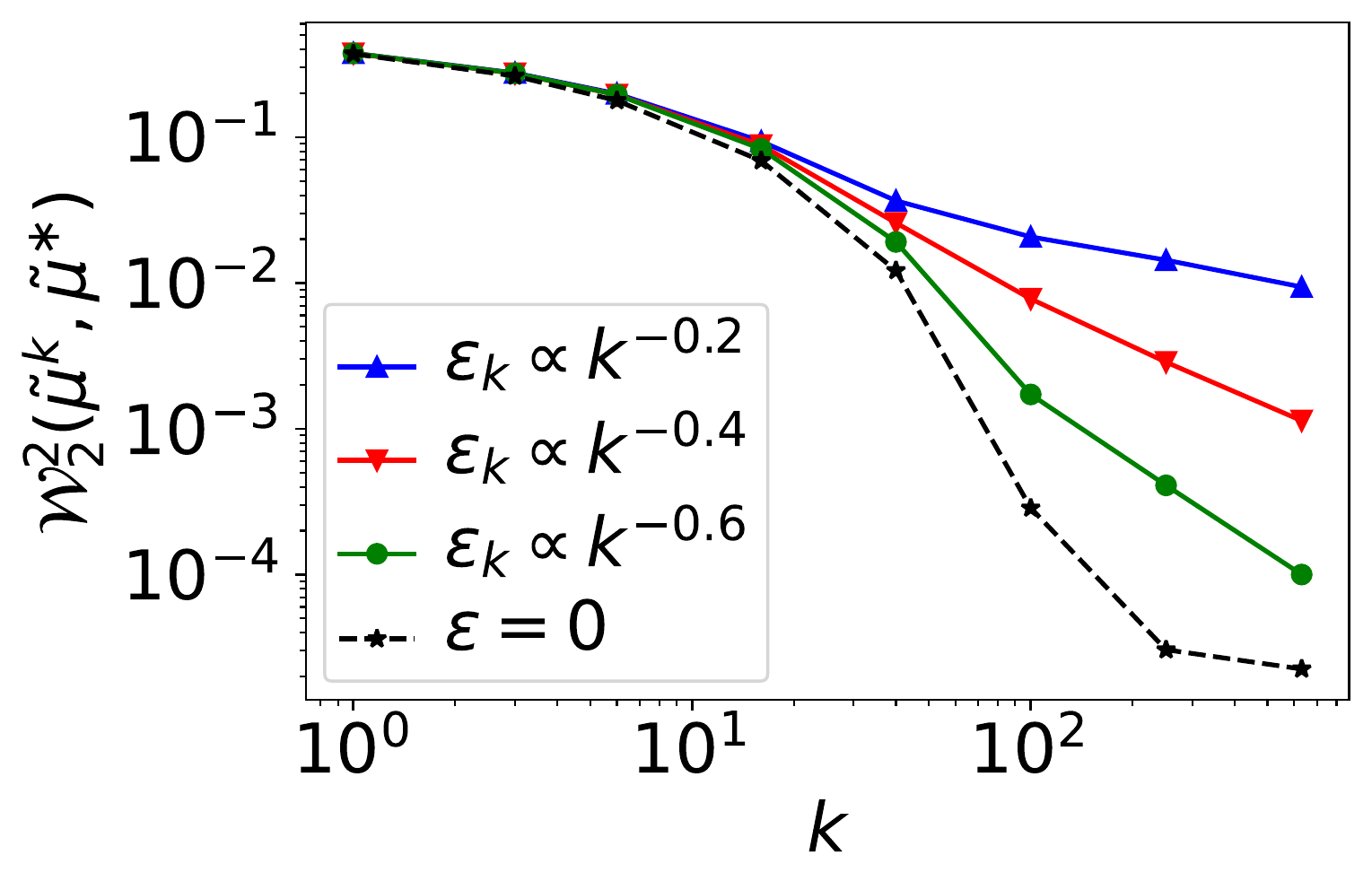}%
        \subcaption{Decaying $\gamma_k$, decaying $\epsilon_k$}
    \end{subfigure}%
    \hfill%
    \null%
    \caption{Squared Wasserstein distances between target and samples generated by \cref{algo:iPGLA} in a 1D toy example. (a): Samples are drawn using a fixed step size $\gamma$ and fixed inexactness level $\epsilon$. The computed squared distances $\calW_2^2(\tilde\mu^k,\tilde\mu^\ast)$ are shown as solid lines, the corresponding upper bound predicted by \eqref{eq:type2_asymptotic_bound_primal_wasserstein_boundedErrors} as dashed lines. 
    (b): Samples are drawn using fixed step size $\gamma$ and inexactness $\epsilon_k \propto k^{-\beta}$ decaying to zero at different rates, with the corresponding upper bounds \eqref{eq:type2_asymptotic_bound_primal_wasserstein_decreasingErrors}.
    (c): Samples are drawn with decaying step sizes $\gamma_k$ chosen as in \cref{rem:stepsize_choice_decaying_W2dist_inexact_PLA} and $\epsilon_k \propto k^{\beta}$ for different rates $\beta<0$. The convergence to the target shown in \cref{thm:asymptotic_result_type2error_decaying_stepsizes} depends in its rate on the decay rate of $\epsilon_k$.}
    \label{fig:W2dists_toyexample}
\end{figure}

\subsection{Wavelet-Based Deblurring -- Comparison of Inexact and Exact PGLA}\label{subsec:wavelet_deblurring}
We consider image deblurring with an $l_1$ regularization term on the image's coefficients in a wavelet basis. The image $x \in \bbR^d$, $d = n_1 n_2$ is blurred by applying a blurring operator $A \in \bbR^{d\times d}$ and then further corrupted by adding normally distributed independent zero-mean noise with variance $\sigma^2$ to each pixel. Implicitly, we represent the image in a wavelet basis as $x = W^Tz$, $z\in \bbR^d$ where $W$ is an orthogonal discrete wavelet transform. The negative log-likelihood is then $F(z) = \norm{AW^Tz-y}_2^2 /2\sigma^2$, a smooth, strongly convex function with strong convexity constant $\lambda_F = \lambda_{\min}(A^\ast A)/\sigma^2$ where $\lambda_{\min}(A^\ast A)$ denotes the smallest eigenvalue of $A^\ast A$. Note that depending on the blur kernel, the strong convexity constant can get very small in the deblurring case when the smallest singular value of $A$ is close to zero.

The $l_1$ regularization on the wavelet coefficients corresponds to assuming a prior of identical centered Laplace distributions for each wavelet coefficient, giving a prior density proportional to $ \exp(-\mu\lVert z \rVert_1)$. The scale parameter $\mu$ takes the role of a regularization parameter on the wavelet coefficients in the denoising experiment. The corresponding energy potential is $G(z) = \mu\lVert z \rVert_1$. The proximal mapping of $G$ is given in closed-form by the soft thresholding operator which for each component $i \in \{1,\dots,d\}$ is given by
$$ (\prox_{\lambda \norm{\cdot}_1}(x))_i = \max(|x_i|-\lambda, 0) \sign(x_i). $$

We compare \cref{algo:iPGLA} for inexact proximal points ($\epsilon > 0$) with its special case of exact proximal points ($\epsilon = 0$) on the blurred and noisy image shown in the middle of the first row of \cref{fig:wavelet_deblurring}. We draw $10^5$ samples from the posterior using \cref{algo:iPGLA} with fixed step size $\gamma = L^{-1}$ and the exact proximal points (corresponding to $\epsilon_k = 0$ for all $k$). We then report the image which is reconstructed from the MMSE estimate wavelet coefficients given by the sample mean.

Afterwards, we run the inexact algorithm with the same step size and same number of samples but different error levels $\epsilon > 0$. This is implemented by deliberately computing the proximal points $\prox_{\gamma G}(X^{k+2/3})$ only approximately. We ensure the approximation as suggested in \cref{subsec:inexact_prox_mappings} by constructing a sequence $z^n$ which converges to the true solution of the dual problem \eqref{eq:prox_problem_dual_form} and stopping the iteration once the duality gap \eqref{eq:duality_gap} is less or equal $\epsilon$. Since we do not know all the constants in the convergence results the accuracy $\epsilon$ is practically chosen depending on the size of the duality gap. Similar to \cite{Rasch2020}, we scale $\epsilon$ w.r.t. a reference constant of the duality gap evaluated for the first sample with $z=0$ and the corresponding primal iterate
$$\epsilon = C_0 \tilde \epsilon,\quad C_0 := \calG(X^{0+2/3},0), $$
and test different values of $\tilde \epsilon \le 1$. Note that the extreme inexact case of taking $\tilde \epsilon=1$ and accepting as inexact dual solution $z=0$ coincides with primal inexact proximal points $X^{k+1} = X^{k+2/3}$ and hence with applying ULA \eqref{eq:ULA} only to the likelihood term and ignoring the $l_1$ regularizing prior. The smaller $\tilde \epsilon$ is chosen, the more exact the approximation to the proximal point. We run the algorithm for $\tilde \epsilon = 10^{-0.1}, 10^{-0.5}, 10^{-2}$ and report the resulting images reconstructed from the MMSE estimates in the second row of \cref{fig:wavelet_deblurring}. Both visually and in PSNR values between the reconstruction and the ground truth we can see the effect of the inexactness in the proximal operator and the resulting regularizing effect of more accurate proximal points.

\begin{figure}[t]
    \centering
    \setlength{\fboxsep}{0pt}%
    \setlength{\fboxrule}{1pt}%
    \newlength{\imageheight}%
    \setlength{\imageheight}{0.28\linewidth}%
    \newlength{\sfwidth}%
    \setlength{\sfwidth}{0.30\linewidth}%
    \hfill%
    \begin{subfigure}{\sfwidth}%
        \centering%
        \includegraphics[height=\imageheight]{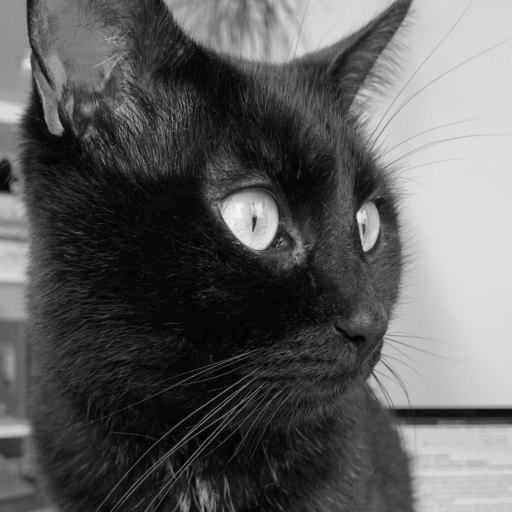}%
        \subcaption*{Ground truth}%
    \end{subfigure}%
    \hfill%
    \begin{subfigure}{\sfwidth}%
        \centering%
        \includegraphics[height=\imageheight]{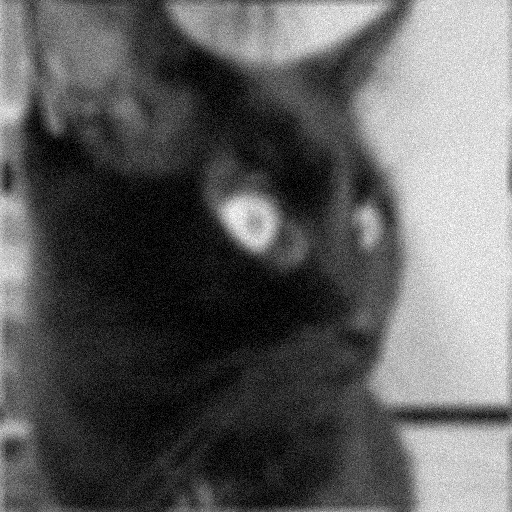}%
        \subcaption*{Blurred and noisy}%
    \end{subfigure}%
    \hfill%
    \begin{subfigure}{\sfwidth}%
        \centering%
        \includegraphics[height=\imageheight]{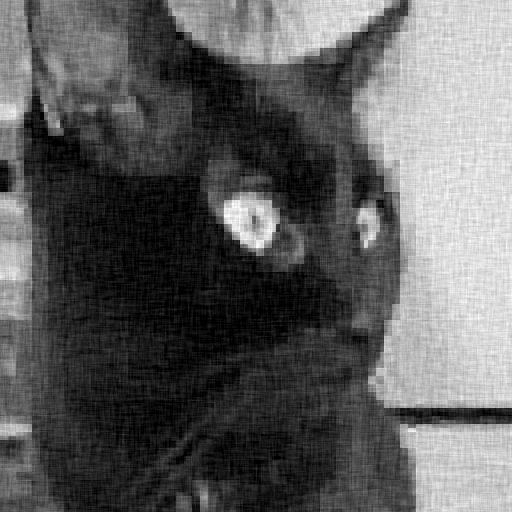}%
        \subcaption*{MMSE exact proximal map, $\textrm{PSNR}=24.40$}%
    \end{subfigure}%
    \hfill\\%
    \hfill%
    \begin{subfigure}{\sfwidth}%
        \centering%
        \includegraphics[height=\imageheight]{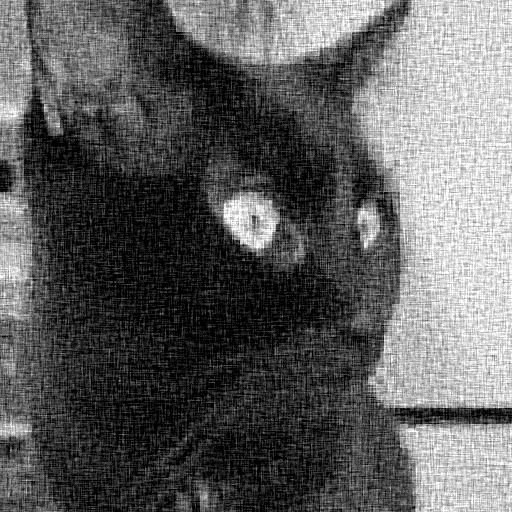}%
        \subcaption*{MMSE inexact proximal map, $\tilde \epsilon = 10^{-0.1}$, $\textrm{PSNR}=19.28$}%
    \end{subfigure}%
    \hfill%
    \begin{subfigure}{\sfwidth}%
        \centering%
        \includegraphics[height=\imageheight]{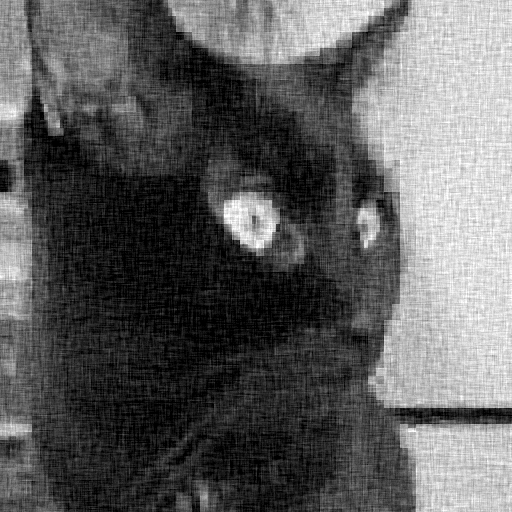}%
        \subcaption*{MMSE inexact proximal map, $\tilde \epsilon = 10^{-0.5}$, $\textrm{PSNR}=22.83$}%
    \end{subfigure}%
    \hfill%
    \begin{subfigure}{\sfwidth}%
        \centering%
        \includegraphics[height=\imageheight]{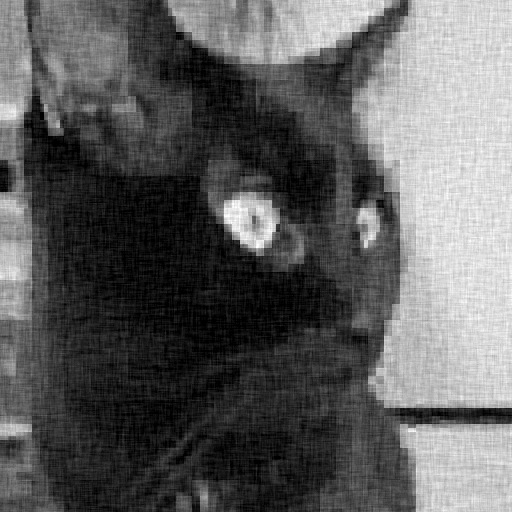}%
        \subcaption*{MMSE inexact proximal map, $\tilde \epsilon = 10^{-2.0}$, $\textrm{PSNR}=24.37$}%
    \end{subfigure}%
    \hfill\\%
    \caption{Comparison of exact and inexact version of PGLA for wavelet-based deblurring. For the inexact version we use errors $\epsilon = C_0 \tilde \epsilon$.}
    \label{fig:wavelet_deblurring}
\end{figure}

\subsection{TV-Regularized Denoising}\label{subsec:tv_denoising}
Next, we consider image denoising and deblurring with a total variation (TV) prior. The negative log-likelihood is given by $F(x) = \norm{x-y}_2^2 /2\sigma^2$. As regularization, we choose a Gibbs prior with negative log-density proportional to $G(x) = \mu \TV(x)$. The inexact proximal points are computed as described in \cref{ex:epsilon_prox_computation_for_TV}.

For denoising, we scale an image of size $512 \times 512$ to gray scale values in $[0,1]$ and set $\sigma = 0.2$.
In \cref{fig:TV_denoising_wheel_referenceimages} we show the ground truth image, corrupted version and MMSE estimate $\hat{X}_{MMSE}$. The MMSE is estimated by drawing $10^5$ samples from the posterior using \cref{algo:iPGLA} with $\gamma=L^{-1}$. Due to the inherent bias of the algorithm, the MMSE estimate may be inaccurate, however here we are mostly interested in the additional error brought in by inexact evaluations of the proximal mapping. As before, we set $\epsilon = C_0 \tilde \epsilon$ with $C_0 = \calG(X^{0+2/3},0)$. The MMSE estimate $\hat{X}_{MMSE}$ in \cref{fig:TV_denoising_wheel_referenceimages} is computed using the small value $\tilde \epsilon = 10^{-5}$. We run the algorithm using different values of $\tilde \epsilon$ between $10^{-2}$ and $10^{-4}$ and report the convergence behaviour of the chains' first moment in dependence of the inexactness level. In particular, we let each chain run until a stopping criterion is satisfied at iteration $k^*$, defined by
$$ k^*(\delta) := \min\left\{ k: \frac{\norm{\bar{X}_k - \hat{X}_{MMSE}}_2}{\norm{\hat{X}_{MMSE}}_2} \le \delta \right\}, $$
where $\bar{X}_k$ is the running mean of the chain at iteration $k$ after a fixed burnin phase. We report the values of $k^\ast$ for several values of $\delta$ and $\tilde \epsilon$ in \cref{fig:convergence_MMSE_TV_denoising}. Naturally, the number of samples (outer iterations) required to reach a certain MMSE accuracy becomes smaller when the proximal accuracy $\tilde \epsilon$ is small. However, computationally it can make sense to choose a larger $\tilde \epsilon$ and solve the inner optimization problem only with a few steps, since the total number of inner iterations (and hence the total CPU time of the chain) may still be smaller, as is depicted in the second plot of the figure.

\begin{figure}[t]
    \centering
    \setlength{\fboxsep}{0pt}%
    \setlength{\fboxrule}{1pt}%
    \setlength{\imageheight}{4.5cm}%
    \newlength{\detailwidth}%
    \setlength{\detailwidth}{1.8cm}%
    \setbox1=\hbox{\includegraphics[height=\imageheight]{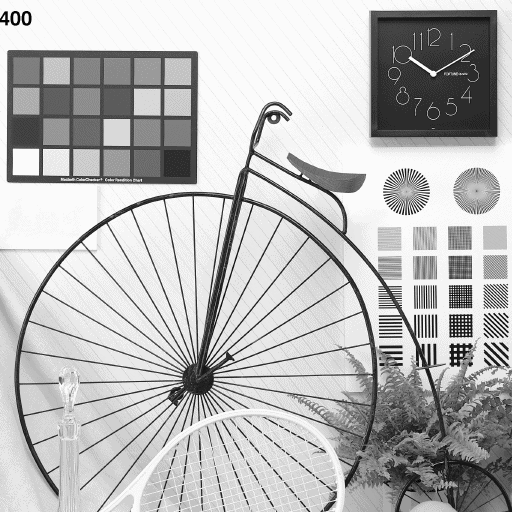}}%
    \hfill%
    \begin{subfigure}{0.3\linewidth}%
        \includegraphics[height=\imageheight]{results/denoise_TV/wheel/ground_truth.png}\llap{\makebox[\wd1][l]{\fbox{\includegraphics[height=\detailwidth]{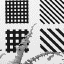}}}}%
        \subcaption*{Ground truth}%
    \end{subfigure}%
    \hfill%
    \begin{subfigure}{0.3\linewidth}%
        \includegraphics[height=\imageheight]{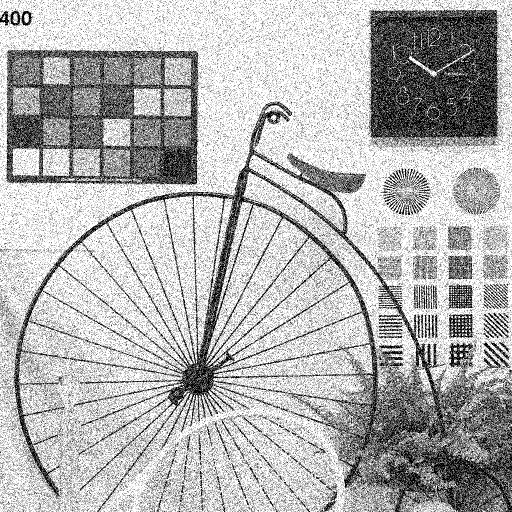}\llap{\makebox[\wd1][l]{\fbox{\includegraphics[height=\detailwidth]{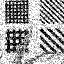}}}}%
        \subcaption*{Noisy version}%
    \end{subfigure}%
    \hfill%
    \begin{subfigure}{0.3\linewidth}%
        \includegraphics[height=\imageheight]{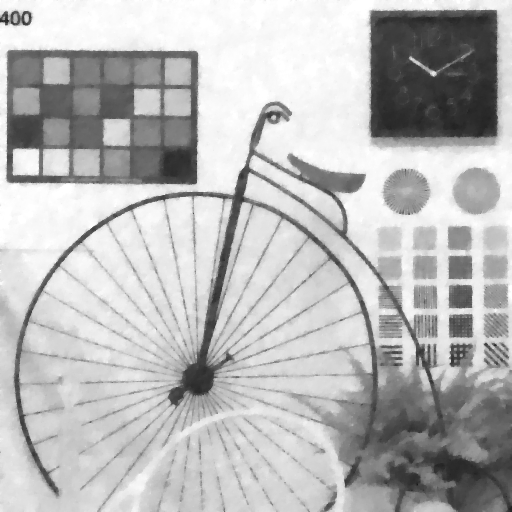}\llap{\makebox[\wd1][l]{\fbox{\includegraphics[width=\detailwidth]{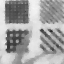}}}}%
        \subcaption*{MMSE estimate}%
    \end{subfigure}%
    \begin{subfigure}{0.08\linewidth}%
        \includegraphics[height=\imageheight]{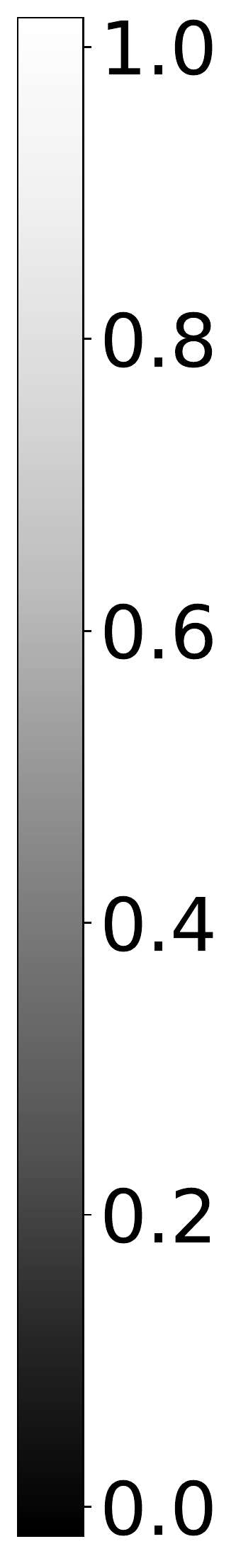}%
        \subcaption*{}
    \end{subfigure}%
    \hfill%
    \caption{Test image for the TV-denoising experiment. The MMSE is computed using $10^5$ samples \cref{algo:iPGLA} with a small error threshold ($\tilde \epsilon = 10^{-5}$).}
    \label{fig:TV_denoising_wheel_referenceimages}
\end{figure}

To illustrate the effect of the inexactness on the regularity of the samples, we report the point estimates of the posterior's first and second moment for different values of $\tilde \epsilon$ in \cref{fig:TV_denoising_results_largest_stepsize}. Note that in the particular way we implemented the inner iteration, large values of $\tilde \epsilon$ result in approximating the proximal mapping by the identity and hence in little regularizing effect. The bias introduced by the errors is then too large, the samples approximate a Gaussian distribution centered at the noisy image and hence the sample mean is close to the noisy image and the sample standard deviation almost constant. With smaller errors ($\tilde \epsilon = 10^{-2}, 10^{-4}$ here), the TV regularization effect becomes increasingly visible, while of course the computational effort is increased significantly.

\begin{figure}[t]
    \centering
    \hfill%
    \begin{subfigure}{0.49\linewidth}%
        \includegraphics[width=\linewidth]{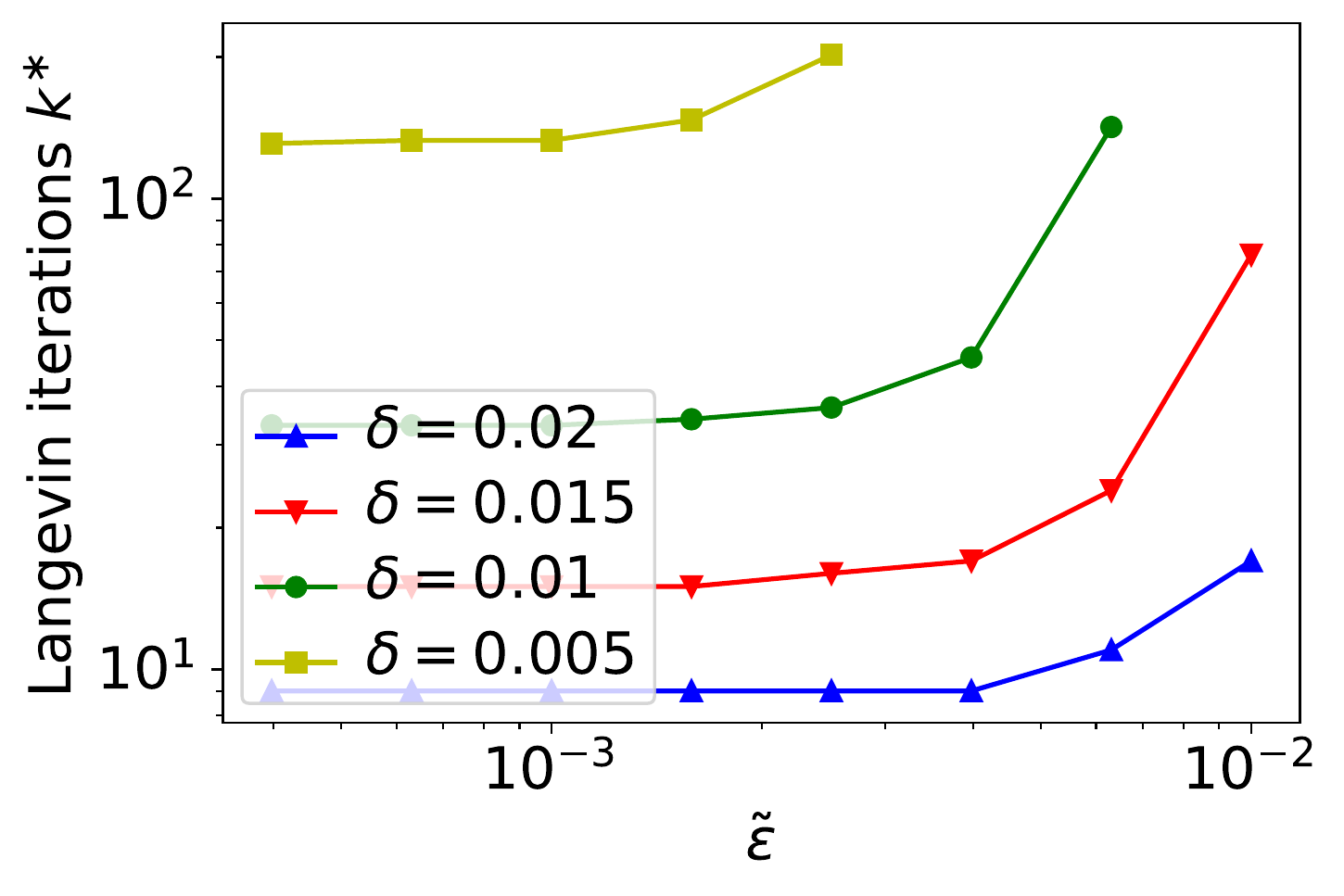}
    \end{subfigure}
    \hfill%
    \begin{subfigure}{0.49\linewidth}%
        \includegraphics[width=\linewidth]{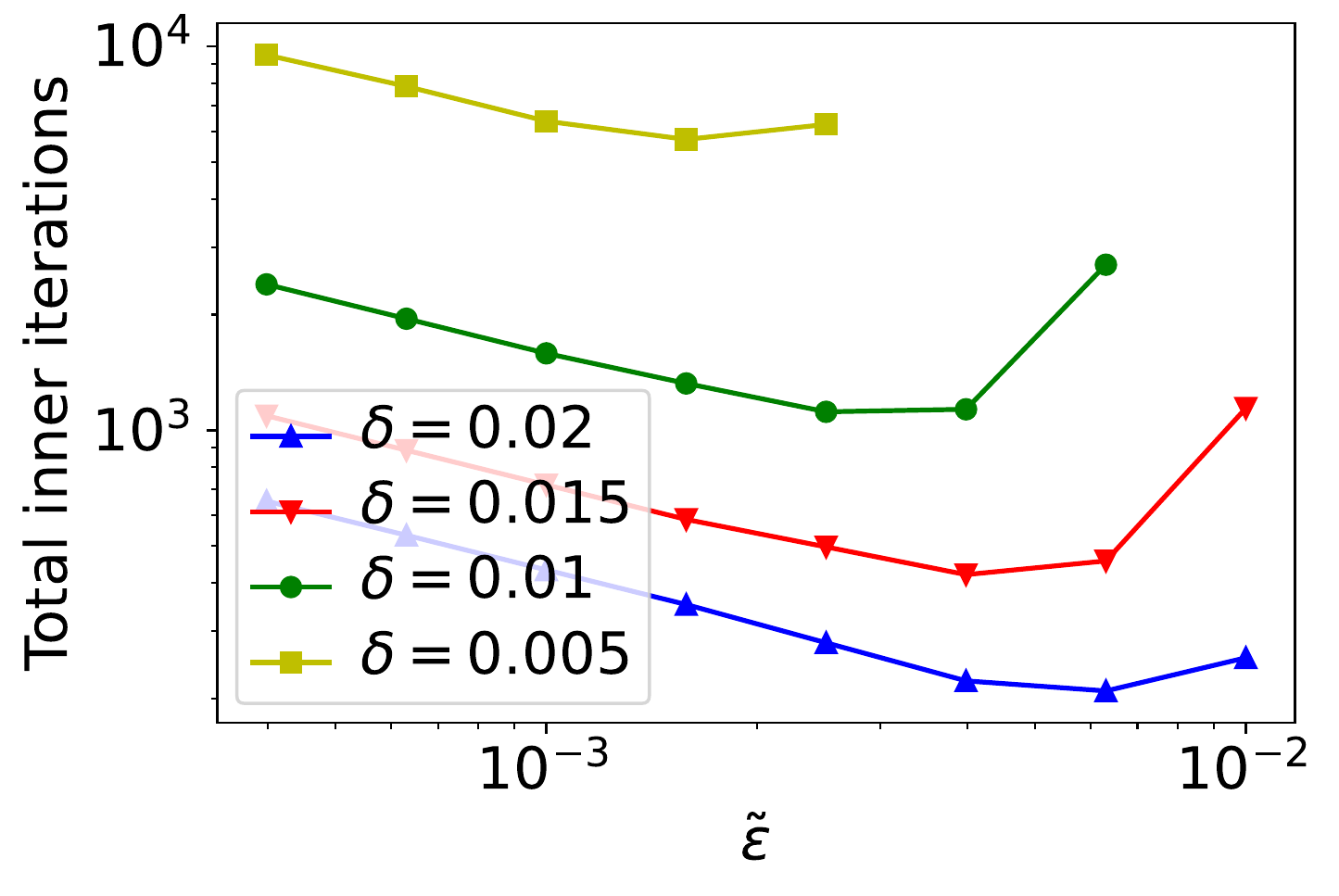}
    \end{subfigure}
    \hfill%
    \null%
    \caption{Convergence of the first moment (MMSE) at different error levels $\tilde \epsilon$. The left plot shows the number of samples necessary to reach a given accuracy $\delta$ in relative error to the MMSE estimate (no marker means that relative error $\delta$ was not reached until a maximum of $k^\ast = 10^4$ samples). The right plot shows the corresponding total number of inner iterations to compute the proximal points. It can be seen that, although less samples are necessary when $\tilde \epsilon$ is small, computationally it can be of advantage to allow larger errors in the proximal points.}\label{fig:convergence_MMSE_TV_denoising}
\end{figure}


\begin{figure}[t]
    \centering
    \setlength{\fboxsep}{0pt}%
    \setlength{\fboxrule}{1pt}%
    \setlength{\imageheight}{0.31\linewidth}%
    \setlength{\detailwidth}{0.08\linewidth}%
    \setbox1=\hbox{\includegraphics[height=\imageheight]{results/denoise_TV/wheel/mmse.png}}%
    \hfill%
    \begin{subfigure}{0.31\linewidth}%
        \includegraphics[height=\imageheight]{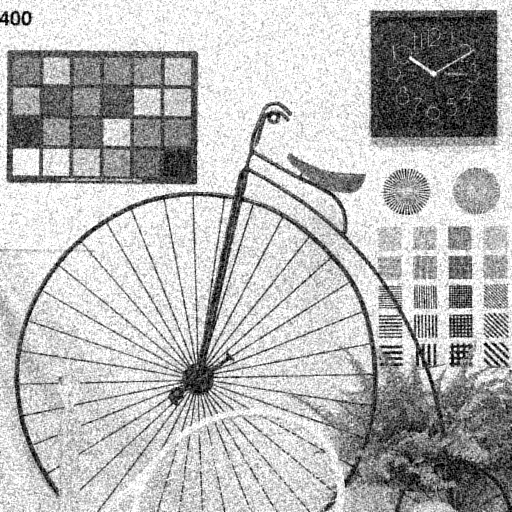}\llap{\makebox[\wd1][l]{\fbox{\includegraphics[height=\detailwidth]{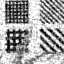}}}}%
    \end{subfigure}%
    \hfill%
    \begin{subfigure}{0.31\linewidth}%
        \includegraphics[height=\imageheight]{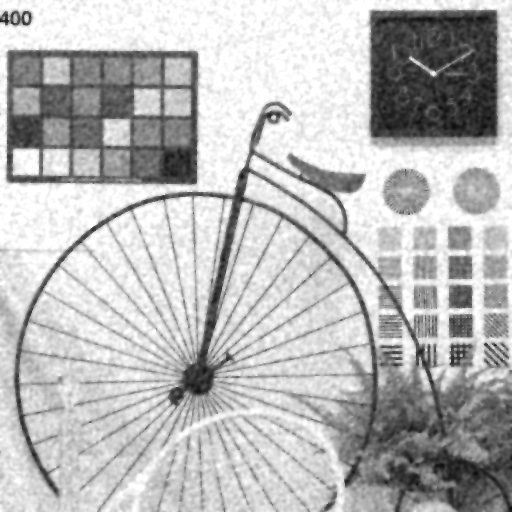}\llap{\makebox[\wd1][l]{\fbox{\includegraphics[height=\detailwidth]{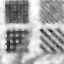}}}}%
    \end{subfigure}%
    \hfill%
    \begin{subfigure}{0.31\linewidth}%
        \includegraphics[height=\imageheight]{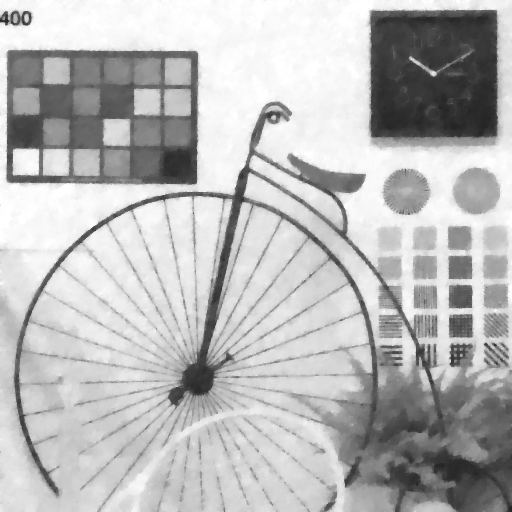}\llap{\makebox[\wd1][l]{\fbox{\includegraphics[height=\detailwidth]{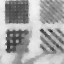}}}}%
    \end{subfigure}%
    \hfill%
    \begin{subfigure}{0.06\linewidth}%
        \includegraphics[height=\imageheight]{results/denoise_TV/wheel/cbar-mean.pdf}%
    \end{subfigure}%
    \hfill\\%
    \hfill%
    \begin{subfigure}{0.31\linewidth}%
        \includegraphics[height=\imageheight]{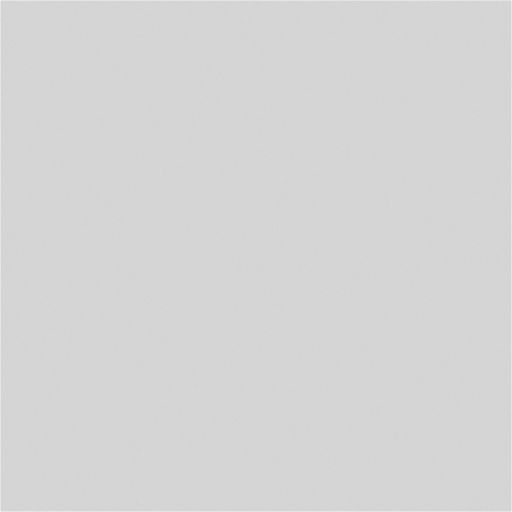}\llap{\makebox[\wd1][l]{\fbox{\includegraphics[height=\detailwidth]{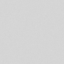}}}}%
        \subcaption*{$\tilde \epsilon = 10^{0}$}%
    \end{subfigure}%
    \hfill%
    \begin{subfigure}{0.31\linewidth}%
        \includegraphics[height=\imageheight]{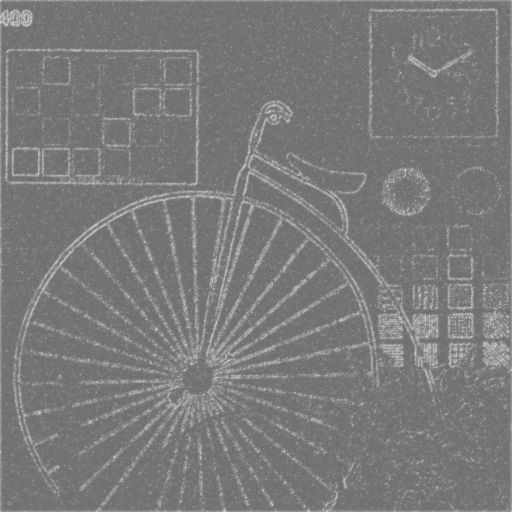}\llap{\makebox[\wd1][l]{\fbox{\includegraphics[height=\detailwidth]{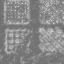}}}}%
        \subcaption*{$\tilde \epsilon = 10^{-2}$}%
    \end{subfigure}%
    \hfill%
    \begin{subfigure}{0.31\linewidth}%
        \includegraphics[height=\imageheight]{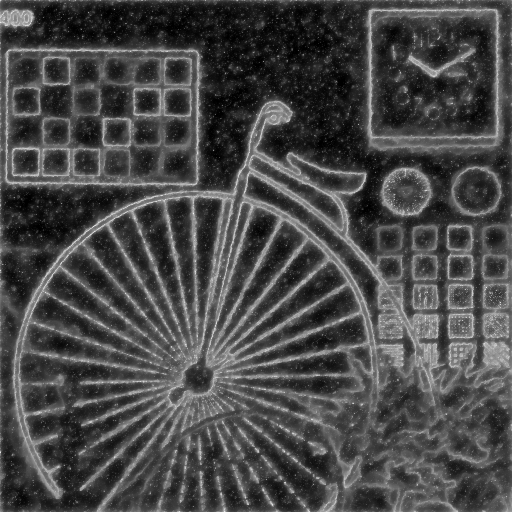}\llap{\makebox[\wd1][l]{\fbox{\includegraphics[height=\detailwidth]{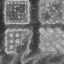}}}}%
        \subcaption*{$\tilde \epsilon = 10^{-4}$}%
    \end{subfigure}%
    \hfill%
    \begin{subfigure}{0.06\linewidth}%
       \includegraphics[width=\linewidth]{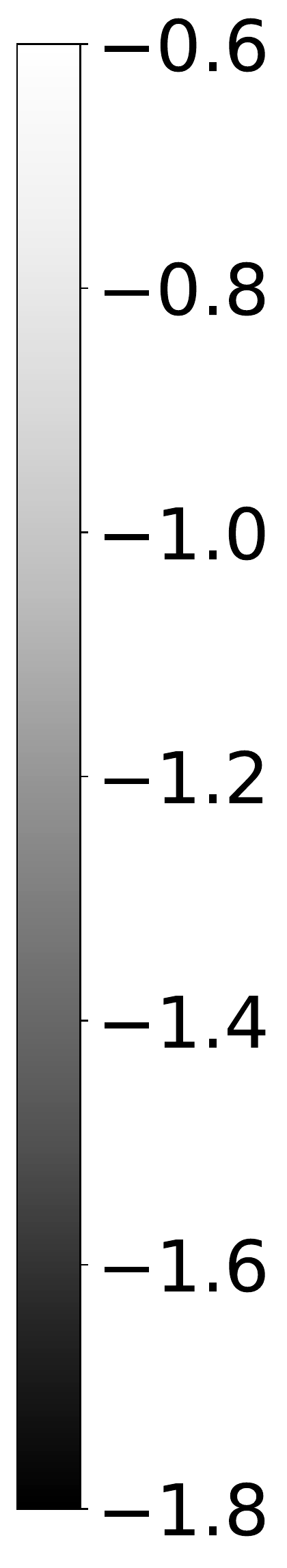}%
       \subcaption*{}
    \end{subfigure}%
    \hfill\\%
    
    \caption{TV-denoising: MMSE estimates (top) and logarithm of pixelwise standard deviation (bottom) of $10^4$ samples for different proximal accuracy levels $\epsilon = C_0 \tilde \epsilon$.}
    \label{fig:TV_denoising_results_largest_stepsize}
\end{figure}

\subsection{TV-based deblurring with highly ill-conditioned blur kernel}\label{subsec:tv_deblurring_gauss}
As a more ill-posed problem we consider image deblurring from a Gaussian blur with a TV prior. The blur kernel has a standard deviation of $1.5$ pixels and we add normally distributed noise ($\sigma = 0.1$) to the blurred image. Since the blur operator is ill-conditioned, the posterior distribution has a fairly large region of high probability compared to the largest possible stepsize $L^{-1} = \sigma^2/\lambda_{\max}(A^\ast A)$. The Markov chain therefore has a longer burn-in time and we have to run the sampling algorithm longer (compared to the denoising example) in order to get representative samples from the whole posterior distribution. We compute $10^5$ samples and reduce the computational complexity by considering a $256\times256$ image. The results including MMSE estimates and pixelwise estimated standard deviation are shown in \cref{fig:TV_deblur_owl_images}. In the deblurring tests the number of steps necessary to achieve a sufficient approximation of the proximal points was experienced to be very small, presumably because the proximal points are very close to the initial guess $X^{k+2/3}$ in each step due to the smaller step size $L^{-1}$ and the smaller regularization parameter. In particular, the results in \cref{fig:TV_deblur_owl_images} were computed with $\tilde \epsilon = 10^{-4}$ corresponding to on average 53 inner steps of gradient descent on \eqref{eq:dual_form_ROF_problem} in every sampling iteration. With larger errors $\tilde \epsilon = 10^{0}$ and $\tilde \epsilon = 10^{-2}$ (corresponding to on average 4 resp. 16 inner iterations for every sample), the considered point estimates barely changed, e.g., the PSNR of the posterior mean changed by $0.01\%$ from the experiment with $\tilde \epsilon = 10^{0}$ to the one with $\tilde \epsilon = 10^{-4}$.

\begin{figure}[t]
    \setlength{\imageheight}{0.31\linewidth}%
    \centering
    \hfill%
    \begin{subfigure}{0.31\linewidth}%
        \centering%
        \includegraphics[height=\imageheight]{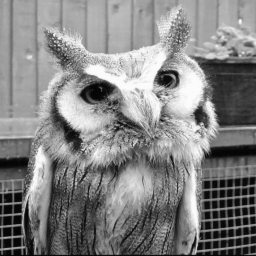}%
    \end{subfigure}%
    \hfill%
    \begin{subfigure}{0.31\linewidth}%
        \centering%
        \includegraphics[height=\imageheight]{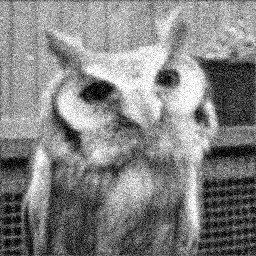}%
    \end{subfigure}%
    \hfill%
    \begin{subfigure}{0.31\linewidth}%
        \centering%
        \includegraphics[height=\imageheight]{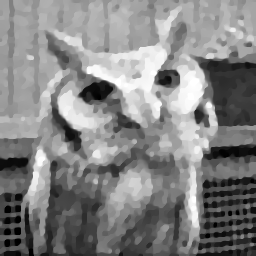}%
    \end{subfigure}%
    \begin{subfigure}{0.06\linewidth}%
       \includegraphics[height=\imageheight]{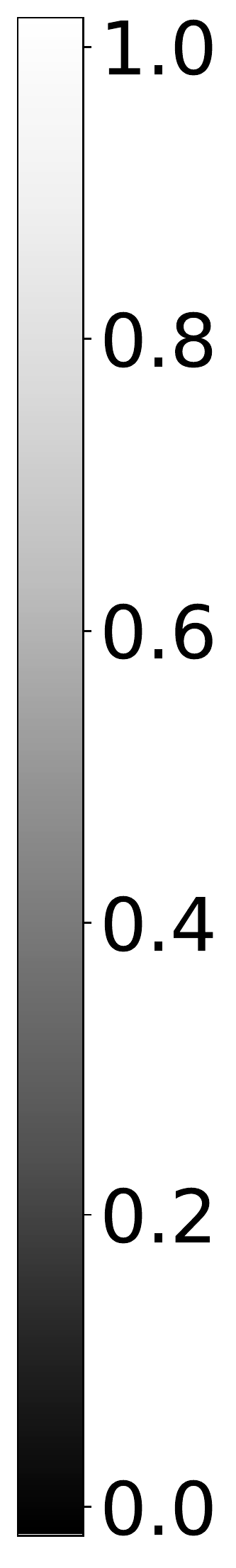}%
    \end{subfigure}%
    \hfill%
    \null%
    \\%
    \vspace{1em}
    \hfill%
    \begin{subfigure}{0.31\linewidth}%
        \centering%
        \includegraphics[height=\imageheight]{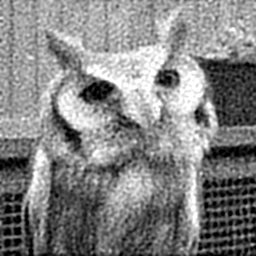}
    \end{subfigure}%
    \begin{subfigure}{0.06\linewidth}%
       \includegraphics[height=\imageheight]{results/deblur_TV/owl/cbar-mean.pdf}%
    \end{subfigure}%
    \begin{subfigure}{0.31\linewidth}%
        \centering%
        \includegraphics[height=\imageheight]{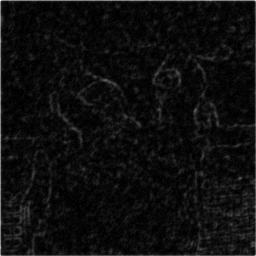}%
    \end{subfigure}%
    \begin{subfigure}{0.08\linewidth}%
       \includegraphics[height=\imageheight]{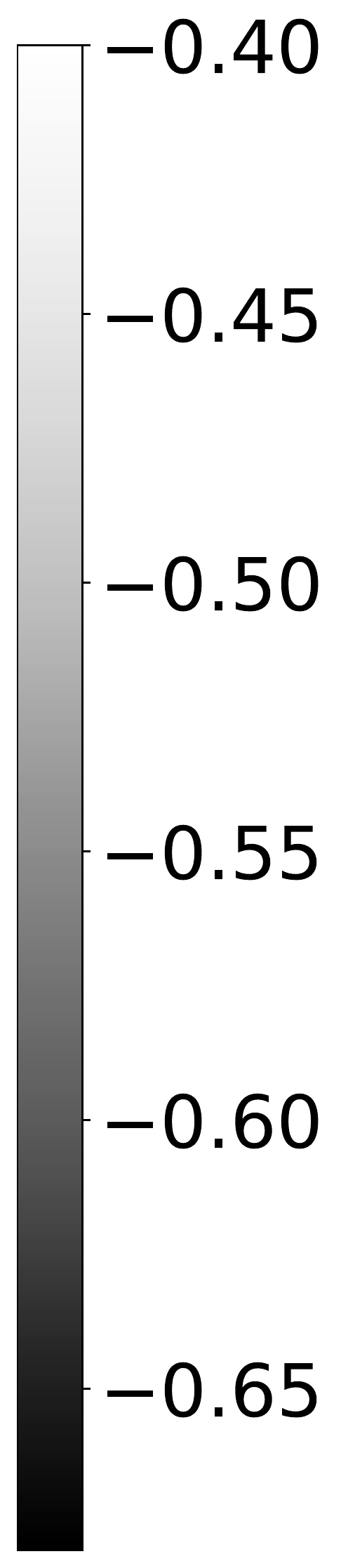}%
    \end{subfigure}%
    \hfill%
    \null%
    \caption{TV-deblurring experiment. First row from left to right: ground truth, blurred and noisy observation and MAP estimate. Second row: MMSE estimate computed as the mean of samples and (log-scaled) pixel-wise standard deviation generated by \cref{algo:iPGLA} with $\log_{10} \tilde \epsilon = -4$.}
    \label{fig:TV_deblur_owl_images}
\end{figure}

\subsection{TV-based deblurring from low-count Poisson data}\label{subsec:tv_deblurring_poisson}
Finally, we illustrate the applicability of the method to posteriors arising from different and more challenging log-likelihood functions, in particular such with an admissible set of the samples. We consider the problem of deblurring from Poisson-distributed data. For this purpose, we assume that observed data $y$ is distributed according to the model
$$ y|x \sim \mathrm{Poisson}(Ax+\sigma), $$
with a forward model $A \in \bbR^{m\times d}$ and we try to recover the posterior distribution of $x$. In photon counting applications, the parameter $\sigma \in \bbR^m_+$ is typically interpreted as a background level and is assumed to be known here. Computing the log-likelihood results (up to additive constants) in the potential term
$$ F(x) = \sum_{i=1}^m (Ax)_i - y_i \log((Ax)_i + \sigma_i). $$
The image $x$ is interpreted as some (nonnegative) intensity, i.e. we are only interested in $x\ge 0$. Since $A$ is positivity-preserving (i.e. $x\ge 0$ implies $Ax \ge 0$), $F$ is well-defined on the admissible set $\bbR^d_+$. In our algorithmic approach, we include the constraint directly in the nonsmooth potential term $G$ by adding an indicator to the TV regularization term:
$$ G(x) = \mu_{\TV}\TV(x) + \iota_{\bbR^d_+}(x). $$
Formally, $G(x) = \infty$ for $x \notin \bbR_+^d$, which is identified with $\mu^\ast(x) = 0$. Note that there exist other approaches to ensuring the nonnegativity of samples generated by Langevin algorithms, e.g.\ in \cite{Melidonis2023} the authors modify the original Langevin SDE \eqref{eq:LangevinSDE} by incorporating an additional reflection map. Technically, since $G$ is not differentiable, the diffusion process we consider here is defined by a stochastic differential inclusion variant of \eqref{eq:LangevinSDE} since $\nabla V$ needs to be replaced by $\nabla F + \partial G$. With our choice of potentials, \cref{assumption1,assumption2,assumption3} are satisfied if $\sigma > 0$, the iteration is well-defined and the samples $X^k$ drawn by \cref{algo:iPGLA} are always nonnegative, which allows improved interpretability. 

We choose $A$ to be a uniform blur operator and assume fixed background level $\sigma_i = 0.02$. The $256\times256$ test image is scaled to a mean intensity value of 2 across all pixels, so that the observed data $y$ has a low count rate and hence moderately small signal to noise ratio, making the problem ill-conditioned. The step size choice is more challenging in the Poisson likelihood case. Since the global Lipschitz constant $L$ of $\nabla F$ becomes large for small $\sigma$, choosing the step size as $1/L$ results in small step sizes that are infeasible in practice. In order to compare influence of step size and proximal errors on the results, we also run Markov chains using backtracking line search as is common in first order optimization schemes \cite{Armijo1966} and pick $\gamma_k$ as large as possible in a geometric range of values while still satisfying \eqref{eq:descent_condition} at every iteration (two-way backtracking). The inexactness levels of $\prox_{\gamma_k G}$ are chosen implicitly by running fixed numbers of inner iterations (note that \cref{lem:guarantee_epsilon_approximation_by_duality_gap} can not be used since $G$ does not have the necessary form). As before, we report estimates of the posterior mean and (log-scaled) standard deviation. We also quantify the uncertainty in the posterior at different spatial scales, by downsampling every sample image by factors 2, 4 and 8 and computing the resulting standard deviation maps. In order to quantify the mixing behaviour of the Markov chains, we further plot the sample autocorrelation functions (ACF) of the fastest and slowest mixing components. Since the computation of the posterior covariance matrix is infeasible in large dimensions, it is approximated by assuming the posterior covariance has the same eigenvectors as the blur operator (i.e., that its eigenvectors are the Fourier modes). 

We simulate a blurred and noisy observation and draw $10^6$ samples from the posterior using \cref{algo:iPGLA} in three different configurations. The first uses fixed step sizes $\gamma = 1/\tilde L$, where $\tilde L$ is an estimate of the global Lipschitz constant of $\nabla F$. The second and the third use backtracking line search and choose considerably larger step sizes ($> 10^3$ times larger on average). We account for the large difference in step sizes only by increasing the burn-in period of the first chain. The second and third experiment are different in the proximal inexactness level, one using 10 inner iterations of AGD for computing $\prox_{\gamma_k G}$, and one only a single inner iteration. The resulting mean and standard deviation estimates in \cref{fig:deblur_poisson_TV} and \cref{fig:deblur_poisson_TV_second moments} confirm the common effect that in view of numerical efficiency, picking a larger step size can be beneficial in order to guarantee fast convergence of point estimates.  Although in theory, small step sizes guarantee a smaller asymptotical bias of the targeted distribution, the samples are strongly correlated and it would be computationally infeasible to run the chain for long enough to mitigate this effect. The autocorrelation functions in \cref{fig:acfs_poisson_deblurring} confirm this behaviour. Compared to these errors, with large step sizes the additional effect of computing the proximal mapping only with low accuracy has a minor effect on the convergence behaviour of the chain and resulting point estimates. The MMSE estimate PSNR is 0.1 worse on the experiment with a single inner iteration compared to the 10 inner iterations configuration, and only the slowest components of the Markov chain show a slightly stronger correlation along the iterates. The standard deviation maps on pixel scale and larger spatial scales are hardly influenced by larger proximal errors either. 

\setlength{\imageheight}{0.31\linewidth}%
\begin{figure}[t]
    \centering%
    \hfill%
    \begin{subfigure}{0.31\linewidth}%
        \includegraphics[width=\linewidth]{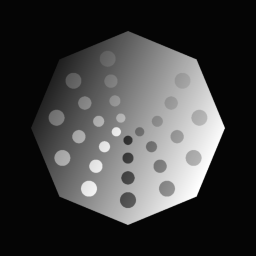}
        \subcaption*{Ground truth}%
    \end{subfigure}%
    \hfill%
    \begin{subfigure}{0.31\linewidth}%
        \includegraphics[width=\linewidth]{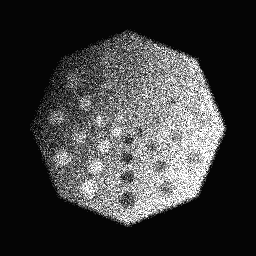}
        \subcaption*{Blurred and noisy}%
    \end{subfigure}%
    \hfill%
    \begin{subfigure}{0.06\linewidth}%
        \includegraphics[height=\imageheight]{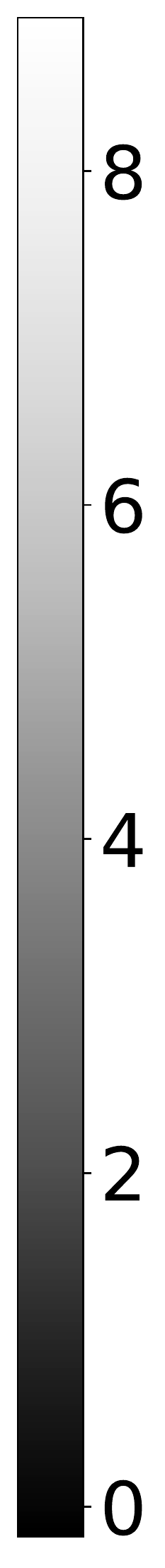}%
        \subcaption*{}
    \end{subfigure}%
    \\%
    \hfill%
    \begin{subfigure}{0.31\linewidth}%
        \includegraphics[width=\linewidth]{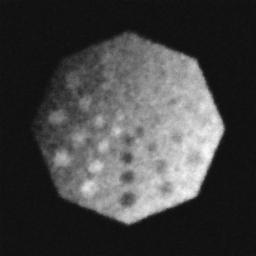}
        \subcaption*{Fixed $\gamma$, 10 inner iterations\\%
        MMSE estimate: PSNR=23.45}%
    \end{subfigure}%
    \hfill%
    \begin{subfigure}{0.31\linewidth}%
        \includegraphics[width=\linewidth]{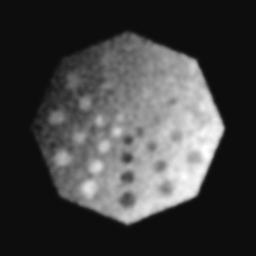}
        \subcaption*{Backtracking, 1 inner it.\\%
        MMSE estimate: PSNR=25.55}%
    \end{subfigure}%
    \hfill%
    \begin{subfigure}{0.31\linewidth}%
        \includegraphics[width=\linewidth]{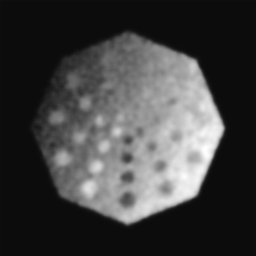}%
        \subcaption*{Backtracking, 10 inner its.\\%
        MMSE estimate: PSNR=25.65}%
    \end{subfigure}%
    \hfill%
    \begin{subfigure}{0.06\linewidth}%
        \includegraphics[height=\imageheight]{results/deblur_poisson/phantom/phantom256_colorbar.pdf}%
        \subcaption*{\ \\ \ }
    \end{subfigure}%
    \hfill%
    \null%
    \caption{MMSE estimates produced by three different configurations of \cref{algo:iPGLA} on the Poisson data deblurring experiment.}
    \label{fig:deblur_poisson_TV}
\end{figure}

\begin{figure}[t]
    \centering
    \setlength{\imageheight}{0.2\linewidth}
    \hfill%
    \begin{subfigure}{0.20\linewidth}%
        \includegraphics[height=\imageheight]{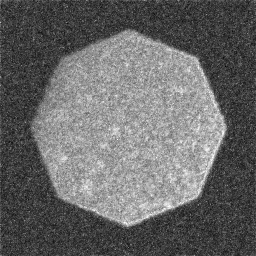}
    \end{subfigure}%
    \hfill%
    \begin{subfigure}{0.04\linewidth}%
        \includegraphics[height=\imageheight]{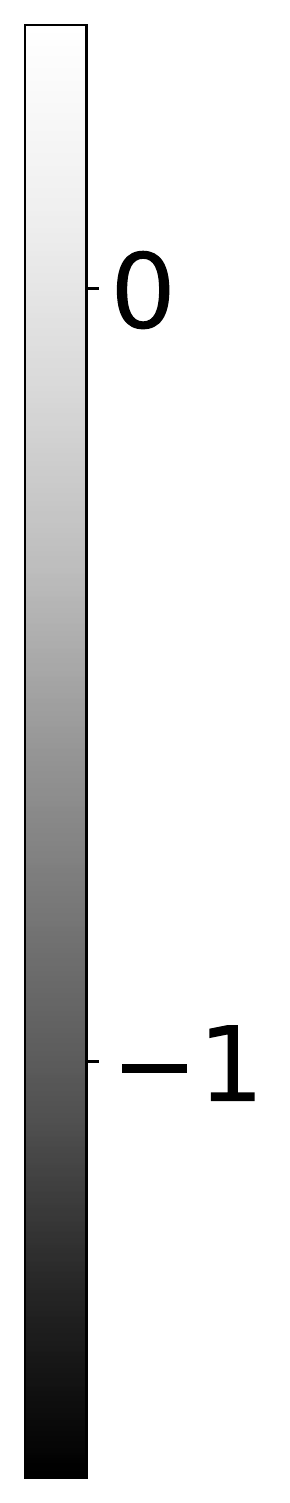}
    \end{subfigure}%
    \hfill%
    \begin{subfigure}{0.20\linewidth}%
        \includegraphics[height=\imageheight]{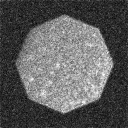}
    \end{subfigure}%
    \hfill%
    \begin{subfigure}{0.04\linewidth}%
        \includegraphics[height=\imageheight]{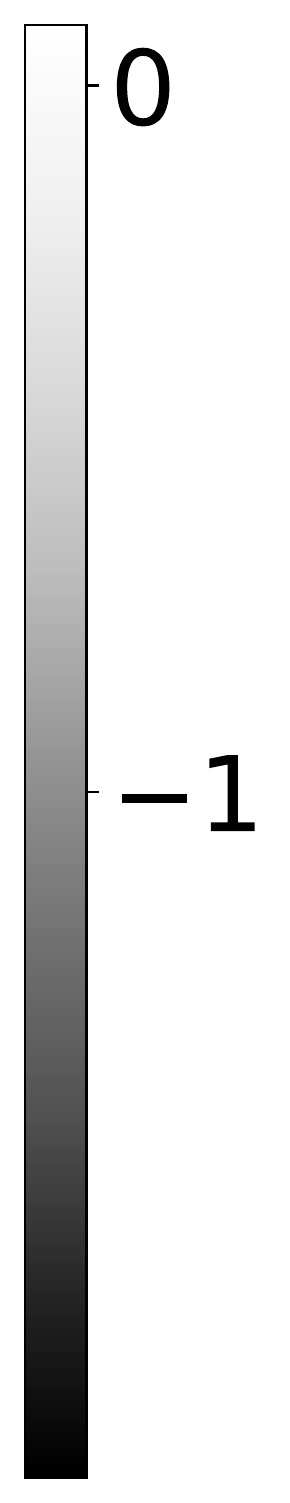}
    \end{subfigure}%
    \hfill%
    \begin{subfigure}{0.20\linewidth}%
        \includegraphics[height=\imageheight]{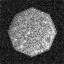}
    \end{subfigure}%
    \hfill%
    \begin{subfigure}{0.04\linewidth}%
        \includegraphics[height=\imageheight]{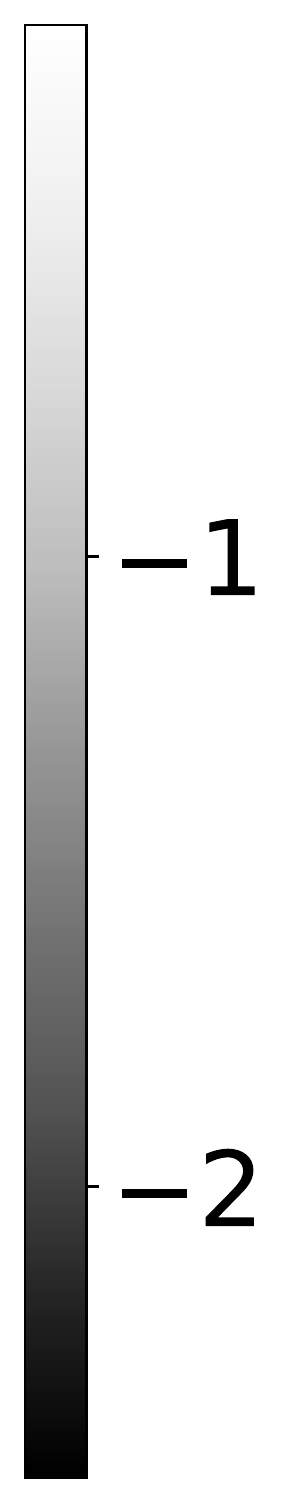}
    \end{subfigure}%
    \hfill%
    \begin{subfigure}{0.20\linewidth}%
        \includegraphics[height=\imageheight]{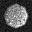}
    \end{subfigure}%
    \hfill%
    \begin{subfigure}{0.04\linewidth}%
        \includegraphics[height=\imageheight]{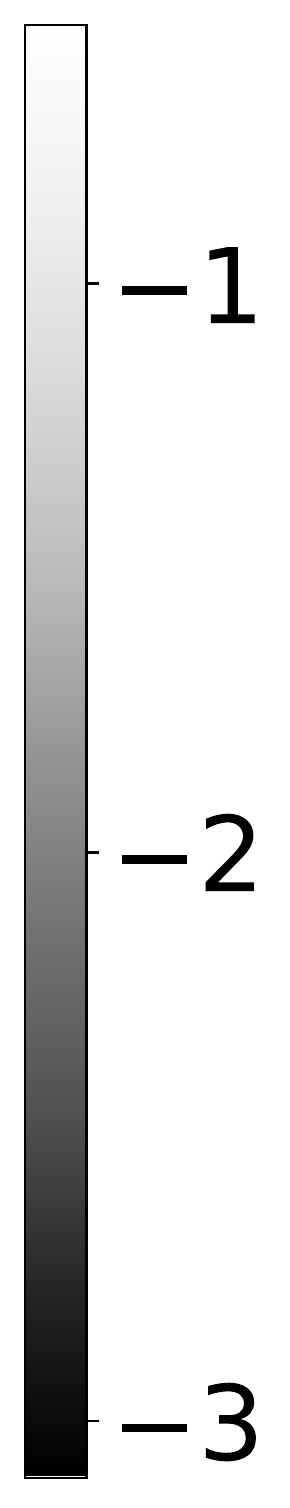}
    \end{subfigure}%
    \hfill%
    \\%
    \hfill%
    \begin{subfigure}{0.20\linewidth}%
        \includegraphics[height=\imageheight]{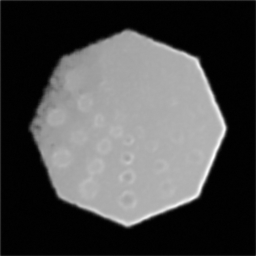}
    \end{subfigure}%
    \hfill%
    \begin{subfigure}{0.04\linewidth}%
        \includegraphics[height=\imageheight]{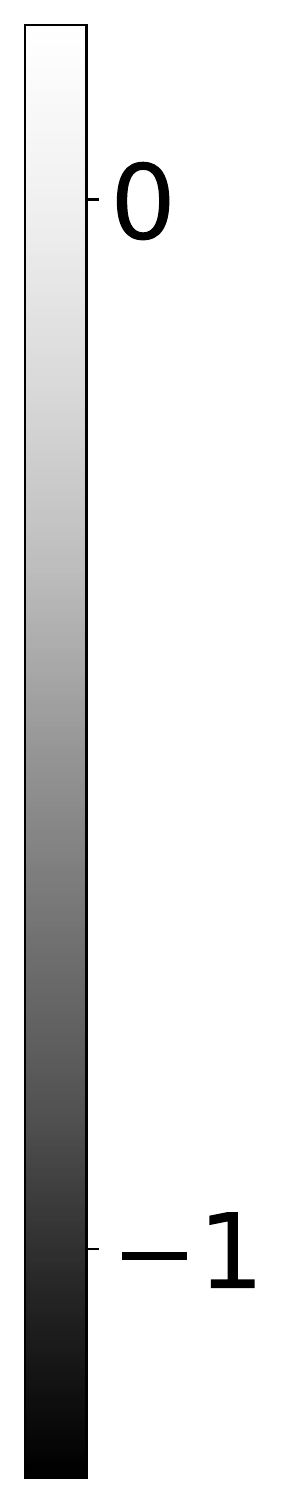}
    \end{subfigure}%
    \hfill%
    \begin{subfigure}{0.20\linewidth}%
        \includegraphics[height=\imageheight]{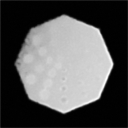}
    \end{subfigure}%
    \hfill%
    \begin{subfigure}{0.04\linewidth}%
        \includegraphics[height=\imageheight]{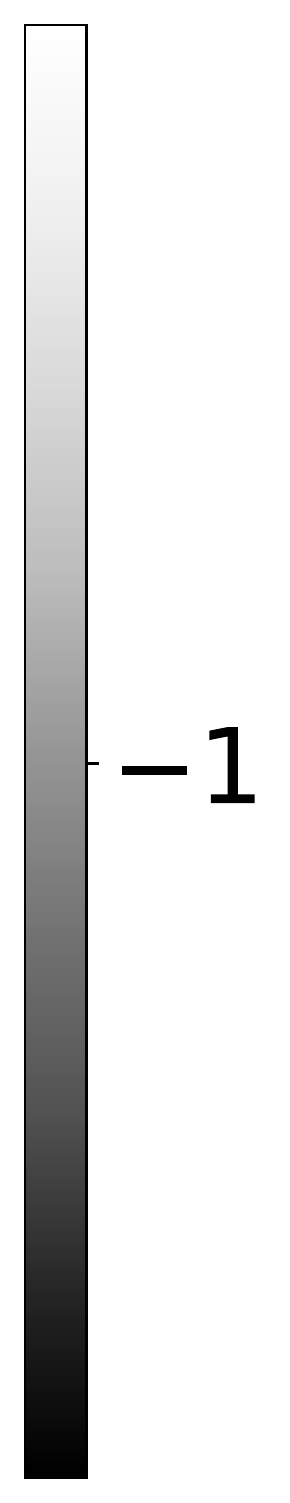}
    \end{subfigure}%
    \hfill%
    \begin{subfigure}{0.20\linewidth}%
        \includegraphics[height=\imageheight]{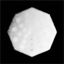}
    \end{subfigure}%
    \hfill%
    \begin{subfigure}{0.04\linewidth}%
        \includegraphics[height=\imageheight]{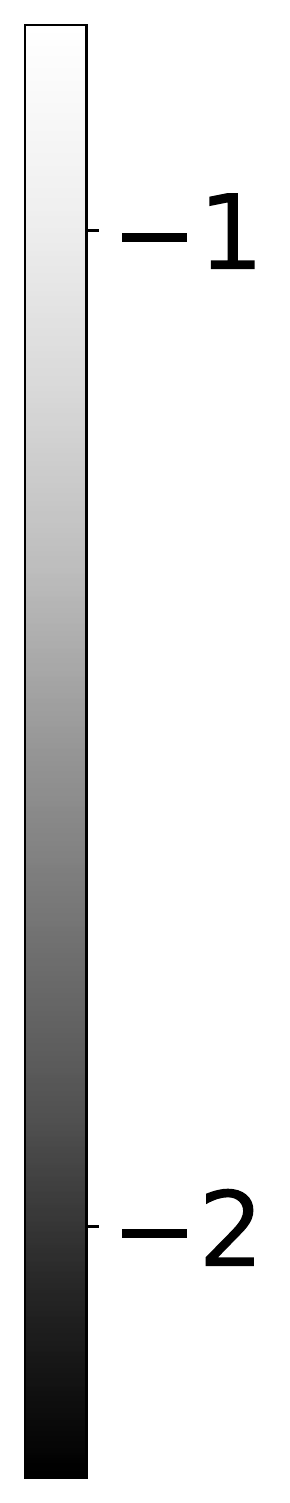}
    \end{subfigure}%
    \hfill%
    \begin{subfigure}{0.20\linewidth}%
        \includegraphics[height=\imageheight]{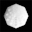}
    \end{subfigure}%
    \hfill%
    \begin{subfigure}{0.04\linewidth}%
        \includegraphics[height=\imageheight]{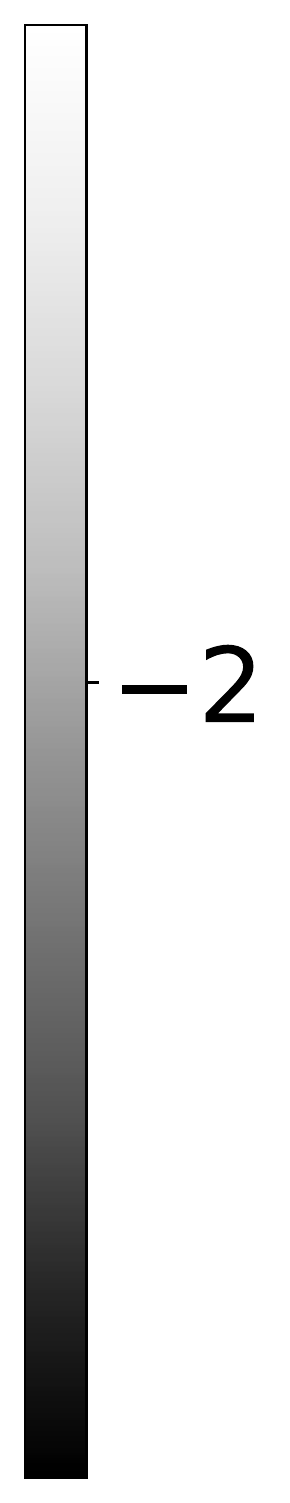}
    \end{subfigure}%
    \hfill%
    \\%
    \hfill%
    \begin{subfigure}{0.20\linewidth}%
        \includegraphics[height=\imageheight]{results/deblur_poisson/phantom/logstd_scale0_phantom256_2e+00miv_1proxiters_1e+00regparam_btsteps_1e+06samples.png}
    \end{subfigure}%
    \hfill%
    \begin{subfigure}{0.04\linewidth}%
        \includegraphics[height=\imageheight]{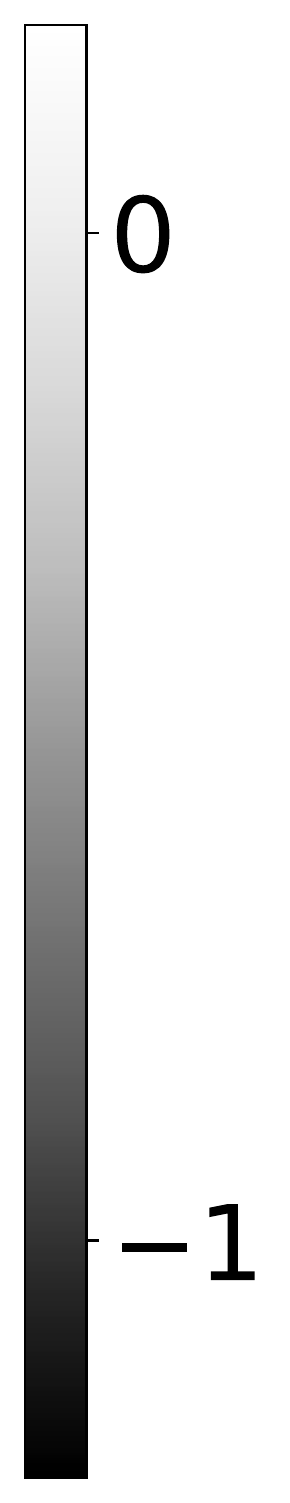}
    \end{subfigure}%
    \hfill%
    \begin{subfigure}{0.20\linewidth}%
        \includegraphics[height=\imageheight]{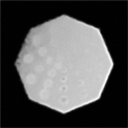}
    \end{subfigure}%
    \hfill%
    \begin{subfigure}{0.04\linewidth}%
        \includegraphics[height=\imageheight]{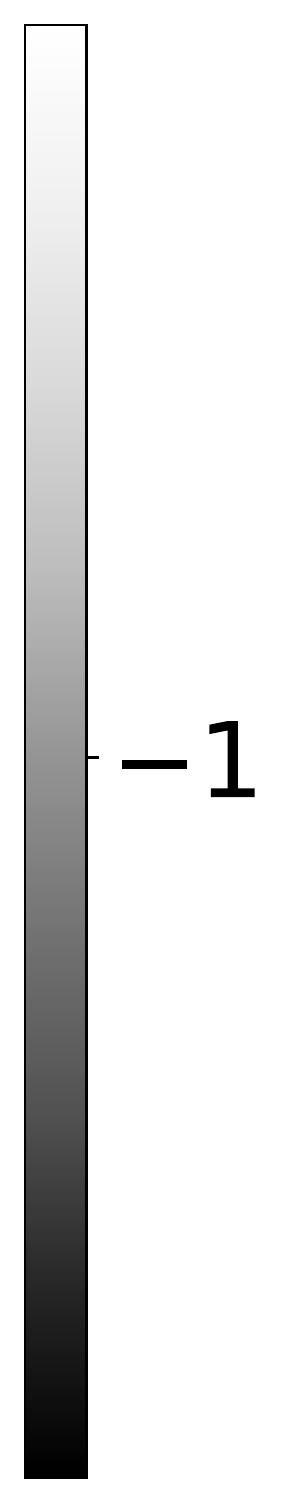}
    \end{subfigure}%
    \hfill%
    \begin{subfigure}{0.20\linewidth}%
        \includegraphics[height=\imageheight]{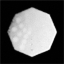}
    \end{subfigure}%
    \hfill%
    \begin{subfigure}{0.04\linewidth}%
        \includegraphics[height=\imageheight]{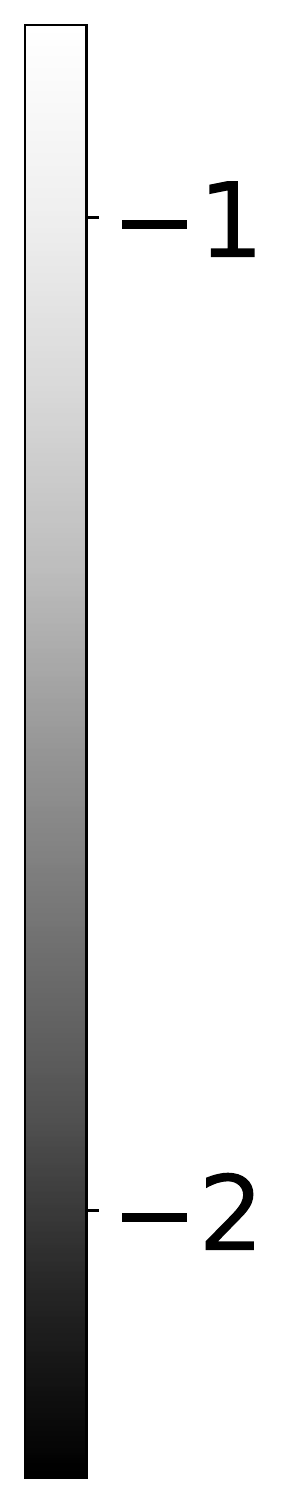}
    \end{subfigure}%
    \hfill%
    \begin{subfigure}{0.20\linewidth}%
        \includegraphics[height=\imageheight]{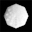}
    \end{subfigure}%
    \hfill%
    \begin{subfigure}{0.04\linewidth}%
        \includegraphics[height=\imageheight]{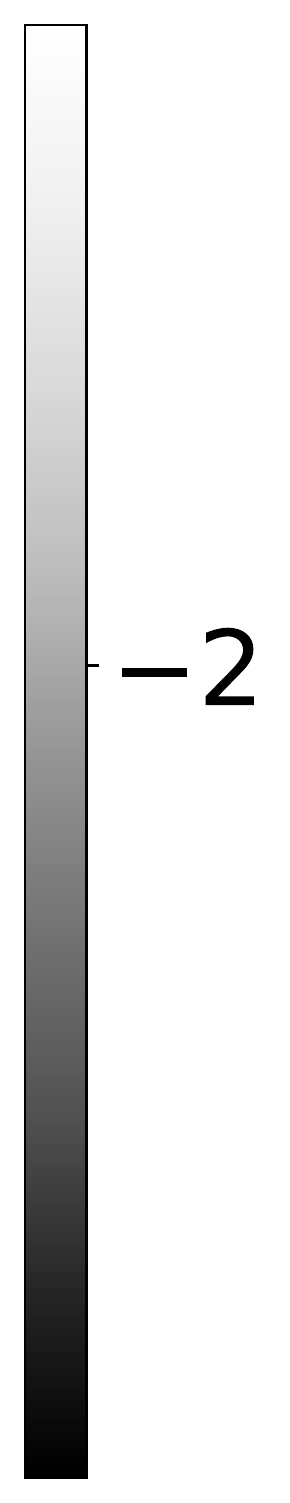}
    \end{subfigure}%
    \hfill%
    \caption{Reconstructed second moment of Poisson deblurring posterior. From left to right, we display pixel-wise standard deviation (logarithmic color scale) and standard deviation of the sample images downsampled by factors 2, 4 and 8. Top row is computed with fixed step $1/\tilde L$ and 10 inner iterations, middle row with backtracking step sizes and 1 inner iteration, bottom row with backtracking step sizes and 10 inner iterations.}
    \label{fig:deblur_poisson_TV_second moments}
\end{figure}

\begin{figure}
    \centering
    \includegraphics[width=\linewidth]{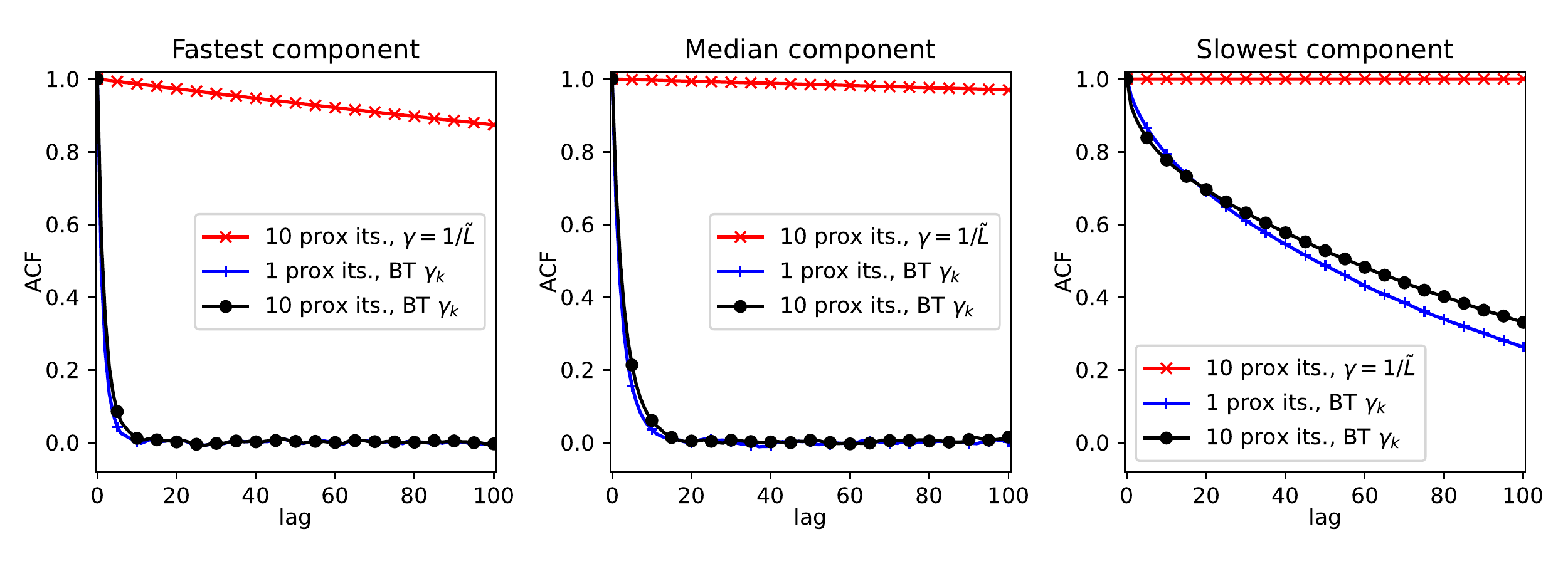}
    \caption{Autocorrelation functions of the slowest, median and fastest component of the Markov chain in the deblurring from Poisson data experiment described in \cref{subsec:tv_deblurring_poisson}.}
    \label{fig:acfs_poisson_deblurring}
\end{figure}
\section{Conclusion}
In this work we proposed a new framework of proximal Langevin sampling for inexact proximal operators. We generalized existing non-asymptotic and asymptotic convergence results on the exact proximal gradient Langevin algorithm to the inexact case. The additional bias between the stationary distribution and the target due to errors is quantified for types of errors that can be ensured efficiently in numerical settings. Our numerical experiments reflect the theoretical results and show how the error in the proximal points can be traded for speed of the algorithm. In particular, in regimes where the step size dominates the algorithm's bias, it can be useful to stop the inner iteration approximating the proximal map after very few iterations giving efficient sampling schemes in high-dimensional problems.

\section*{Acknowledgements}
MJE acknowledges support from EPSRC (EP/S026045/1, EP/T026693/1, EP/V026259/1) and the Leverhulme Trust (ECF-2019-478). LK acknowledges support from the German Federal Ministry of Education and Research BMBF (15S59431 B). LK and CBS acknowledge support from the European Union Horizon 2020 research and innovation programme under the Marie Sk{\l}odowska-Curie grant agreement No. 777826 NoMADS. CBS further acknowledges support from the Philip Leverhulme Prize, the Royal Society Wolfson Fellowship, the EPSRC advanced career fellowship EP/V029428/1, EPSRC grants EP/S026045/1 and EP/T003553/1, EP/N014588/1, EP/T017961/1, the Wellcome Innovator Awards 215733/Z/19/Z and 221633/Z/20/Z, the Cantab Capital Institute for the Mathematics of Information and the Alan Turing Institute.

\printbibliography
\end{document}